\def\article@logo{}
\def\@setcopyright{}
\def\@adminfootnotes{%
  \let\@makefnmark\relax
  \let\@thefnmark\relax
  \ifx\@empty\thankses\else \@footnotetext{%
    \def\par{\let\par\@par}\@setthanks}%
  \fi
}
\newtheorem{thm}{Theorem}[section] 
\newtheorem{lem}[thm]{Lemma}     
\newtheorem{cor}[thm]{Corollary}
\newtheorem{prop}[thm]{Proposition}
\theoremstyle{definition}
\newtheorem{defn}[thm]{Definition}
\theoremstyle{remark}
\newtheorem{rem}[thm]{Remark}
\newtheorem{exa}[thm]{Example}
\numberwithin{equation}{subsection}
\newcommand{\eps}{\varepsilon}
\newcommand{\DE}{\Delta}
\newcommand{\de}{\delta}
\newcommand{\To}{\longrightarrow}
\newcommand{\U}{\mathcal{U}}
\newcommand{\V}{\mathcal{V}}
\newcommand{\CH}{\mathcal{H}}
\newcommand{\R}{\mathbb{R}}
\newcommand{\QQ}{\mathcal{Q}}
\newcommand{\Z}{\mathbb{Z}}
\newcommand{\N}{\mathbb{N}}
\newcommand{\FM}{\mathfrak{M}}
\newcommand{\m}{_{\! _{f\! B}} }
\newcommand{\hh}{_{\! _{f\! H}} } 
\newcommand{\hau}  {{_{\! H}} }
\newcommand{\boxx}{{_{\! B}} }
\begin{document}
\title[A Hausdorff dimension for finite sets]
 {A Hausdorff Dimension for Finite Sets}


\author{Juan M. Alonso}

\address{
   Juan M. Alonso\\
   Dpto. de Matem\'atica, Universidad Nacional de San Luis\\and UN de Cuyo\\ 
   Argentina}
\email{jmalonso@unsl.edu.ar}








\thanks{2010 \emph{Mathematics Subject Classification}. 68R99 (primary), 28A78, 68P10 (secondary). 
\newline 
\indent \emph{Keywords and phrases.} Finite metric spaces,  finite Hausdorff dimension, finite Box-counting dimension, Hausdorff metric, concentration of distance, nearest point search. 
\newline
\indent\emph{Acknowledgements.} The author would like thank the support of Secretar\'{i}a de Ciencia, T\'{e}cnica y Posgrado (SeCTyP) at UN Cuyo.}

\keywords{finite Hausdorff dimension, finite Box-counting dimension, finite metric spaces, Hausdorff metric, concentration of distance, nearest point search.}


\begin{abstract}
The classical Hausdorff dimension, denoted $\mbox{dim}_\hau $, of finite or countable sets is zero. We define an analog for finite sets, called \emph{finite Hausdorff dimension} and denoted $ \mbox{dim}\hh$,  which is non-trivial. It turns out that a finite bound for $\mbox{dim} \hh(F)$ guarantees that every point of $F$ has "nearby" neighbors. This property is important for many computer algorithms of great practical value, that obtain solutions by finding nearest neighbors. We also define $\mbox{dim} \m$, an analog for finite sets of the classical box-counting dimension, and compute examples. The main result of the paper is a Convergence Theorem. It gives conditions under which, if $F_{n}\rightarrow X$ (convergence of compact subsets of $\R^n$ under the Hausdorff metric), then $\mbox{dim} \hh(F_{n})\rightarrow \mbox{dim}_\hau (X)$.
\end{abstract}

\maketitle


\section{Introduction} 
\label{intro}
The initial motivation for this work was \emph{concentration of distance}. This is a particular instance of the \emph{curse of dimensionality}, a term coined by Richard Bellman in~\cite{Bellman}, to refer to various phenomena that arise in high-dimensional vector spaces. When searching for nearest neighbors, concentration of distance usually refers to the following: as the dimensionality of the data increases,
the longest and shortest distance between points tend to become so close that the distinction between "near" and "far" becomes meaningless. The lack of a clear contrast between distances to a query point compromises the quality of the search. The problem is a long standing one in database research \cite{Aggarwal,Beyer,Pramanik}. Awareness of this threat is spreading to other domains; in particular, major concerns have been raised in Cancer Research~\cite{Clarke}. This has prompted quite a bit of research aimed at better understanding both the problem and its implications~\cite{Durrant,Francois,Gianella,Hinneburg,Hsu,Kaban1,Kaban2,Pramanik,Radovanovich}.


In the papers cited above, concentration of distance is studied probabilistically. Data (a finite metric space embedded in a high-dimensional vector space) typically has some sort of structure that varies quite a lot depending on the data source (the different domains of application). We thought it worthwhile to try to understand this structure, however subtle it might be, in a more geometric vein. The ideas inherent in the study of fractals and fractal dimension seemed particularly appealing to us. In this regard~\cite{holst} was inspiring: the authors study "the form" of Word Space (one of the finite metric spaces mentioned above) using Statistics and a fractal dimension defined by considering "the underlying space as a continuum and randomly making a finite number of observations from which one tries to obtain a maximum likelihood approximation of the underlying dimension"\cite{holst2}. This is the usual way, when estimating dimension, of coping with finiteness~\cite{clarkson}.

In a radical departure from the classical theory, we decided to start directly with finite metric spaces. The problem, of course, is that the Hausdorff dimension of finite sets is zero.  In this paper we define \emph{finite Hausdorff dimension}, a non-trivial analog for finite spaces of the Hausdorff dimension. For the classical theory of Hausdorff and other, fractal, dimensions, see \cite{Falconer}, and the bibliography therein.

Throughout the paper we use $X,Y$, etc., to denote arbitrary metric spaces, and reserve $F,F'$, etc., to denote finite ones. Here is a summary of the contents of the paper. We begin Section~\ref{s:2} by recalling the definition of the classical Hausdorff measure and dimension. We then introduce 2-coverings. This key modification of the classical notion is responsible for making $\mbox{dim} \hh(F)$ non-zero on most finite $F$ (in fact, $\mbox{dim} \hh(F)=0$ if and only if $F$ has a single point). Two basic notions for this work, \emph{covering diameter} and \emph{focal points}, are also defined. The section ends with a brief discussion of how the results of the paper apply to the motivating problem: concentration of distance and the search for nearest neighbors.

Section~\ref{s:3} deals with the definition of finite Hausdorff dimension. Following Hausdorff's steps, we introduce $H^s(F)$, an analog of Hausdorff's outer measure $\CH^s(X)$. In contrast to the classical case, $H^s(F)$ is \emph{not} a measure. In Section~\ref{s:Holder} we study the behaviour of $H^s(F)$ under H\"{o}lder equivalences. The definition of $\mbox{dim} \hh(F)$ proper is given in Section~\ref{ss:DIMfH}. In the classical case, $\CH^s(X)$ has a "natural" break-point $s_0:=\mbox{dim}_\hau (X)$, with the property that $\CH^s(X)=0$, for all $s>s_0$, and $\CH^s(X)=\infty$, for all $s<s_0$. There is no such break-point in the finite case, hence the need to "manufacture" one. For this purpose, we consider the equation:
\begin{equation}\label{e:defDIMfH}
H^s(F)=\DE(F)^s,
\end{equation}
and solve for $s$, where $\DE(F)$ denotes the diameter of $F$. It turns out that (\ref{e:defDIMfH}) has a unique solution $s_0$ if and only if $F$ has no focal points. The solution is a positive real number, and we set $s_0:=\mbox{dim} \hh(F)$ (cf. Theorem~\ref{t:DIMf}).

In Section~\ref{s:4} we introduce, following the same pattern, \emph{finite box-counting dimension}, $\mbox{dim} \m(F)$. As in the classical case, there is an explicit formula to compute $\mbox{dim} \m(F)$.

In Section~\ref{s:5} we show that $\mbox{dim} \hh(F)\leq \mbox{dim} \m(F)$, just as in the classical case. We define \emph{locally uniform spaces}, and show this is a class of spaces where finite Hausdorff and finite box-counting dimensions coincide. Both finite dimensions are easier to compute for these spaces, and we even have an explicit formula.

Several examples are computed in Section~\ref{s:6}, together with results of a more general nature. For instance, we show that every non-negative extended real number is the finite Hausdorff [resp. finite box-counting] dimension of some finite space.

In Section~\ref{s:7} we prove the Convergence Theorems, the main results of the paper. Any compact space $X\subseteq \R^n$ can be approximated, in the Hausdorff metric, by a sequence $\{F_k\}$ of locally uniform spaces (Proposition~\ref{t:FapproxR^n}). In Section~\ref{ss:convThms} we prove, under some extra conditions on $X$ (cf. Theorem~\ref{t:ConvNet}), that
\[ \lim_{k\rightarrow \infty} \mbox{dim} \m(F_k)=\mbox{dim}_\boxx (X),\]
and when, moreover, $X$ is the attractor of an $IFS$ (cf. Theorem~\ref{t:ConvNetIFS}), that:
\[\lim_{k\rightarrow \infty} \mbox{dim} \hh(F_k)=\mbox{dim} _\hau (X).\]


\section{2-coverings and focal points.}\label{s:2}
For the benefit of the reader, we start by reviewing the classical definitions of Hausdorff measure and dimension (see~\cite{Falconer} for details). 

All subsets of $\R^n$ are metric spaces with the distance $d$ induced from $\R^n$. Recall that the \emph{diameter} of a non-empty subset $U$ of $\R^n$ is defined as $\DE(U):=\sup \{d(x,y)|x,y\in U\}$. Let  $\U=\{ U_i\}_{i=1}^\infty$ be a countable family of non-empty subsets of $\R^n$. We say that $\U$ has diameter at most $\de$, denoted $\DE(\U)\leq \de$, if $\DE(U_i)\leq \de$ for all $i$. The family $\U$ is called a covering of a subset $X$ of $\R^n$, if $X\subseteq \cup_{i=1}^\infty U_i$. For a covering $\U$ of $X$, and a number $s\geq 0$, we use the notation $\CH_\U^s(X):=\sum_{i=1}^\infty \DE(U_i)^s$.

Given a subset $X\subseteq \R^n$ and numbers $s\geq 0$ and $\de >0$, we define
\[
\CH _\de^s(X):=\inf\Big\{\CH_\U^s(X)\big|\U \textrm{ is a cover of }X,\textrm{ and }\DE(\U)\leq\de\Big\}.
\]
For fixed $s$, $\CH _\de^s(X)$ is clearly an increasing function of $\de$, hence the limit as $\de\to 0$ exists, and we define:
\[
\CH^s(X):=\lim_{\de\to 0} \CH _\de^s(X)=\sup_{\de>0} \CH _\de^s(X).
\]
Note that $\CH ^s(X)$ is defined for any subset $X$ of $\R^n$, and it is an extended number in $[0,\infty]$; it is called the \emph{s-dimensional Hausdorff (outer) measure} of $X$.

It turns out that there exists a critical value $s_0$ where $\CH ^s(X)$ jumps from $\infty$ to $0$. More precisely, for all $s>s_0$, $\CH ^s(X)=0$, and for all $s<s_0$, $\CH ^s(X)=\infty$. The \emph{Hausdorff dimension} of $X$ is defined to be this critical value:
\[
\dim_\hau(X):=s_0,
\]
and we have
\[
\CH ^s(X)=\left\{
\begin{array}{l l}
\infty &\quad\text{if $0\leq s<\dim_\hau(X)$}\\
0 &\quad \text{if $s>\dim_\hau(X)$.}
\end{array} \right.
\]

\subsection{Finite metric spaces}

Let $F$ denote a finite metric space. Unless explicit mention to the contrary, $F$ will be assumed to contain \emph{at least two elements}. We usually assume $F$ is contained in some metric space from which it inherits its metric. Although the finite dimensions are strongly dependent on the metric (cf. Example~\ref{e:metric}), we sometimes refer to $F$ as a set. The \emph{separation} of $F$, i.e. the minimum distance between different points of $F$, will be denoted $\de(F)$. Note that $0<\de(F)\leq \DE(F)<\infty$. We let $|F|$ denote the number of elements in $F$. The next definition is basic for this work.

\begin{defn}
A {\em $2$-covering} of $F$ is a family $\U=\{ U_i|i\in I\}$ of subsets of $F$ satisfying
\begin{enumerate}
 \item{$F=\bigcup_{i\in I} U_i$}
 \item{$|U_i|\geq 2, \forall i\in I$}
\end{enumerate}

\end{defn}

\begin{rem} In condition (ii) we depart from the classical definition. It is thanks to this condition that non-trivial dimensions can be assigned to finite spaces. Note that (ii) is equivalent to $U_i$ having \emph{positive} diameter. Finally, note that $I$ is finite, since $\U\subseteq \mathcal{P}(F)$, the power set of $F$.
\end{rem}

We denote the set of all $2$-coverings of $F$ by $K(F)$. There is exactly one $2$-covering which consists of one element, denoted $\U_0=\{F\}$. Notice that $\U_0\in K(F)$ because $|F|\geq 2$.

\begin{defn}
Let $\U\in K(F)$. The \emph{diameter} of $\U$, denoted $\Delta(\U)$, is defined by:
\[\DE(\U):=\textrm{ max}\{\DE(U_i)|U_i\in \U\}.\]
\end{defn}

\begin{defn}
Let $\U\in K(F)$. The \emph{covering diameter} of $F$, denoted $\nabla(F)$, is defined by:
\[\nabla(F):=\textrm{ min}\{\DE(\U)|\U\in K(F)\}.\]
\end{defn}

\begin{rem}
Note that $0<\de(F)\leq\nabla(F)\leq \DE(F)$.
\end{rem}

Given $\de>0$, we let $K_\de(F)$ denote the set of 2-coverings of $F$ with diameter $\le\de$:
\[K_\de(F):=\{\U\in K(F)|\DE(\U)\leq \de\}.\]

\begin{lem}\label{l:nabla}
Suppose $F$ is a finite set, and $\de>0$. The following conditions are equivalent:
\begin{enumerate}
\item{$K_\de(F)\neq \emptyset$,}
\item{$\de\geq \nabla(F).$}
\end{enumerate}
\end{lem}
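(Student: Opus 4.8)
The plan is to treat this as a direct consequence of the fact that the minimum defining $\nabla(F)$ is genuinely attained. First I would record the standing observation, already made in the remark following the definition of a $2$-covering, that $K(F)$ is a \emph{finite} non-empty set: every $\U\in K(F)$ is a family of subsets of $F$, hence $\U\in\mathcal{P}(\mathcal{P}(F))$, which is finite, while $\U_0=\{F\}$ witnesses that $K(F)\neq\emptyset$. Consequently the set of reals $\{\DE(\U)\mid \U\in K(F)\}$ is finite and non-empty, so its minimum $\nabla(F)$ is actually \emph{achieved}: there is a $2$-covering $\U^\ast\in K(F)$ with $\DE(\U^\ast)=\nabla(F)$. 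This attainment is the only point requiring any care.

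For the implication (ii)$\Rightarrow$(i), suppose $\de\geq\nabla(F)$. The minimizing covering $\U^\ast$ then satisfies $\DE(\U^\ast)=\nabla(F)\leq\de$, so by the very definition of $K_\de(F)$ we have $\U^\ast\in K_\de(F)$, whence $K_\de(F)\neq\emptyset$.

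For the converse (i)$\Rightarrow$(ii), suppose $K_\de(F)\neq\emptyset$ and pick any $\U\in K_\de(F)$. By definition $\U\in K(F)$ and $\DE(\U)\leq\de$. Since $\nabla(F)$ is the minimum of the diameters taken over all of $K(F)$, we get $\nabla(F)\leq\DE(\U)\leq\de$, which is exactly (ii).

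I expect no genuine obstacle here; the statement is essentially the observation that the extremum in the definition of $\nabla(F)$ is a \emph{minimum} rather than an infimum. The one thing worth flagging explicitly is that the finiteness of $K(F)$ is precisely what guarantees the minimum is attained: without it, the backward direction would only yield a covering of diameter arbitrarily close to $\nabla(F)$ (an infimum version), and the clean characterization by "$\de\geq\nabla(F)$" — rather than "$\de>\nabla(F)$" — could fail at the boundary value $\de=\nabla(F)$.
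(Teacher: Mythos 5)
Your proof is correct, and it is exactly the argument the paper has in mind: the paper dismisses this lemma with the single word ``Obvious,'' and what you have done is spell out that obvious argument, correctly noting that the finiteness of $K(F)$ (together with $\U_0=\{F\}\in K(F)$) makes the minimum defining $\nabla(F)$ attained, which is what makes the boundary case $\de=\nabla(F)$ work. Nothing further is needed.
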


\begin{proof}
Obvious.
\end{proof}

\begin{cor}
$\nabla(F)=\emph{ min}\{\de|K_\de(F)\neq \emptyset\}.$
\end{cor}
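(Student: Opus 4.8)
The plan is to read off the result directly from Lemma~\ref{l:nabla}, which identifies the two conditions $K_\de(F)\neq\emptyset$ and $\de\geq\nabla(F)$. First I would observe that this equivalence describes the set $S:=\{\de>0\mid K_\de(F)\neq\emptyset\}$ completely: a positive $\de$ lies in $S$ if and only if $\de\geq\nabla(F)$, so $S=[\nabla(F),\infty)$ (recall $\nabla(F)>0$ by the remark following its definition). The claimed formula then amounts to saying that $\min S=\nabla(F)$.

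The one point that needs a word of care is that the minimum is actually attained, rather than being merely an infimum. Here I would invoke the very definition $\nabla(F)=\min\{\DE(\U)\mid\U\in K(F)\}$: since $K(F)\subseteq\mathcal{P}(F)$ is finite, this minimum is realized by some $\U^\ast\in K(F)$ with $\DE(\U^\ast)=\nabla(F)$. That $\U^\ast$ satisfies $\DE(\U^\ast)\leq\nabla(F)$, hence $\U^\ast\in K_{\nabla(F)}(F)$ and $K_{\nabla(F)}(F)\neq\emptyset$; equivalently, one may simply note that $\de=\nabla(F)$ meets condition (ii) of the lemma. Thus $\nabla(F)\in S$, and since every element of $S$ is at least $\nabla(F)$, we conclude $\min S=\nabla(F)$, which is the assertion.

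There is no real obstacle here: the content is entirely carried by Lemma~\ref{l:nabla}, and the corollary is just a restatement of that equivalence together with the elementary remark that the extremal $2$-covering exists because $K(F)$ is finite. The only thing to guard against is conflating $\inf$ with $\min$, which the finiteness of $K(F)$ resolves immediately.
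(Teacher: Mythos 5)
Your proof is correct and follows exactly the paper's (implicit) route: the corollary is an immediate restatement of Lemma~\ref{l:nabla}, with the minimum attained because $\de=\nabla(F)$ itself satisfies condition (ii) of that lemma. Your extra care about $\min$ versus $\inf$, justified by the finiteness of $K(F)$, is a sound and welcome clarification but does not change the argument.
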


\begin{defn}
Let $\nu_F:F\rightarrow \R$, denote the function that gives the distance of a point to a nearest neighbor (in $F$). It is defined by
\[\nu_F(a):=\textrm{ min}\{d(a,x)|a,x\in F,\,\,x\neq a\}.\]
\end{defn}

\begin{lem}\label{l:2-cov}
Given a finite space $F$, suppose $r>0$ satisfies the condition that $\nu_F(a)\leq r$, for all $a\in F$. Then $K_r(F)$ is not empty.
\end{lem}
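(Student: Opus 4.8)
The plan is to produce an explicit $2$-covering of diameter at most $r$, built out of nearest-neighbor pairs. For each point $a\in F$, the quantity $\nu_F(a)=\min\{d(a,x)\mid x\in F,\,x\neq a\}$ is a minimum taken over the finite (nonempty, since $|F|\geq 2$) set $F\setminus\{a\}$, so it is attained: there exists a point $x_a\in F$ with $x_a\neq a$ and $d(a,x_a)=\nu_F(a)$. By hypothesis $\nu_F(a)\leq r$, so $d(a,x_a)\leq r$. I would then set $U_a:=\{a,x_a\}$ and consider the family $\U:=\{U_a\mid a\in F\}$.

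Next I would check that $\U$ is a genuine $2$-covering, i.e. that $\U\in K(F)$. Condition (ii) holds because $x_a\neq a$ forces $|U_a|=2\geq 2$. Condition (i) holds because each $a$ lies in its own set $U_a$, whence $F=\bigcup_{a\in F}U_a$. Thus $\U\in K(F)$. (Incidentally $\U$ is automatically finite, since it is a subfamily of the power set of $F$, consistent with the Remark following the definition of $2$-covering.)

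Finally I would estimate the diameter of $\U$. Since $U_a$ consists of exactly two points, $\DE(U_a)=d(a,x_a)=\nu_F(a)\leq r$ for every $a\in F$. Taking the maximum over the finitely many $a$ gives $\DE(\U)=\max\{\DE(U_a)\mid a\in F\}\leq r$, so $\U\in K_r(F)$ and therefore $K_r(F)\neq\emptyset$, as claimed.

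There is essentially no hard step here: the argument is a direct construction. The only points requiring a word of care are that the minimum defining $\nu_F(a)$ is actually realized by some $x_a\in F$ (which is immediate from finiteness of $F$) and that the resulting pairs genuinely cover $F$ (which is immediate since $a\in U_a$). No appeal to $\nabla(F)$ or to Lemma~\ref{l:nabla} is needed, though the conclusion is of course consistent with them.
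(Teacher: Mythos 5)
Your proof is correct and follows exactly the paper's own argument: pairing each point $a$ with a nearest point $x_a$, forming the sets $U_a=\{a,x_a\}$, and noting the resulting family is a $2$-covering of diameter at most $r$. The extra details you supply (attainment of the minimum by finiteness, and the covering/cardinality checks) are just the verifications the paper leaves implicit.
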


\begin{proof} Given $a\in F$, choose $x_a\in F$ such that $d(a,x_a)=\nu_F(a)$, and define $U_a:=\{a,x_a\}$. It follows that $\U:=\{U_a | a\in F\}$ is a 2-covering, and $\DE(U_a)\leq r$. In other words, $K_r(F)\neq\emptyset$, as desired.
\end{proof}

\begin{rem} \label{r:n-1elements}
We note, for further reference, that the 2-covering constructed in~\ref{l:2-cov} has $|F|$ elements. The reader can easily modify the construction and show that $F$ has a 2-covering with at most $|F|-1$ elements. In general this number is optimal: for example, $F:=\{(0,0),(1,0),(0,1),(-1,0),(0,-1)\}\subseteq \R^2$, cannot be 2-covered with less than 4 sets.
\end{rem}

\begin{prop}\label{p:delta,nabla} Let $F$ and $\nu_F$ be as above. Then:
\begin{enumerate}
\item{$\emph{ min}\,\{\nu_F(a)|a\in F\}=\,\de(F)$.}
\item{$\emph{ max}\{\nu_F(a)|a\in F\}=\nabla(F).$}
\end{enumerate}
\end{prop}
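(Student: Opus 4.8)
The plan is to handle (i) by a direct manipulation of nested minima, and to prove (ii) by squeezing $\nabla(F)$ between two inequalities, drawing on Lemmas~\ref{l:nabla} and~\ref{l:2-cov}.

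For part (i) I would simply unwind the definitions. Since $\nu_F(a)=\min\{d(a,x)\mid x\in F,\ x\neq a\}$, a nested minimum over $a$ and then over $x\neq a$ coincides with a single minimum over all pairs of distinct points:
\[
\min_{a\in F}\nu_F(a)=\min_{a\in F}\,\min_{x\neq a}d(a,x)=\min\{d(a,x)\mid a,x\in F,\ a\neq x\}=\de(F),
\]
the last equality being the definition of the separation $\de(F)$.

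For part (ii) set $m:=\max_{a\in F}\nu_F(a)$; I will show $\nabla(F)\leq m$ and $\nabla(F)\geq m$. For the first inequality, note that $\nu_F(a)\leq m$ for every $a\in F$ and that $m\geq\de(F)>0$, so Lemma~\ref{l:2-cov} applies with $r=m$ and yields $K_m(F)\neq\emptyset$. By Lemma~\ref{l:nabla}, the nonemptiness of $K_m(F)$ is equivalent to $m\geq\nabla(F)$, giving $\nabla(F)\leq m$. For the reverse inequality, choose a point $a^*$ realizing the maximum, so $\nu_F(a^*)=m$, and let $\U\in K(F)$ be arbitrary. Since $\U$ covers $F$, some member $U_i$ contains $a^*$; since $|U_i|\geq 2$ by condition (ii) of the definition of a $2$-covering, there is a point $x\in U_i$ with $x\neq a^*$. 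Then
\[
\DE(U_i)\geq d(a^*,x)\geq\nu_F(a^*)=m,
\]
where the second inequality holds because $\nu_F(a^*)$ is the distance from $a^*$ to its nearest neighbor. Hence $\DE(\U)\geq\DE(U_i)\geq m$ for every $\U\in K(F)$, and taking the minimum over all $2$-coverings gives $\nabla(F)\geq m$. Combining the two bounds yields $\nabla(F)=m=\max_{a\in F}\nu_F(a)$.

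I expect the only genuine obstacle to be the lower bound $\nabla(F)\geq m$ in part (ii): one must identify the extremal point $a^*$ and then exploit condition (ii) in the definition of a $2$-covering (that every covering set contains at least two points) to force a second point into the set covering $a^*$, thereby bounding that set's diameter below by $\nu_F(a^*)$. The upper bound, by contrast, is essentially a repackaging of the two preparatory lemmas, and part (i) is routine.
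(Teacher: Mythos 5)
Your proposal is correct and takes essentially the same route as the paper: part (i) is the same routine unwinding of definitions, and in part (ii) the bound $\nabla(F)\leq m$ invokes Lemma~\ref{l:2-cov} and Lemma~\ref{l:nabla} exactly as the paper does. For the reverse bound the paper fixes a covering $\V$ realizing $\nabla(F)$ and bounds $\nu_F(a)$ for an arbitrary point, while you fix the extremal point $a^*$ and bound $\DE(\U)$ for an arbitrary covering; this is the same core inequality $\nu_F(a)\leq d(a,b)\leq \DE(U_i)\leq\DE(\U)$ with the roles of the two quantifiers interchanged, not a genuinely different argument.
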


\begin{proof}
We simplify the notation by setting $m(F):=\textrm{min}\{\nu_F(a)|a\in F\}$, and $M(F):=\textrm{max}\{\nu_F(a)|a\in F\}$. To prove (i), notice that, $\nu_F(a)\ge\de(F)$, for all $a\in F$, so that $m(F)\ge\de(F)$. On the other hand, $\de(F)=d(a_0,a_1)$, for some $a_0,a_1\in F$. Then, $m(F)\leq \nu_F(a_0)\leq d(a_0,a_1)=\de(F)$, as desired.

To prove (ii), take $r=M(F)$ in Lemma~\ref{l:2-cov}, and let $\U$ be the 2-cover constructed in that lemma. By Lemma~\ref{l:nabla}, $M(F)\geq\nabla(F)$. To prove the reverse inequality, let $\nabla(F)=\DE(\V)$, for some 2-covering $\V$. Take an arbitrary $a\in F$, and suppose that $a\in V_i\in \V$. Then, for all $b\in V_i,\, b\neq a$, we have:
\[\nu_F(a)\leq d(a,b)\leq \DE(V_i)\leq\DE(\V)=\nabla(F).\]
It follows that $M(F)\leq \nabla(F)$, as desired. This concludes the proof.
\end{proof}

\subsection{Focal Points} In this section we introduce \emph{focal points}, an important notion for the rest of the paper. 
We start by decomposing $K(F)$ into three disjoint subsets
\[K(F)=K^0(F)\cup K^1(F)\cup K^2(F),\]
as follows: $K^0(F):=\{\U_0\}$,
\[
K^1(F):=\{\U\in K(F)|\,\,|\U|\geq 2\,\, \wedge \,\,\forall U_i\in\U,\, \DE(U_i) < \DE(F)\},
\]
and $K^2(F):=K(F)\setminus (K^0(F)\cup K^1(F))$. Thus, $\U\in K^2(F)$ if and only if $|\U|\geq2$, and $\exists U_i\in \U$, such that $\DE(U_i)=\DE(F)$. It is easy to see that:
\[K^ {1}(F):=\{\U\in K(F)|\DE(\U)<\DE(F)\}.\]
Finally, we set: $K^1_\de(F):=K^1(F)\cap K_\de(F)$.

\begin{defn}
Let $F$ be a finite space. We call $a_0\in F$ a {\em focal point} of $F$ if $\nu_F(a_0)=\DE(F)$. Explicitly, $d(a_0,a)=\DE(F)$, for all $a\in F, \,\, a\neq a_0$.
\end{defn}

\begin{rem}
In other words, \emph{all} neighbors of a focal point are "far away" (equally so, at diameter distance). A point is \emph{not} focal when it has "nearby" neighbors (i.e. neighbors at distances strictly less than $\DE(F)$). 
\end{rem}

The next result characterizes the existence of focal points in terms of $2$-coverings, covering diameter, and diameter. Note that condition (i) implies that $F$ has at least three points.

\begin{thm} \label{t:FOCAL}
Let $F$ be a finite space. Then the following conditions are equivalent:
\begin{enumerate}
\item{$F$ has no focal points,}
\item{$K^1(F) \neq \emptyset,$}
\item{$\nabla(F)<\DE(F)$.}
\end{enumerate}
\end{thm}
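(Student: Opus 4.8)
The plan is to use condition (iii) as a hub, proving $(\mathrm{i})\Leftrightarrow(\mathrm{iii})$ and $(\mathrm{ii})\Leftrightarrow(\mathrm{iii})$ separately. Both equivalences turn out to be short, because the substantive work has already been packaged into Proposition~\ref{p:delta,nabla}(ii), which identifies $\nabla(F)$ with $\max\{\nu_F(a)\mid a\in F\}$.

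For $(\mathrm{i})\Leftrightarrow(\mathrm{iii})$, I would first record the elementary bound $\nu_F(a)\le\DE(F)$ for every $a\in F$, so that $a$ is focal exactly when this bound is attained with equality. Hence $F$ has no focal point if and only if $\nu_F(a)<\DE(F)$ for all $a$. Since $F$ is finite, this pointwise strict inequality passes to the maximum, giving $\max_a\nu_F(a)<\DE(F)$; and conversely $\max_a\nu_F(a)<\DE(F)$ forces $\nu_F(a)<\DE(F)$ for each $a$. By Proposition~\ref{p:delta,nabla}(ii) this maximum is precisely $\nabla(F)$, so the chain reads: $F$ has no focal points if and only if $\nabla(F)<\DE(F)$.

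For $(\mathrm{ii})\Leftrightarrow(\mathrm{iii})$, I would invoke the reformulation $K^1(F)=\{\U\in K(F)\mid\DE(\U)<\DE(F)\}$ recorded just before the definition of focal points. If some $\U\in K^1(F)$ exists, then $\nabla(F)\le\DE(\U)<\DE(F)$ directly from the definition of $\nabla(F)$ as a minimum over $K(F)$, which is (iii). Conversely, $K(F)$ is a finite nonempty set (it contains $\U_0$), so the minimum defining $\nabla(F)$ is attained by some $\V\in K(F)$ with $\DE(\V)=\nabla(F)$; if (iii) holds then $\DE(\V)<\DE(F)$, placing $\V$ in $K^1(F)$ and yielding (ii).

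There is essentially no hard step: once Proposition~\ref{p:delta,nabla}(ii) is in hand, the only point deserving genuine care is the passage from the pointwise strict inequality $\nu_F(a)<\DE(F)$ to the strict inequality $\max_a\nu_F(a)<\DE(F)$, which is legitimate precisely because $F$ is finite. For an infinite set one could instead have $\nu_F(a)<\DE(F)$ for every $a$ yet $\sup_a\nu_F(a)=\DE(F)$, so finiteness is doing real, if modest, work here.
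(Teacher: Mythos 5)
Your proof is correct, but its architecture differs from the paper's in a small yet genuine way. The paper proves the cycle $(\mathrm{i})\Rightarrow(\mathrm{iii})\Rightarrow(\mathrm{ii})\Rightarrow(\mathrm{i})$: the first two legs are exactly your arguments (Proposition~\ref{p:delta,nabla}(ii) for the first, attainment of the covering diameter --- via Lemma~\ref{l:nabla} --- for the second), but the closing leg $(\mathrm{ii})\Rightarrow(\mathrm{i})$ is a standalone contradiction argument: if $p$ were focal and $\U\in K^1(F)$, then picking $U_i\in\U$ containing $p$ and $a\in U_i$ with $a\neq p$ gives $\DE(F)=d(a,p)\leq\DE(U_i)\leq\DE(\U)<\DE(F)$. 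You instead use (iii) as a hub, and the implication $(\mathrm{ii})\Rightarrow(\mathrm{iii})$ becomes the one-line observation $\nabla(F)\leq\DE(\U)<\DE(F)$, after which no-focal-points follows by running Proposition~\ref{p:delta,nabla}(ii) backwards. Your route is slightly more economical and makes Proposition~\ref{p:delta,nabla} carry all the weight; the paper's cycle has the minor virtue that its $(\mathrm{ii})\Rightarrow(\mathrm{i})$ step is self-contained and exhibits directly why a focal point obstructs any covering of small diameter. Your remark that finiteness is needed to pass from the pointwise inequality $\nu_F(a)<\DE(F)$ to $\max_a\nu_F(a)<\DE(F)$ is well taken, and is indeed the point where the statement would fail for infinite spaces.
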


\begin{proof}
We assume that (i) holds and prove (iii). By definition, (i) means that $\nu_F(a)<\DE(F)$, for all $a\in F$. By Proposition~\ref{p:delta,nabla}, $\nabla(F)={{\max_{a\in F}}} \,\nu_F(a)<\DE(F)$, as desired.

We now show that (iii) implies (ii). Recall $K^1(F)$ can also be defined as the set of 2-covers $\U$ with $\DE(\U)<\DE(F)$. By Lemma~\ref{l:nabla}, $K_{\nabla(F)}(F)\neq \emptyset$, so that we can find a 2-cover $\U$ with $\DE(\U)=\nabla(F)<\DE(F)$. Hence $\U\in K^1(F)$, as required.

Finally, suppose (ii) holds. Let $\U\in K^1(F)$, and suppose, for contradiction, that $p\in F$ is a focal point. Let $p\in U_i\in\U$. For $a\in U_i$, $a\neq p$, we have \[\DE(F)=d(a,p)\leq\DE(U_i)\leq\DE(\U)<\DE(F),\]
a contradiction. This proves (i), and the Theorem.
\end{proof}

\subsection{Application to Nearest Neighbors.}\label{ss:nn}

Finding nearest neighbors in finite metric spaces is a method used to solve many important problems. The fields of application include Databases, Pattern Recognition, Computer Vision, DNA-Sequencing, Coding Theory, Data Compression, Text Analysis in real-time, Recommendation Systems, Spell Checking, Data Mining, etc.

Typically, one represents the objects of interest (and one's knowledge of them) by points in a vector space, and finds solutions by searching for a point nearest a given \emph{query point}. The whole concept is based on the assumption that nearby points have properties similar to those of the query point.

By the curse of dimensionality in the case of nearest neighbors, one usually means the phenomenon of \emph{concentration of distance}: the longest and shortest distance between points in the space are so close that the distinction between "near" and "far" becomes meaningless. In terms of the parameters we have introduced, this means that the quotient $\DE(F)/\delta(F)$ is close to one. Concentration of distance poses an obvious threat to solution methods based on finding nearest points. Hence the need to determine if the sets of points you usually obtain in your specific field of application, suffer from concentration of distance, and whether or not the problem is severe enough to defeat the assumption that "nearby" points have properties similar to those of the query point. 

In this section we discuss concentration of distance in light of the results obtained so far. Actually, we contend that rather than looking at the quotient $\DE(F)/\delta(F)$ or, equivalently, $\delta(F)/\DE(F)$, one should look at $\nabla(F)/\DE(F)$ instead. Indeed, we consider the question of \emph{how meaningful} a nearest neighbor is, and express the answer in term of this quotient. Our results are summarized in Theorem~\ref{t:omnibus} below (observe that this theorem includes notions and results obtained later in the paper). 
As usual, let $(F,d)$ denote an arbitrary finite metric space with at least two points.


\begin{defn}
Given arbitrary $x,x'\in F$, we say $x'$ is a point \emph{nearest} $x$, if $x\neq x'$, and $d(x,x'')\geq d(x,x')$, for all $x''\neq x$.
\end{defn}

\begin{rem}
The reader should be aware of the fact that we distinguish between "nearest \emph{point}", defined above, and "nearest \emph{neighbor}" to be defined presently. The difference lies with the notion of "neighbor" which, for us, excludes points lying "far away" (cf. Defs.~\ref{def:neigh} and \ref{def:nn}).
\end{rem}

\begin{lem}\label{l:nuNearest}
For any $x,x'\in F$, $x'$ is nearest $x$ iff $\nu(x)=d(x,x')$.
\end{lem}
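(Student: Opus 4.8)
The plan is to unwind the two definitions directly, since the statement is essentially a reformulation of the definition of $\nu_F$ as a minimum; here $\nu=\nu_F$ is the nearest-neighbor distance introduced earlier. I would prove the two implications separately.

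First I would establish the forward direction. Assume $x'$ is a point nearest $x$. By definition this means $x\neq x'$ and $d(x,x'')\geq d(x,x')$ for every $x''\neq x$. Thus $d(x,x')$ is a lower bound for the set $\{d(x,y)\mid y\in F,\ y\neq x\}$ whose minimum defines $\nu_F(x)$; since $x'$ itself belongs to this set (as $x'\neq x$), the value $d(x,x')$ is attained and therefore equals the minimum. Hence $\nu_F(x)=d(x,x')$.

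For the converse, assume $\nu_F(x)=d(x,x')$. The only point requiring a moment's care is that the definition of \emph{nearest} demands $x\neq x'$. This follows from positivity of the separation: $\nu_F(x)\geq\de(F)>0$, so $d(x,x')>0$ and therefore $x'\neq x$. The remaining requirement, $d(x,x'')\geq d(x,x')$ for all $x''\neq x$, is then immediate, since $d(x,x'')\geq\nu_F(x)=d(x,x')$ by definition of $\nu_F$ as the minimum over all points distinct from $x$. Thus $x'$ is nearest $x$, completing the equivalence.

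I expect no genuine obstacle here; the argument is a direct translation between two formulations of ``minimal distance to a distinct point.'' The one subtlety worth flagging explicitly is verifying $x'\neq x$ in the backward direction, and for this the positivity of $\de(F)$ (noted earlier as $0<\de(F)\leq\DE(F)$) is exactly what is needed.
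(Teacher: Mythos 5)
Your proof is correct; the paper states this lemma without any proof, treating it as an immediate consequence of the definitions, and your argument is exactly that definitional unwinding (forward: an attained lower bound is the minimum; backward: minimality is the defining inequality). Your explicit verification that $x'\neq x$ in the converse, via $d(x,x')=\nu_F(x)\geq\de(F)>0$, correctly handles the only point that is not purely formal.
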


\begin{defn}
An arbitrary function $n:F\rightarrow F$ is called a \emph{nearest point} function, if $n(x)$ is a point nearest $x$, for all $x\in F$.
\end{defn}

\begin{lem}
Every finite metric space has a nearest point function.
\end{lem}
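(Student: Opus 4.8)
The plan is to construct $n$ pointwise, exploiting nothing beyond the finiteness of $F$. First I would fix an arbitrary $x\in F$ and consider the set $\{d(x,y)\mid y\in F,\,y\neq x\}$. Because $F$ contains at least two points, this set is non-empty; because $F$ is finite, it is a finite collection of (strictly positive) real numbers, so it attains its minimum. By the definition of $\nu_F$, that minimum is precisely $\nu_F(x)$, and any $y$ realizing it satisfies $\nu_F(x)=d(x,y)$.

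Next I would invoke Lemma~\ref{l:nuNearest}, according to which the condition $\nu_F(x)=d(x,y)$ is equivalent to $y$ being a point nearest $x$. Thus each minimizer produced in the previous step is a nearest point to $x$. Selecting, for every $x\in F$, one such minimizer and calling it $n(x)$ defines a function $n\colon F\to F$ for which $n(x)$ is a point nearest $x$ at every $x$; by definition $n$ is then a nearest point function.

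I do not expect any genuine obstacle here: the heart of the matter is the elementary fact that a finite non-empty set of reals has a least element, together with the translation of "nearest" into $\nu_F$ supplied by Lemma~\ref{l:nuNearest}. The only subtlety worth flagging is the simultaneous choice of a minimizer at each point. Since $F$ is finite, this requires no appeal to the axiom of choice: finitely many independent selections can be made explicitly. It is also worth noting that a minimizer need not be unique (a point $x$ may have several equidistant nearest points), so the constructed $n$ is generally not canonical; the statement claims existence only, which is exactly what the argument above delivers.
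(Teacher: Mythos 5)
Your argument is correct and is exactly the reasoning the paper intends: the paper states this lemma without proof, treating it as immediate from finiteness, and your write-up supplies precisely that routine argument (finite non-empty sets of reals attain their minimum, one choice per point, Lemma~\ref{l:nuNearest} to identify minimizers with nearest points). Your remarks on the standing assumption $|F|\geq 2$, the irrelevance of the axiom of choice, and the non-uniqueness of minimizers are all accurate.
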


\begin{rem}\label{r:fp} It follows that the existence of a nearest point function imposes no condition on $F$: such a function \emph{always} exists. This raises the question of \emph{meaningfulness} (cf. Def.~\ref{d:lambda-mean}).

Consider the definition of a nearest point function $n(x)$ at a focal point $x$. At $x$, we have exactly $|F|-1$ possible choices for $n(x)$, and no metric criterion to distinguish between them. So, \emph{any} such choice will give us a definition of a nearest point function but, clearly, distance gives no help to find points with properties similar to those of the query point.
\end{rem}

\begin{defn}\label{def:neigh}
Let $x,x'$ denote arbitrary points of $F$. We say that $x'$ is a \emph{neighbor} of $x$ if $x\neq x'$, and $d(x,x')<\DE(F)$.
\end{defn}

Intuitively, a neighbor of $x$ is a point different from $x$, and not far away from it. Clearly, a focal point has no neighbors. In fact:

\begin{lem}\label{l:noNgbrs=fp}
A point is focal iff it has no neighbors.
\end{lem}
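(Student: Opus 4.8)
The plan is to prove both implications directly from the definitions, the single nontrivial ingredient being that the diameter is an upper bound for every pairwise distance, i.e.\ $d(x,x')\leq \DE(F)$ for all $x,x'\in F$.

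First I would dispatch the forward direction: if $x$ is focal, then $x$ has no neighbors. By the definition of focal point, $d(x,a)=\DE(F)$ for every $a\in F$ with $a\neq x$. A neighbor $x'$ of $x$ would, by Definition~\ref{def:neigh}, satisfy $d(x,x')<\DE(F)$; this flatly contradicts $d(x,x')=\DE(F)$. Hence no neighbor can exist, and the focal point $x$ has none.

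For the converse, suppose $x$ has no neighbors. Then for every $x'\neq x$ the failure of the neighbor condition of Definition~\ref{def:neigh} forces $d(x,x')\geq\DE(F)$. Combining this with the upper bound $d(x,x')\leq\DE(F)$ gives $d(x,x')=\DE(F)$ for every $x'\neq x$. Taking the minimum over such $x'$ yields $\nu_F(x)=\DE(F)$, which is precisely the defining condition for $x$ to be focal.

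The whole argument is a matter of unwinding the definitions; the only step that deserves a moment's attention is the use of $d(x,x')\leq\DE(F)$, which is what turns the statement ``$x$ has no strictly-closer point'' into ``$x$'s nearest point sits exactly at diameter distance.'' Accordingly, I do not anticipate any genuine obstacle: the lemma is essentially immediate once one records that $\DE(F)$ caps all distances from above.
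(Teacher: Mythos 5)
Your proof is correct and is exactly the definitional unwinding the paper intends: the paper states this lemma without proof, remarking only that a focal point clearly has no neighbors. Your identification of the key (if trivial) ingredient, namely $d(x,x')\leq \DE(F)$ turning "no neighbors" into "all distances equal $\DE(F)$", is precisely what makes the converse work.
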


\begin{defn}\label{def:nn}
An arbitrary function $n:F\rightarrow F$ is called a \emph{nearest neighbor} function, if $n(x)$ is a nearest neighbor of $x$, for all $x\in F$.
\end{defn}

\begin{lem}
A function $n:F\rightarrow F$ is nearest neighbor iff the following conditions hold:
\begin{enumerate}
\item{$d(x,n(x))=\nu_F(x)$, for all $x\in F$.}
\item{$\nu_F(x)<\DE(F)$, for all $x\in F$.}
\end{enumerate}
\end{lem}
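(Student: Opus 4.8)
The plan is to reduce both implications to Lemma~\ref{l:nuNearest} together with the definition of a neighbor (Def.~\ref{def:neigh}), unwinding the meaning of ``nearest neighbor'' along the way. A nearest neighbor of $x$ is a point $x'$ that is simultaneously a neighbor of $x$ (so $x'\neq x$ and $d(x,x')<\DE(F)$) and as close to $x$ as any neighbor can be. The governing fact I would isolate first is that $\nu_F(x)$ is the minimum distance from $x$ to \emph{all} other points, while every distance in $F$ is bounded above by $\DE(F)$. Consequently, as soon as $x$ possesses even one neighbor, the globally nearest point of $x$ is itself a neighbor, so the minimum distance taken over neighbors coincides with $\nu_F(x)$. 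This is the one substantive observation; everything else is bookkeeping over definitions.

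For the forward implication I assume $n$ is a nearest neighbor function and fix $x\in F$. Since $n(x)$ is a neighbor of $x$, the point $x$ has at least one neighbor, and by the observation above the closest neighbor sits at the global minimum distance; hence $d(x,n(x))=\nu_F(x)$, which is condition (i). Condition (ii) then follows immediately, for $\nu_F(x)=d(x,n(x))<\DE(F)$ precisely because $n(x)$ is a neighbor.

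For the converse I assume (i) and (ii) and fix $x\in F$. By (i) we have $d(x,n(x))=\nu_F(x)$, so Lemma~\ref{l:nuNearest} shows that $n(x)$ is a point nearest $x$; in particular $n(x)\neq x$, since $\nu_F(x)\geq\de(F)>0$. By (ii), $d(x,n(x))=\nu_F(x)<\DE(F)$, so $n(x)$ is in fact a neighbor of $x$. Being both a neighbor and a nearest point, $n(x)$ is a nearest neighbor of $x$; as $x$ was arbitrary, $n$ is a nearest neighbor function, completing the equivalence.

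I do not expect a genuine obstacle here, but the point I would take care to make explicit is the interplay between $\nu_F$ and $\DE(F)$ that forces the globally nearest point to be a neighbor whenever any neighbor exists (equivalently, by Lemma~\ref{l:noNgbrs=fp}, whenever $x$ is not focal). This is exactly what keeps the ``closest among neighbors'' reading in step with the minimum-distance function $\nu_F$ and makes condition (i) clean rather than conditional.
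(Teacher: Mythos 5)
Your proof is correct. The paper states this lemma without proof, treating it as immediate from the definitions and Lemma~\ref{l:nuNearest}; your argument—including the key observation that the globally nearest point is automatically a neighbor as soon as any neighbor exists, together with the remark that $\nu_F(x)\geq\de(F)>0$ rules out $n(x)=x$—is precisely the routine verification the paper leaves to the reader.
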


Consider now the important question of when a nearest neighbor is meaningful. We believe this notion depends on the specific field of application: what is meaningful for databases need not be meaningful for, say, DNA-sequencing. This is why, instead of considering an absolute notion of meaningfulness, we introduce the following relative notion.

\begin{defn}\label{d:lambda-mean}
Let $\lambda$ denote a real number. An arbitrary function $n:F\rightarrow F$ is a \emph{$\lambda$-meaningful nearest neighbor} function, abbreviated $\lambda$-MNN function, if
\begin{enumerate}
\item{$n$ is a nearest neighbor function,}
\item{$0<\lambda<1$,}
\item{$\lambda$ is the smallest positive real number with $d(x,n(x))\leq \lambda\cdot \DE(F), \forall x\in F.$}
\end{enumerate}
\end{defn}

\begin{rem}
Note that there is always a point  $x_0\in F $, satisfying $d(x_0,n(x_0))=\nabla(F)$. It follows that, if  $n$ is a $\lambda$-MNN function, then $\lambda\geq \nabla(F)/\DE(F)$.
\end{rem}

\begin{lem}
Suppose $n:F\rightarrow F$ is a $\lambda$-MNN function. Then $\lambda=\nabla(F)/\DE(F)$.
\end{lem}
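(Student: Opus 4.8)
The plan is to reduce condition (iii) of the $\lambda$-MNN definition to a statement about $\nabla(F)$ via the two lemmas characterizing nearest neighbor functions, and then read off the minimizing value of $\lambda$ directly. The key observation is that a single quantifier-over-all-$x$ inequality of the form ``$d(x,n(x))\leq\lambda\cdot\DE(F)$ for all $x$'' is equivalent to bounding the \emph{maximum} of $d(x,n(x))$, and this maximum is pinned down by the results already established.

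First I would invoke the characterization of nearest neighbor functions: since $n$ is a $\lambda$-MNN function, in particular it is a nearest neighbor function, so $d(x,n(x))=\nu_F(x)$ for every $x\in F$. Hence condition (iii), namely $d(x,n(x))\leq\lambda\cdot\DE(F)$ for all $x\in F$, is equivalent to $\max_{x\in F}\nu_F(x)\leq\lambda\cdot\DE(F)$. By Proposition~\ref{p:delta,nabla}(ii), $\max_{x\in F}\nu_F(x)=\nabla(F)$, so the condition becomes simply $\nabla(F)\leq\lambda\cdot\DE(F)$, i.e. $\lambda\geq\nabla(F)/\DE(F)$.

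The set of positive reals $\lambda$ satisfying the inequality in (iii) is therefore exactly the interval $[\nabla(F)/\DE(F),\infty)$; note that $\nabla(F)/\DE(F)>0$ because $\nabla(F)\geq\de(F)>0$, so the positivity requirement in (ii) is automatically met and does not cut off the left endpoint. The smallest element of this interval is $\nabla(F)/\DE(F)$, and by condition (iii) $\lambda$ is defined to be precisely this smallest value, giving $\lambda=\nabla(F)/\DE(F)$. Equivalently, one may split into the two inequalities: the bound $\lambda\geq\nabla(F)/\DE(F)$ is already noted in the preceding remark, while taking $\lambda_0:=\nabla(F)/\DE(F)$ and checking that $d(x,n(x))=\nu_F(x)\leq\nabla(F)=\lambda_0\cdot\DE(F)$ for all $x$ shows $\lambda_0$ is admissible, so minimality forces $\lambda\leq\lambda_0$.

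There is no real obstacle here; the content is entirely in the bookkeeping of Proposition~\ref{p:delta,nabla}(ii), and the only point requiring a moment's care is to confirm that the minimizing value is genuinely attained and positive, so that it is a legitimate candidate for $\lambda$ under clause (ii). Since $\nabla(F)/\DE(F)$ is both positive and satisfies the defining inequality, it is exactly the minimum demanded in clause (iii), and the equality follows.
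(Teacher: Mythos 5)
Your proof is correct and is essentially the argument the paper intends: the lower bound $\lambda\geq\nabla(F)/\DE(F)$ is exactly the content of the remark immediately preceding the lemma (obtained, as you do, from $d(x,n(x))=\nu_F(x)$ and Proposition~\ref{p:delta,nabla}(ii)), while the upper bound follows from the minimality clause (iii), since $\nabla(F)/\DE(F)$ itself satisfies the defining inequality. The paper states the lemma without proof, and your write-up supplies precisely the bookkeeping it leaves to the reader.
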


\begin{rem}
Note that $\lambda=\nabla(F)/\DE(F)$ is a sharp bound, since for some $x_0\in F$, 
\[
d(x_0,n(x_0))=\lambda\cdot\DE(F)=\nabla(F).
\]
\end{rem}

Taking advantage of results that will be proved in later sections, we can summarise the discussion in the following omnibus theorem:

\begin{thm} \label{t:omnibus}
Let $F$ be a finite metric space and $n:F\rightarrow F$ and arbitrary nearest point function. Then the following conditions are equivalent:
\begin{enumerate}
\item{$n$ is a nearest neighbor function.}
\item{$n$ is a $\nabla(F)/\DE(F)$-MNN function.}
\item{$n$ is a $\lambda$-MNN function.}
\item{$F$ has no focal points.}
\item{$K^1(F)$ is not empty.}
\item{$\nabla(F)/\DE(F)<1$.}
\item{$\emph{dim} \hh(F)$ is finite.}
\item{$\emph{dim} \m(F)$ is finite.}
\end{enumerate}
\end{thm}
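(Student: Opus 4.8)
The plan is to treat conditions (iv)--(vi) as a central hub and attach everything else to it, since their mutual equivalence is already furnished by Theorem~\ref{t:FOCAL}. One only needs to observe that (vi), namely $\nabla(F)/\DE(F)<1$, is the same as $\nabla(F)<\DE(F)$ because $\DE(F)>0$; with that, (iv) $\Leftrightarrow$ (v) $\Leftrightarrow$ (vi) is exactly Theorem~\ref{t:FOCAL}.

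First I would establish (i) $\Leftrightarrow$ (iv). Since $n$ is assumed to be a nearest \emph{point} function, Lemma~\ref{l:nuNearest} gives $d(x,n(x))=\nu_F(x)$ for every $x\in F$ at no cost, so the first clause of the characterization of nearest neighbor functions holds automatically, and $n$ is a nearest neighbor function precisely when the second clause holds, i.e. $\nu_F(x)<\DE(F)$ for all $x$. Because $\nu_F(x)\le\DE(F)$ always, with equality exactly when $x$ is focal, this last condition says nothing more than that $F$ has no focal points. That is (i) $\Leftrightarrow$ (iv).

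Next I would dispatch the two $\lambda$-MNN conditions (ii) and (iii). The implication (ii) $\Rightarrow$ (iii) is trivial, and (iii) $\Rightarrow$ (ii) follows from the lemma that forces $\lambda=\nabla(F)/\DE(F)$ for any $\lambda$-MNN function, so (ii) $\Leftrightarrow$ (iii). Moreover any $\lambda$-MNN function is by definition a nearest neighbor function, giving (ii) $\Rightarrow$ (i). For the converse (i) $\Rightarrow$ (ii), I would assume $n$ is a nearest neighbor function; through the hub this already yields $\nabla(F)/\DE(F)<1$, so I set $\lambda:=\nabla(F)/\DE(F)$ and verify the three conditions of Definition~\ref{d:lambda-mean}: its first is the hypothesis, its second holds since $\nabla(F)\ge\de(F)>0$ and (vi) gives $\lambda<1$, and its third follows from Proposition~\ref{p:delta,nabla}(ii), which identifies $\max_{x}d(x,n(x))=\max_x\nu_F(x)=\nabla(F)=\lambda\,\DE(F)$ as the sharp bound. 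This ties (i), (ii), (iii) together and to the hub.

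Finally, (vii) and (viii) are attached by invoking the later results that \emph{define} the two finite dimensions. By Theorem~\ref{t:DIMf}, equation~(\ref{e:defDIMfH}) has a (finite, positive) solution $s_0=\dim\hh(F)$ if and only if $F$ has no focal points, which is (vii) $\Leftrightarrow$ (iv); the analogous defining result for finite box-counting dimension in Section~\ref{s:4} gives (viii) $\Leftrightarrow$ (iv). The main subtlety is simply that (vii) and (viii) rest on material proved later in the paper, exactly as the theorem warns; the remaining equivalences are a matter of assembling already-established lemmas, the one computational point being the use of Proposition~\ref{p:delta,nabla}(ii) to recognize $\nabla(F)$ as the sharp nearest-neighbor bound in the step (i) $\Rightarrow$ (ii).
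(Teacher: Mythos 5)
Your proposal is correct and follows essentially the same route as the paper: the paper's own justification (given in the remark following the theorem) is precisely to cite Theorem~\ref{t:FOCAL} for (iv)--(vi), Theorems~\ref{t:DIMf} and \ref{t:DIMfB} for (vii)--(viii), and the chain of lemmas of Section~\ref{ss:nn} (Lemma~\ref{l:nuNearest}, the characterization of nearest neighbor functions, the lemma forcing $\lambda=\nabla(F)/\DE(F)$, and Proposition~\ref{p:delta,nabla}) for (i)--(iii). Your write-up simply makes explicit the assembly of these same ingredients, including the sharpness argument via $\max_x \nu_F(x)=\nabla(F)$ that the paper leaves in a remark.
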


\begin{rem}
(a) The equivalences (iv)-(vi) constitute Theorem~\ref{t:FOCAL}. The last two equivalences  follow from Theorems~\ref{t:DIMf} and \ref{t:DIMfB} below.
\end{rem}

It follows from the theorem that a nearest neighbor function is always $\lambda$-meaningful for a unique $\lambda\in (0,1)$, namely for $\lambda=\nabla(F)/\DE(F)$. Hence, both the existence of a nearest neighbor function, as well as its meaningfulness, depend on the quotient $\lambda=\nabla(F)/\DE(F)$: the function exists if $\lambda<1$, and it is more meaningful the smaller $\lambda$ is. The question for those working in a given field of application of nearest point search, then, is to decide whether or not $\lambda=\nabla(F)/\DE(F)$ is small enough so that, knowing that $d(x,n(x))\leq \lambda\cdot\DE(F)$, will guarantee that the "similarity" between $x$ and $n(x)$ is strong for their particular field.

We now consider our contention that concentration of distance is only \emph{partially} relevant to the nearest neighbor method. We begin by observing that \[0<\frac{\delta(F)}{\DE(F)}\leq \frac{\nabla(F)}{\DE(F)}\leq 1\,\cdot\]
It follows that, if $\de(F)/\DE(F)$ is close to 1, then so is $\nabla(F)/\DE(F)$ which, in turn, implies by Theorem~\ref{t:omnibus}, that we might not have a $\lambda$-MNN function. However, when $\de(F)/\DE(F)$ is small we cannot guarantee that also $\nabla(F)/\DE(F)$ will be small and, hence, we cannot be sure that we are in the clear. In fact, more is true: Example \ref{e:noCoD} below shows that $\de(F)/\DE(F)$ can be made as small as we please, while $\nabla(F)/\DE(F)=1$; thus $F$ has focal points and there is no nearest neighbor function. The example reveals that the quotient $\de(F)/\DE(F)$ fails to detect focal points.

\begin{exa}\label{e:noCoD}
Let $F:=\{A,B,C,D,E,G,O,H\}\subseteq \R^3$, where $A=(1,0,0),\,B=(1,0,1),\,C=(0,1,0),\,D=(0,1,1),\,E=(1,1,0),\,G=(1,1,1),\,
O=(0,0,0),\,H=(0,0,t)$, for $t\in(0,1]$. We consider $\R^3$ with the $\ell_\infty$-norm, and its associated distance $d=d_\infty$. Then all distances between different points are equal to 1, except for $d(O,H)=t$. It follows that $\de(F)=t$, and $\DE(F)=1=\nabla(F)$. Thus, while $\nabla(F)/\DE(F)=1$, and $A,B,C,D,E,G$ are focal points,  $\de(F)/\DE(F)=t$ can be made as small as we please. 
\end{exa}

\begin{rem}
It follows from Section~\ref{s:Holder} below, that the quotient:
\[\nabla(F)/\DE(F)=\left(\nabla(\eta(F))/\DE(\eta(F))\right)^\beta,\]
is preserved by similarities ($\beta=1$), but not by more general H\"{o}lder equivalences. For nearest neighbors it is important for the ratio to be small, so the above formula opens for the possibility of improving the ratio by transforming the space to a H\"{o}lder equivalent one. For instance, in Ex.~\ref{ex:fold} below, the inverse function $\eta':F'_n\rightarrow F_n$, i.e. the function that "unfolds" $F'_n$, is $(1,2)$-H\"{o}lder, and passing from $F'_n$ to $F_n$, the ratio is squared.
\end{rem}

\section{Finite Hausdorff dimension.}\label{s:3}

Following Hausdorff's steps, we start by introducing $H^s$, an analog of his outer measure $\CH^s$, and then use it to define the dimension proper. Later we relate $\mbox{dim} \hh(F)$ to the existence of {\em focal points} in $F$.

\subsection{Definition of $H^s$.} The functions $H^s_\U(F)$, $H^s_\de(F)$, and $H^s(F)$ defined in this section are analogs for finite spaces, of the classically defined functions $\CH^s_\U(X)$, $\CH^s_\de(X)$, and $\CH^s(X)$, respectively. In our context, these functions are interesting only when $F$ has no focal points, as will be seen when we define finite Hausdorff dimension later. 

\begin{defn}\label{d:H}
Let $F$ be a finite space with at least two elements, $s\in [0,\infty)$, and $\U\in K(F)$. Set
\[
H^s_\U(F):=\sum_{U_i\in \U}\DE(U_i)^s.
\]
Suppose, moreover, that $\de\geq \nabla(F)$. We then set
\[ 
H^s_\de(F) := \left\lbrace
  \begin{array}{l l}
    \textrm{min}\{H^s_\U(F)|\U\in K^1_\de(F)\}, & \emph{when $K^1(F)\neq\emptyset$},\\
    \textrm{min}\{H^s_\U(F)|\U\in K(F)\}, & \emph{when $K^1(F)=\emptyset$}.
  \end{array}
\right. 
\]
Finally,
\[H^s(F):={\textrm{max}}\{H^s_\de(F)|\de\geq\nabla(F)\}.\]
\end{defn}

\begin{lem}\label{l:Hdecreasing}
If $\nabla(F)\leq\de\leq \de'$, then $H^s_\de(F)\geq H^s_{\de'}(F)$.
\end{lem}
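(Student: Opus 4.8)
The plan is to reduce the statement to the elementary fact that the minimum of a function over a larger (nonempty) set cannot exceed its minimum over a smaller one, and then to handle the two branches of the definition of $H^s_\de(F)$ separately. The only structural input I need about the index sets is a monotonicity in $\de$: since a $2$-covering of diameter $\le\de$ automatically has diameter $\le\de'$ whenever $\de\le\de'$, we have $K_\de(F)\subseteq K_{\de'}(F)$, and intersecting with $K^1(F)$ gives $K^1_\de(F)\subseteq K^1_{\de'}(F)$. With this inclusion in hand the result should fall out of a case split on whether $K^1(F)$ is empty.

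In the case $K^1(F)=\emptyset$, both $H^s_\de(F)$ and $H^s_{\de'}(F)$ are computed by the second branch, namely $\min\{H^s_\U(F)\mid\U\in K(F)\}$, which makes no reference to $\de$ at all. Hence $H^s_\de(F)=H^s_{\de'}(F)$, and the asserted inequality holds with equality.

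In the case $K^1(F)\neq\emptyset$, both values come from the first branch, so $H^s_\de(F)=\min\{H^s_\U(F)\mid\U\in K^1_\de(F)\}$ and likewise for $\de'$. Applying $K^1_\de(F)\subseteq K^1_{\de'}(F)$, the minimum over the larger set $K^1_{\de'}(F)$ is at most the minimum over $K^1_\de(F)$, giving $H^s_{\de'}(F)\le H^s_\de(F)$, as required.

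The one point that genuinely needs care, and the only real obstacle, is ensuring both minima are taken over \emph{nonempty} sets, so that $H^s_\de(F)$ is in fact defined for every $\de\ge\nabla(F)$ appearing above. In the first case $K(F)$ always contains $\U_0=\{F\}$, so nonemptiness is immediate. In the second case I would invoke Theorem~\ref{t:FOCAL}: $K^1(F)\neq\emptyset$ forces $\nabla(F)<\DE(F)$, and a $2$-covering attaining $\DE(\U)=\nabla(F)$ (which exists since $K(F)$ is finite) then satisfies $\DE(\U)<\DE(F)$, so it lies in $K^1(F)$, and also in $K_\de(F)$ for every $\de\ge\nabla(F)$, hence in $K^1_\de(F)$. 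Finiteness of $K(F)$ also guarantees the minima are attained rather than merely infima. Once nonemptiness is secured, the remaining argument is purely the monotonicity of $\min$ under set inclusion.
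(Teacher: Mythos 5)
Your proof is correct, and it is exactly the argument the paper has in mind: the paper dismisses this lemma with the single word ``Obvious,'' and what it takes to be obvious is precisely your observation that $K^1_\de(F)\subseteq K^1_{\de'}(F)$ (resp.\ that the second branch of Definition~\ref{d:H} does not involve $\de$ at all), so the minimum can only decrease as $\de$ grows. Your extra care about nonemptiness of $K^1_\de(F)$ via Theorem~\ref{t:FOCAL} and attainment of the minima is a welcome filling-in of details the paper leaves implicit, not a deviation in approach.
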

\begin{proof}
Obvious.
\end{proof}

\begin{lem}~\label{l:H}
For any finite space $F$ we have:
\[ H^s(F) = \left\lbrace
  \begin{array}{l l l}
    H^s_{\nabla(F)}(F)&=\emph{ min}\{H^s_\U(F)|\DE(\U)=\nabla(F)\}, & \textrm{when $K^1(F)\neq\emptyset$},\\
    \DE(F)^s&=\,\,H^s_\de(F), \emph{ for all } \de\geq\nabla(F), & \textrm{when $K^1(F)=\emptyset$}.
  \end{array}
\right. \]
\end{lem}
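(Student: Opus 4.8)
The plan is to first collapse the outer maximum in the definition of $H^s(F)$ to a single value of $\de$, and then evaluate that value separately in the two cases.

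First I would invoke Lemma~\ref{l:Hdecreasing}: the map $\de\mapsto H^s_\de(F)$ is non-increasing on $[\nabla(F),\infty)$, so the maximum defining $H^s(F)$ is attained at the left endpoint, giving $H^s(F)=H^s_{\nabla(F)}(F)$. This single identity carries the whole argument; each of the two displayed formulas then becomes a statement about $H^s_{\nabla(F)}(F)$.

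For the case $K^1(F)\neq\emptyset$, I would identify the index set of the defining minimum. Since $\nabla(F)=\min\{\DE(\U)\mid\U\in K(F)\}$, every $2$-covering satisfies $\DE(\U)\geq\nabla(F)$, and hence $K_{\nabla(F)}(F)=\{\U\in K(F)\mid\DE(\U)=\nabla(F)\}$. By Theorem~\ref{t:FOCAL}, the hypothesis $K^1(F)\neq\emptyset$ forces $\nabla(F)<\DE(F)$, so any $\U$ with $\DE(\U)=\nabla(F)$ automatically has $\DE(\U)<\DE(F)$ and therefore lies in $K^1(F)$. Thus $K^1_{\nabla(F)}(F)=\{\U\mid\DE(\U)=\nabla(F)\}$, and substituting this into the definition of $H^s_{\nabla(F)}(F)$ yields the first formula. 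For the case $K^1(F)=\emptyset$, I would argue that every $2$-covering must contain a piece of full diameter: $K^1(F)=\emptyset$ means no covering has all pieces of diameter $<\DE(F)$, so each $\U\in K(F)$ has some $U_{i_0}$ with $\DE(U_{i_0})=\DE(F)$. Because every summand $\DE(U_i)^s$ is non-negative (and, for $s=0$, equals $1$ since each $U_i$ has positive diameter), this gives $H^s_\U(F)\geq\DE(U_{i_0})^s=\DE(F)^s$, with equality realized by the trivial covering $\U_0=\{F\}$, where $H^s_{\U_0}(F)=\DE(F)^s$. Hence $\min\{H^s_\U(F)\mid\U\in K(F)\}=\DE(F)^s$; as this minimum does not depend on $\de$, we obtain $H^s_\de(F)=\DE(F)^s$ for every $\de\geq\nabla(F)$, and the maximum over $\de$ is again $\DE(F)^s$.

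The argument is essentially a careful unwinding of definitions, so there is no deep obstacle. The one place that needs genuine attention is the case $K^1(F)\neq\emptyset$, where one must confirm that the extra constraint $\DE(\U)<\DE(F)$ built into $K^1_\de$ is met automatically by the coverings of minimal diameter; this is exactly where the strict inequality $\nabla(F)<\DE(F)$ supplied by Theorem~\ref{t:FOCAL} is used, ensuring that the set over which the minimum is taken is nonempty and coincides with $\{\U\mid\DE(\U)=\nabla(F)\}$.
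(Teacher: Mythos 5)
Your proposal is correct and follows essentially the same route as the paper's proof: collapse the maximum over $\de$ to $H^s_{\nabla(F)}(F)$ via Lemma~\ref{l:Hdecreasing}, identify $K^1_{\nabla(F)}(F)=\{\U\mid\DE(\U)=\nabla(F)\}$ (using $\nabla(F)<\DE(F)$ from Theorem~\ref{t:FOCAL}) in the first case, and observe in the second case that every covering other than $\U_0$ contains a piece of full diameter, so the minimum over $K(F)$ is $\DE(F)^s$. Your write-up is in fact slightly more detailed than the paper's, which asserts these identifications without spelling out the justifications.
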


\begin{proof} By Lemma~\ref{l:Hdecreasing}, $H^s(F)=H^s_{\nabla(F)}(F)$. In case $K^1(F)\neq\emptyset$, we have $\U\in K^1_{\nabla(F)}(F)$ iff $\DE(\U)=\nabla(F)$. The result follows. When $K^1(F)=\emptyset$, we have seen in the proof of Lemma~\ref{l:Hdecreasing} that $K_{\nabla(F)}(F)=K(F)$. By definition, $H^s(F)=H^s_{\U_0}(F)=\DE(F)^s$, since $H^s_{\U_0}(F)<H^s_{\U}(F)$, for all $\U\in K^2(F)$. This completes the proof.
\end{proof}

Let $\U$ be a 2-cover of $F$, $\U=\{U_1,\dots,U_n\}$. Let $\{a_1,\cdots,a_k\}=\{\DE(U_i)|U_i\in\U\}$, $1\leq k\leq n$, denote the set of \emph{distinct} diameters of the elements of $\U$. We further assume that $a_1<a_2<\cdots <a_k$. Notice that $\de(F)\leq a_1$, and $\nabla(F)\leq\DE(\U)= a_k\leq\DE(F)$. With this notation:

\begin{equation}\label{e:H}
H^s_\U(F) = \left\lbrace
  \begin{array}{l l}
    m_1 a_1^s+m_2 a_2^s+\cdots+m_k a_k^s, & \textrm{when $\U\in K^1(F)\cup K^2(F)$},\\
    \DE(F)^s, & \textrm{when $\U=\U_0\in K^0(F)$},
  \end{array}
\right.
\end{equation}
where $ m_j\geq 1$, is the number of sets $U_i$ of diameter equal to $a_j$, so that $|\U|=m_1+\cdots+m_k$. In the first row of (\ref{e:H}), $a_k<\DE(F)$ when $\U\in K^1(F)$, and $a_k=\DE(F)$ when $\U\in K^2(F)$.

\begin{lem}\label{l:decrease}
Suppose that $\DE(F)=1$, and $K^1(F)\neq \emptyset$. Let $f(s)$ denote any of the following functions:
\begin{enumerate}
\item{$H^s_\U(F)$, for any $\U\in K^1(F)$,}
\item{$H^s_\de(F)$, for any $\de\geq \nabla(F)$,}
\item{$H^s(F)$.}
\end{enumerate}
Then $f$ is a positive, strictly decreasing function, $f(0)\geq 2$, and $\lim_{s\rightarrow \infty} f(s)=0$.
\end{lem}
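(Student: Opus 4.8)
The plan is to prove (i) directly from the explicit expression (\ref{e:H}), deduce (ii) by writing $H^s_\de(F)$ as a minimum of finitely many functions of the type covered by (i), and obtain (iii) as the case $\de=\nabla(F)$ of (ii) via Lemma~\ref{l:H}. For (i), fix $\U\in K^1(F)$; then (\ref{e:H}) gives $f(s)=H^s_\U(F)=\sum_{j=1}^k m_j a_j^s$, where the distinct diameters satisfy $0<\de(F)\leq a_1<\cdots<a_k$, and membership in $K^1(F)$ forces $a_k<\DE(F)=1$, so every base lies in $(0,1)$. The four assertions then reduce to elementary facts about $s\mapsto a^s$ with $0<a<1$: positivity holds since each $m_j\geq 1$; $f'(s)=\sum_j m_j a_j^s\ln a_j<0$ because $\ln a_j<0$, giving strict decrease; $f(0)=\sum_j m_j=|\U|\geq 2$; and $a_j^s\to 0$ yields $\lim_{s\to\infty}f(s)=0$.

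For (ii), fix $\de\geq\nabla(F)$. Since $K^1(F)\neq\emptyset$, Theorem~\ref{t:FOCAL} gives $\nabla(F)<\DE(F)$, and a $2$-cover $\V$ with $\DE(\V)=\nabla(F)$ (which exists by Lemma~\ref{l:nabla}) lies in $K^1(F)$ with $\DE(\V)\leq\de$; hence $K^1_\de(F)\neq\emptyset$, and $f(s)=H^s_\de(F)=\min\{H^s_\U(F)\mid\U\in K^1_\de(F)\}$ is a minimum over a \emph{finite}, nonempty family of functions, each of the type settled in (i). Positivity and the bound $f(0)\geq 2$ pass immediately to the minimum, and the limit follows by squeezing, since $0<f(s)\leq H^s_\V(F)\to 0$. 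For strict monotonicity, given $s<s'$ I would choose a minimizer $\U^*\in K^1_\de(F)$ at the \emph{smaller} argument $s$, so that $f(s)=H^s_{\U^*}(F)$, and then estimate
\[f(s')\leq H^{s'}_{\U^*}(F)<H^s_{\U^*}(F)=f(s),\]
the strict inequality coming from (i); thus $f$ is strictly decreasing.

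Finally, (iii) is immediate, for by Lemma~\ref{l:H} we have $H^s(F)=H^s_{\nabla(F)}(F)$ when $K^1(F)\neq\emptyset$, which is exactly the function of (ii) at $\de=\nabla(F)$.

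The main obstacle I anticipate is the strict monotonicity in (ii): a pointwise minimum of strictly decreasing functions is \emph{not} handled by naive term-by-term comparison, and the correct device is to evaluate at $s'$ the minimizer chosen at $s$, so that the strict decrease of that single fixed function propagates to $f$. Everything else reduces to standard properties of $a^s$ on $(0,1)$ together with the finiteness of the $2$-covering families.
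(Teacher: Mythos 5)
Your proposal is correct and follows essentially the same route as the paper: (i) read off from the explicit expansion (\ref{e:H}) using that all bases $a_j<1$, (ii) obtained by viewing $H^s_\de(F)$ as a minimum of finitely many functions of type (i), and (iii) as the case $\de=\nabla(F)$ via Lemma~\ref{l:H}. The only difference is that you supply details the paper leaves implicit, namely the non-emptiness of $K^1_\de(F)$ and the minimizer-at-$s$ device showing that a finite pointwise minimum of strictly decreasing functions is strictly decreasing.
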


\begin{proof} Consider (i). It follows directly from (\ref{e:H}) that $f(s)$ is positive, $f(0)=|\U|\geq 2$, and $f$ is strictly decreasing, because all $a_i<1$. The limit of $f(s)$ as $s$ goes to infinity is zero, because the same is true for every summand.

Consider now (ii). Since $K^1(F)\neq\emptyset,\,\,f(s)=\mbox{min}\{H^s_\U(F)|\U\in K^1_\de(F)\}$. Thus $f(s)$ is the minimum of a finite number of functions that satisfy all required conditions, by (i). Hence so does $f(s)$, as desired. Finally, (iii) is the special case of (ii) when $\de=\nabla(F)$, by Lemma~\ref{l:H}. This completes the proof.
\end{proof}

\subsection{H\"{o}lder equivalences.} \label{s:Holder}
In this section we study the behavior of $H^s(-)$ with respect to H\"{o}lder \emph{equivalences}, and note that the usual relaxations (H\"{o}lder condition, Lipschitz and bi-Lipschitz condition) impose essentially no condition on finite metric spaces.

\begin{defn} A function $\eta:(X,d)\rightarrow (X',d')$ is called a \emph{H\"{o}lder equivalence} if there is $r,\beta>0$ such that
\[d'(\eta(x_1),\eta(x_2))=r d(x_1,x_2)^\beta\]
for all $x_1,x_2\in X$. We say that $\eta$ is \emph{$(r,\beta)$-H\"{o}lder}, or an \emph{$(r,\beta)$-H\"{o}lder equivalence}. In the special case when $\beta=1$, we say that $\eta$ is a \emph{similarity}, or an \emph{r}-similarity.
\end{defn}

\begin{exa}\label{ex:fold}
This example is obtained by "folding" an equally spaced linear set. Let $F_n:=\{x_0,\dots,x_{n-1}\}\subseteq \R$ consist of the following $n$ points: $x_i=i,\, (i=0,\dots,n-1)$. Then $d(x_i,x_{i+j})=|j|$. Consider the space $F_n':=\{y_0,\dots,y_{n-1}\}\subseteq \R^n$, where
\[y_i:=(\overbrace{1,\dots,1}^i,0,\dots,0) .\]
Then for $i,j\geq 0$,
\[d'(y_i,y_{i+j})=\|(\overbrace{0,\dots,0}^i,\overbrace{1,\dots,1}^j,0,\dots,0)\|= \sqrt{j}.\]
Define $\eta:F_n\rightarrow F_n'$, by $\eta(x_i):=y_i$. Then $d'(\eta(x_i),\eta(x_{i+j}))=\sqrt{j}$, and $d(x_i,x_{i+j})=j$. In other words, $\eta$ is $(1,1/2)$-H\"{o}lder.
\end{exa}

\begin{lem}\label{l:(r,a)-H}
Suppose that $\eta:X\rightarrow X'$ is $(r,\beta)$-H\"{o}lder. Then:
\begin{enumerate}
\item{For all $Y\subseteq X$, $\eta:Y\rightarrow \eta(Y)$ is a bijection, and its inverse $\eta'$ is $(r^{-1/\beta},1/\beta)$-H\"{o}lder.}
\item{If $F\subseteq X$ is finite, then $\DE(\eta(F))=r\DE(F)^\beta$.}
\item{If $F\subseteq X$ is finite and has focal points, then so does $\eta(F)\subseteq X'$.}
\item{Let $F\subseteq X$ be finite. Then $F$ has focal points iff $\eta(F)$ has focal points.}
\end{enumerate}
\end{lem}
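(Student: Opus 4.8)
The plan is to prove the four parts in sequence, since each feeds into the next. The whole argument rests on the fact that a H\"{o}lder equivalence is defined by an \emph{equality} $d'(\eta(x_1),\eta(x_2))=r\,d(x_1,x_2)^\beta$ rather than a two-sided inequality, so that distances transform exactly under $\eta$.

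For (i), I would first note injectivity: if $\eta(x_1)=\eta(x_2)$ then $r\,d(x_1,x_2)^\beta=d'(\eta(x_1),\eta(x_2))=0$, and since $r>0$ this forces $d(x_1,x_2)=0$, i.e. $x_1=x_2$; hence $\eta$ restricts to a bijection $Y\to\eta(Y)$ for every $Y\subseteq X$. To pin down the inverse $\eta'$, write $y_i=\eta(x_i)$ and solve the defining equation for $d(x_1,x_2)$, obtaining $d(\eta'(y_1),\eta'(y_2))=r^{-1/\beta}\,d'(y_1,y_2)^{1/\beta}$, which is precisely the $(r^{-1/\beta},1/\beta)$-H\"{o}lder condition.

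Part (ii) follows from the monotonicity of $t\mapsto r\,t^\beta$ on $[0,\infty)$ (valid because $r,\beta>0$): since $F$ is finite, $\DE(F)$ is attained at some pair $x,y\in F$, and applying this strictly increasing map to each distance $d'(\eta(x),\eta(y))=r\,d(x,y)^\beta$ leaves the maximizing pair unchanged, so $\DE(\eta(F))=r\,\DE(F)^\beta$. For (iii), I would take a focal point $a_0$ of $F$, so that $d(a_0,a)=\DE(F)$ for all $a\neq a_0$; by the injectivity from (i), the points $\eta(a)$ with $\eta(a)\neq\eta(a_0)$ are exactly those with $a\neq a_0$, and for each such $a$ we get $d'(\eta(a_0),\eta(a))=r\,d(a_0,a)^\beta=r\,\DE(F)^\beta=\DE(\eta(F))$ using (ii), so $\eta(a_0)$ is focal in $\eta(F)$. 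The forward implication of (iv) is just (iii); for the converse I would exploit the symmetry supplied by (i), namely that $\eta'$ is again a H\"{o}lder equivalence with $\eta'(\eta(F))=F$, and apply (iii) to $\eta'$.

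There is no genuinely hard step here: the computations are essentially one line each. The only points requiring care are that $\eta$ being a H\"{o}lder \emph{equivalence} (equality, not inequality) is what makes both the inverse and the diameter transform exactly rather than merely up to multiplicative constants, and that the cleanest route to (iv) is to reuse (iii) for the inverse map $\eta'$ rather than to argue the converse directly.
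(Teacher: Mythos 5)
Your proposal is correct and follows essentially the same route as the paper: injectivity plus solving the defining equality for the inverse in (i), exact transformation of the diameter in (ii) (your monotonicity phrasing is just a compact version of the paper's two-inequality argument), direct transport of the focal-point condition in (iii), and applying (iii) to the inverse map $\eta'$ for the converse in (iv). No gaps.
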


\begin{proof}
(i) It is obvious from the definition that $\eta$ is injective. Let $\eta':\eta(Y)\rightarrow Y$ be the inverse of $\eta$, and let $x',y'\in \eta(Y)$. Then $x'=\eta(x),\,y'=\eta(y)$ for unique $x,\,y\in Y$, and $d'(\eta'(x'),\eta'(y'))=d(x,y)=[(1/r)d'(\eta(x),\eta(y))]^{1/\beta}$. (i) follows immediately.

(ii) Suppose $\DE(\eta(F))=d'(\eta(x_1),\eta(x_2))$, for $x_1,x_2\in F$. Then $\DE(\eta(F))=rd(x_1,x_2)^\beta\leq r\DE(F)^\beta$. For the reverse inequality, assume  $\DE(F)= d(u_1,u_2),\,u_i\in F$. Then $rd(u_1,u_2)^\beta=d'( \eta(u_1),\eta(u_2))\leq \DE(\eta(F))$, as desired.

(iii) Suppose $x_0\in F$ is focal. Then $d(x_0,x)=\DE(F)$, for all $x\neq x_0$. Hence $d'(\eta(x_0),\eta(x))=rd(x_0,x)^\beta=r\DE(F)^\beta=\DE(\eta(F))$, by (ii). This shows that $\eta(x_0)$ is a focal point of $\eta(F)$, because every $x'\in \eta(F)$ different from $\eta(x_0)$, is of the form $\eta(x)$ for some $x\in F$, $x\neq x_0$.

Finally, (iv) follows immediately from (i) and (iii).
\end{proof}

\begin{lem}\label{l:simi}
Let $\eta:X\rightarrow X'$ be $(r,\beta)$-H\"{o}lder, and $F\subseteq X$ a finite set. Then:
\begin{enumerate}
\item{$\eta$ induces bijections:
\begin{enumerate}
\item{$\eta_*:K(F)\rightarrow K(\eta(F))$,}
\item{$\eta_*:K_\de(F)\rightarrow K_{r\de^\beta}(\eta(F))$, for all $\de\geq\nabla(F)$,}
\item{$\eta_*:K^1_\de(F)\rightarrow K^1_{r\de^\beta}(\eta(F))$, for all $\de\geq\nabla(F)$.}
\end{enumerate}}
\item{$|\eta_*(\U)|=|\U|$.}
\item{$\DE(\eta_*(\U))=r\,\DE(\U)^\beta$.}
\item{$\nabla(\eta(F))=r\,\nabla(F)^\beta$.}
\end{enumerate}
\end{lem}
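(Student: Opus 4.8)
The plan is to define the induced map on $2$-coverings by $\eta_*(\U):=\{\eta(U_i)\mid U_i\in\U\}$ (where $\eta(U_i)$ is the image set) and then read off all four claims from two facts already in Lemma~\ref{l:(r,a)-H}: that $\eta$ is injective with inverse $\eta'$ an $(r^{-1/\beta},1/\beta)$-H\"older equivalence (part (i)), and that $\DE(\eta(U))=r\DE(U)^\beta$ for finite $U\subseteq X$ (part (ii)). The single observation tying everything together is that $\phi(t):=rt^\beta$ is \emph{strictly increasing} on $[0,\infty)$, since $r,\beta>0$; this is what transports maxima, minima, and strict inequalities through $\eta$.

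First I would check $\eta_*$ is well-defined into $K(\eta(F))$: coverage follows from $\eta(F)=\bigcup_i\eta(U_i)$, and injectivity gives $|\eta(U_i)|=|U_i|\geq 2$, so each image set has positive diameter. That same injectivity makes $U\mapsto\eta(U)$ injective on subsets of $F$, so distinct sets of $\U$ have distinct images; this is claim (2), $|\eta_*(\U)|=|\U|$. For the bijection (1a) I would exhibit $(\eta')_*$ as a two-sided inverse: applying Lemma~\ref{l:(r,a)-H}(i) to $\eta'$ shows $(\eta')_*$ maps $K(\eta(F))$ into $K(F)$, and $\eta'\circ\eta=\mathrm{id}_F$ gives $(\eta')_*\circ\eta_*=\mathrm{id}$ and symmetrically $\eta_*\circ(\eta')_*=\mathrm{id}$. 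Claim (3) is then immediate: by Lemma~\ref{l:(r,a)-H}(ii) each $\DE(\eta(U_i))=\phi(\DE(U_i))$, and since $\phi$ is increasing the maximum is attained at the same index, so $\DE(\eta_*(\U))=\phi(\max_i\DE(U_i))=\phi(\DE(\U))=r\DE(\U)^\beta$.

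With (3) in hand the rest is bookkeeping. For (1b): $\U\in K_\de(F)$ means $\DE(\U)\leq\de$, whence $\DE(\eta_*(\U))=\phi(\DE(\U))\leq\phi(\de)=r\de^\beta$, placing $\eta_*(\U)$ in $K_{r\de^\beta}(\eta(F))$; using that $\eta'$ is $(r^{-1/\beta},1/\beta)$-H\"older one checks $r^{-1/\beta}(r\de^\beta)^{1/\beta}=\de$, so $(\eta')_*$ sends $K_{r\de^\beta}(\eta(F))$ back into $K_\de(F)$ and the bijection of (1a) restricts correctly. For (1c) I would use the characterization $K^1(F)=\{\U\mid\DE(\U)<\DE(F)\}$ from Section~\ref{s:2}: since $\DE(\eta(F))=\phi(\DE(F))$ by Lemma~\ref{l:(r,a)-H}(ii), strict monotonicity of $\phi$ gives $\DE(\U)<\DE(F)\iff\DE(\eta_*(\U))<\DE(\eta(F))$, so $\eta_*$ matches $K^1$ with $K^1$ and restricts to the claimed bijection on $K^1_\de$. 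Finally for (4) I would combine (1a) and (3) to compute $\nabla(\eta(F))=\min_{\V\in K(\eta(F))}\DE(\V)=\min_{\U\in K(F)}\phi(\DE(\U))=\phi\big(\min_{\U}\DE(\U)\big)=r\nabla(F)^\beta$, pulling $\phi$ outside the minimum once more because it is increasing. There is no serious obstacle here; the only point needing care is to invoke the strict monotonicity of $\phi$ consistently—so that maxima, the minimum defining $\nabla$, and the strict inequalities defining $K^1$ all transfer cleanly—and to use the H\"older inverse $\eta'$ from Lemma~\ref{l:(r,a)-H}(i) for bijectivity rather than an ad hoc set-theoretic inverse.
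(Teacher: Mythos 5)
Your proposal is correct and follows essentially the same route as the paper: define $\eta_*(\U)$ as the family of image sets, use Lemma~\ref{l:(r,a)-H}(i) to get the H\"older inverse $\eta'$ and hence the two-sided inverse $\eta'_*$, use Lemma~\ref{l:(r,a)-H}(ii) to transport diameters, and compute $\nabla(\eta(F))$ by pulling $r(\cdot)^\beta$ through the minimum over $K(F)$. Your explicit appeal to the strict monotonicity of $\phi(t)=rt^\beta$ just makes transparent the steps the paper labels "obvious" or "immediate" (parts (ii), (iii), and the preservation of the strict inequality defining $K^1$), so there is nothing to correct.
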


\begin{proof}
(i) For $\U=\{U_1,\dots,U_n\}\in K(F)$, define $\eta_*(\U):=\{\eta(U_1),\dots,\eta(U_n)\}$. Then $\eta_*(\U)$ is a 2-covering because $\eta|F:F\rightarrow \eta(F)$ is bijective, with inverse $\eta'$. Indeed, if $F=\bigcup U_i$, then $\eta(F)=\bigcup \eta(U_i)$, and $|\eta(U_i)|=|U_i|\geq 2$, as required. To see that $\eta_*$ is bijective, recall that $\eta'$ is $(r^{-1/\beta},1/\beta)$-H\"{o}lder, by Lemma~\ref{l:(r,a)-H}(i). We then have $\eta'_*:K_{\de'}(\eta(F)) \rightarrow K_{(\de'/r)^{1/\beta}}(F)$ and, clearly, $\eta_*$ and $\eta'_*$ are inverse to each other. This completes the proof of (a). Using Lemma~\ref{l:(r,a)-H}(ii), if $\U\in K_\de(F)$, then  $\DE(\eta(U_i))=r\DE(U_i)^\beta\leq r\de^\beta$. Hence $\eta_*(\U)\in K_{r \de^\beta}(F)$, as desired. Similarly, $\eta_*(K^1(F))\subseteq K^1(\eta(F))$, because $\DE(\U)<\DE(F)$ implies $\DE(\eta(\U))=r\DE(\U)^\beta<r\DE(F)^\beta=\DE(\eta(F))$.

(ii) is obvious from the definition of $\eta_*$, and (iii) follow immediately from (i) and Lemma~\ref{l:(r,a)-H}. To prove (iv), we use (i) and (iii):

\begin{equation}\nonumber
\begin{array}{l l}
 \nabla(\eta(F)) &=  \textrm{min }\{\DE(\V)|\V\in K(\eta(F))\}\\
    &=\textrm{min }\{\DE(\eta_*(\U))|\U\in K(F)\}\\
    &=\textrm{min }\{r(\DE(\U))^\beta|\U\in K(F)\}\\
    &=r\nabla(F)^\beta.
  \end{array}
\end{equation}
This proves (iv), and concludes the proof of the lemma.
\end{proof}

\begin{prop}\label{p:H-simi}
Let $\eta:X\rightarrow X'$ be $(r,\beta)$-H\"{o}lder, and $F\subseteq X$ a finite space. Then, for all $s\in[0,\infty)$:
\begin{enumerate}
\item{$H^s_{\eta_*(\U)}(\eta(F)) = r^s H^{s\beta} \U(F)$, for all $\U\in K(F)$.}
\item{$H^s_{r\de^\beta}(\eta(F)) = r^s H^{s\beta} \de(F)$, for all $\de\geq \nabla(F)$.}
\item{$H^s(\eta(F)) = r^s H^{s\beta}(F)$.}
\end{enumerate}
\end{prop}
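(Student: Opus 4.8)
The plan is to prove the three identities in order, obtaining (ii) from (i) and (iii) from (ii), so that the only actual computation occurs in (i). For (i) I would start from the definition $H^s_{\eta_*(\U)}(\eta(F))=\sum_{V\in\eta_*(\U)}\DE(V)^s$. Writing $\U=\{U_1,\dots,U_n\}$ so that $\eta_*(\U)=\{\eta(U_1),\dots,\eta(U_n)\}$ (these are distinct sets since $\eta$ is injective by Lemma~\ref{l:(r,a)-H}(i), and $|\eta_*(\U)|=|\U|$ by Lemma~\ref{l:simi}(ii)), I apply Lemma~\ref{l:(r,a)-H}(ii) to each finite $U_i\subseteq F$ to get $\DE(\eta(U_i))=r\,\DE(U_i)^\beta$. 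Raising to the $s$-th power gives $\DE(\eta(U_i))^s=r^s\DE(U_i)^{s\beta}$, and summing over $i$ yields $r^s\sum_i\DE(U_i)^{s\beta}=r^s H^{s\beta}_\U(F)$, which is the claim. Nothing beyond Lemma~\ref{l:(r,a)-H}(ii) and the definition of $H^s_\U$ is needed.

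For (ii), fix $\de\ge\nabla(F)$; then $r\de^\beta\ge r\nabla(F)^\beta=\nabla(\eta(F))$ by Lemma~\ref{l:simi}(iv), so the left-hand side is defined. Since H\"older equivalences preserve focal points (Lemma~\ref{l:(r,a)-H}(iv)), Theorem~\ref{t:FOCAL} guarantees $K^1(F)\neq\emptyset$ if and only if $K^1(\eta(F))\neq\emptyset$, so $F$ and $\eta(F)$ fall under the same branch of Definition~\ref{d:H}. In the non-focal branch, $H^s_{r\de^\beta}(\eta(F))=\min\{H^s_\V(\eta(F))\mid \V\in K^1_{r\de^\beta}(\eta(F))\}$, and Lemma~\ref{l:simi}(i)(c) provides a bijection $\eta_*:K^1_\de(F)\to K^1_{r\de^\beta}(\eta(F))$. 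Reindexing the minimum along this bijection and substituting part (i) replaces the minimand by $r^s H^{s\beta}_\U(F)$; since $r^s>0$ the factor pulls out of the finite minimum, giving $r^s\min\{H^{s\beta}_\U(F)\mid\U\in K^1_\de(F)\}=r^s H^{s\beta}_\de(F)$. The focal branch is identical, using instead the bijection $\eta_*:K(F)\to K(\eta(F))$ of Lemma~\ref{l:simi}(i)(a) and the min over all of $K(F)$.

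For (iii) I would specialize (ii) to $\de=\nabla(F)$. By Lemma~\ref{l:H} (equivalently, because $H^s_\de$ is decreasing in $\de$ by Lemma~\ref{l:Hdecreasing}) one has $H^s(F)=H^s_{\nabla(F)}(F)$ and $H^s(\eta(F))=H^s_{\nabla(\eta(F))}(\eta(F))$, while $\nabla(\eta(F))=r\nabla(F)^\beta$ by Lemma~\ref{l:simi}(iv). Hence $H^s(\eta(F))=H^s_{r\nabla(F)^\beta}(\eta(F))=r^s H^{s\beta}_{\nabla(F)}(F)=r^s H^{s\beta}(F)$, the middle equality being part (ii) at $\de=\nabla(F)$.

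The step I expect to be the main obstacle is making the case split in (ii) legitimate: one must be sure that $\eta(F)$ lands in the same branch of the definition of $H^s_\de$ as $F$ does, which is exactly the focal-point invariance of Lemma~\ref{l:(r,a)-H}(iv) combined with Theorem~\ref{t:FOCAL}, and that the index set of the minimum transforms correctly under $\eta_*$ while the exponent on each diameter changes from $s$ to $s\beta$. Everything else reduces to pulling the positive constant $r^s$ through a finite minimum.
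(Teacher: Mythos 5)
Your proposal is correct and takes essentially the same approach as the paper: (i) by direct computation from Lemma~\ref{l:(r,a)-H}(ii), (ii) by transporting 2-coverings via $\eta_*$ with the same focal/non-focal case split (the paper phrases your reindexing of the minimum as a pair of inequalities using $\eta_*$ and $\eta'_*$, which amounts to the same thing), and (iii) by specializing to $\de=\nabla(F)$ via Lemma~\ref{l:H} and $\nabla(\eta(F))=r\nabla(F)^\beta$. If anything, your version is slightly tighter, since you make explicit the branch-matching point (focal-point invariance via Lemma~\ref{l:(r,a)-H}(iv)) that the paper uses only implicitly, and your uniform treatment of (iii) covers the focal case the paper leaves to the reader.
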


\begin{proof}
(i) Let $\U=\{U_i\}\in K(F)$. By Lemma~\ref{l:simi}, $\eta_*(\U)\in K(\eta(F))$, and:
\[H^s_{\eta_*(\U)}(\eta(F))=\sum \DE(\eta(U_i))^s= \sum r^s\DE(U_i)^{s\beta}=r^sH^{s\beta} \U(F).\]
(ii) Given $s,\de$, suppose $H^{s\beta} \de(F)=H^{s\beta} \U(F)$, where (a) $\U\in K^1_\de(F)$, when $F$ has no focal points, and (b) $\U\in K(F)$, otherwise. We consider (a) first. By (i) and Lemma~\ref{l:simi}, $
H^s_{\eta_*(\U)}(\eta(F))=r^sH^{s\beta} \de(F)$. Since $\eta_*(\U)\in
K^1_{r\de^\beta}(F)$, $H^s_{r\de^\beta}(\eta(F))\leq H^s_{\eta_*(\U)}(\eta(F))=r^s H^{s\beta} \de(F)$. For the reverse inequality, let $H^s_{r\de^\beta}(\eta(F))=H^s_\V(\eta(F))$, for some $\V\in K^1_{r\de^\beta}(\eta(F))$. Using Lemma~\ref{l:simi}, $\eta'_*(\V)\in K^1_\de(F)$, and
\[
H^{s\beta} \de(F)\leq H^{s\beta} {\eta'_*(\V)}(F)=\frac  {1}{r^s}H^s_\V(\eta(F))=
\frac  {1}{r^s}H^s_{r\de^\beta}(\eta(F)).
\]
This completes the proof of (ii) in case (a). The proof in case (b) is similar: we need only use the fact that now $\eta_*(\U)\in K(\eta(F))$ and, if $\V\in K(\eta(F))$, then $\eta'_*(\V)\in K(F)$.

(iii) is a special case of (ii). Here are the details. Suppose first that $F$ has no focal points. By Lemma~\ref{l:(r,a)-H}(iv), the same is true of $\eta(F)$. According to Lemma ~\ref{l:H}, $H^{s\beta}(F)=H^{s\beta} {\nabla(F)}(F)$, and $H^s(\eta(F))=H^s_{\nabla(\eta(F))}(\eta(F))$. By Lemma~\ref{l:simi}(iv), $\nabla(\eta(F))=r\nabla(F)^\beta$. Using (ii),
\[
H^{s\beta} {\nabla(F)}(F)=\frac  {1}{r^s}H^s_{r(\nabla(\eta(F)))^\beta}(\eta(F)) =\frac  {1}{r^s}H^s_{\nabla(\eta(F))}(\eta(F))
\]
Hence, $r^sH^{s\beta}(F)=H^s(\eta(F))$, as desired. The case when $F$ has focal points will be left to the reader. This completes the proof.
\end{proof}

Recall the following relaxations of H\"{o}lder equivalence and of similarity, defined here for arbitrary metric spaces.

\begin{defn}\label{d:holder}
Let $X,X'$ be metric spaces, $\eta:X\rightarrow X'$ a function, and $r,\beta>0$. Then:
\begin{enumerate}
\item{$\eta$ satisfies a \emph{H\"{o}lder condition}, or an $(r,\beta)$\emph{-H\"{o}lder condition}, if \[d'(\eta(x),\eta(y))\leq rd(x,y)^\beta.\]}
\item{$\eta$ is \emph{Lipschitz} if it satisfies an $(r,1)$-H\"{o}lder condition, for some $r>0$.}
\item{$\eta$ is \emph{bi-Lipschitz} if \[r_1d(x,y)\leq d'(\eta(x),\eta(y))\leq r_2d(x,y),\] for some $r_1,r_2>0$. We say that $X$ and $\eta(X)$ are \emph{Lipschitz equivalent}.}
\end{enumerate}
\end{defn}

It turns out that these relaxations are not so interesting in the finite case as they are in the classical case. This is shown by the following lemma, whose easy proof we leave to the reader.

\begin{lem}\label{l:simNOcond}
Suppose $\eta:F\rightarrow X'$ is a function defined on a finite $F\subseteq X$. Then
\begin{enumerate}
\item{Any such $\eta$ is Lipschitz.}
\item{$\eta$ is bi-Lipschitz iff it is injective.}
\item{$F$ and $\eta(F)$ are Lipschitz equivalent iff $|F|=|\eta(F)|$.}
\end{enumerate}
\end{lem}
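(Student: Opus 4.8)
The plan is to exploit the single structural fact that $F$ is finite with at least two points, so that the set of ratios
\[
R:=\Big\{\frac{d'(\eta(x),\eta(y))}{d(x,y)}\ \Big|\ x,y\in F,\ x\neq y\Big\}
\]
is a \emph{finite, non-empty} set of non-negative reals, and hence attains both a maximum and a minimum. Here $d,d'$ are the metrics of $X,X'$, and $d(x,y)\geq\de(F)>0$ for $x\neq y$, so each denominator is positive. All three parts then follow by taking the Lipschitz constants to be suitable extrema of $R$.

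For (i), set $r:=1+\max R$; then $r>0$, and for $x\neq y$ we have $d'(\eta(x),\eta(y))\leq (\max R)\,d(x,y)\leq r\,d(x,y)$, while for $x=y$ both sides vanish. Thus $\eta$ satisfies an $(r,1)$-H\"older condition, i.e. it is Lipschitz. For (ii), the forward implication is immediate: if $\eta$ is bi-Lipschitz with lower constant $r_1>0$, then $x\neq y$ forces $d'(\eta(x),\eta(y))\geq r_1 d(x,y)>0$, so $\eta(x)\neq\eta(y)$ and $\eta$ is injective. Conversely, if $\eta$ is injective, the upper bound is supplied by (i), and injectivity makes \emph{every} element of $R$ strictly positive; since $R$ is finite and non-empty, $r_1:=\min R>0$ gives $d'(\eta(x),\eta(y))\geq r_1 d(x,y)$ for all $x,y$, so $\eta$ is bi-Lipschitz.

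For (iii), I would first recall that $F$ and $\eta(F)$ are Lipschitz equivalent precisely when there exists a bi-Lipschitz bijection $\phi\colon F\to\eta(F)$. A bi-Lipschitz map is injective by the argument used in (ii), and a bijection between finite sets forces $|F|=|\eta(F)|$, proving the forward direction. Conversely, if $|F|=|\eta(F)|$ one may pick any bijection $\phi\colon F\to\eta(F)$; being injective, $\phi$ is bi-Lipschitz by part (ii) applied to $\phi$ in place of $\eta$, and therefore $F$ and $\eta(F)$ are Lipschitz equivalent.

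The argument is entirely elementary, and the only point requiring care is the positivity of the constants $r$ and $r_1$. This is exactly where finiteness is essential: a non-empty finite set of positive reals is bounded away from both $0$ and $\infty$, so $\max R$ and $\min R$ are attained and lie in $(0,\infty)$. This is the crux of all three parts, and it is precisely what fails in the classical infinite setting, explaining why these relaxations of H\"older equivalence carry no real content for finite spaces.
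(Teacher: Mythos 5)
Your proof is correct, and it is exactly the intended argument: the paper explicitly leaves this proof to the reader ("whose easy proof we leave to the reader"), and the evident route is the one you take, namely that finiteness of $F$ makes the set of ratios $d'(\eta(x),\eta(y))/d(x,y)$ a finite set whose maximum and (under injectivity) strictly positive minimum serve as the Lipschitz constants. Your care in forcing $r>0$ via $1+\max R$ (covering constant maps) and in applying part (ii) to an arbitrary bijection $\phi$ in part (iii) handles the only delicate points.
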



\subsection{Definition of $\mbox{dim} \hh(F)$.} \label{ss:DIMfH}
We define \emph{finite Hausdorff dimension}, $\mbox{dim} \hh(F)$, by solving the equation:
\begin{equation}\label{e:defDIM}
H^s(F)=\DE(F)^s.
\end{equation}
Equation (\ref{e:defDIM}) has exactly one solution $s_0\in(0,\infty)$, precisely when $F$ has no focal points. More generally, we have:

\begin{prop}\label{p:uniqueSol}
Consider the following equations:
\begin{enumerate}
\item{$\DE(F)^s=H^s_\U(F)$, for all $\U\in K(F)$,}
\item{$\DE(F)^s=H^s_\de(F)$, for all $\de\geq \nabla(F)$,}
\item{$\DE(F)^s=H^s(F)$.}
\end{enumerate}
Then, in each of these cases, the equation has a unique solution iff $F$ has no focal points. When this is the case, the solutions are positive real numbers, and will be denoted, respectively, $s_\U,\,s_\de$, and $s_0$.
\end{prop}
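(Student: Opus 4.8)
The plan is to collapse all three cases to a single analytic fact about one normalized function, read off uniqueness from strict monotonicity and existence from the intermediate value theorem, and treat the focal-point case separately as a degeneracy. First I would \emph{normalize}: dividing every distance by $\DE(F)$ is a similarity of ratio $r=1/\DE(F)$, which by Lemma~\ref{l:(r,a)-H}(iv) neither creates nor destroys focal points, and which by Proposition~\ref{p:H-simi} sends $H^s_\bullet(F)$ to $\DE(F)^{-s}H^s_\bullet(F)$. Hence the equation $\DE(F)^s=H^s_\bullet(F)$ is equivalent to $H^s_\bullet(F')=1$ for the rescaled space $F'$ with $\DE(F')=1$. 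So I may assume $\DE(F)=1$ throughout, and in each case the equation to solve becomes simply $f(s)=1$, where $f(s)$ is $H^s_\U(F)$, $H^s_\de(F)$, or $H^s(F)$ respectively.

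For the ``if'' direction, suppose $F$ has no focal points. By Theorem~\ref{t:FOCAL} this is equivalent to $K^1(F)\neq\emptyset$, which is precisely the hypothesis of Lemma~\ref{l:decrease}. That lemma gives that $f$ is strictly decreasing with $f(0)\geq 2$ and $f(s)\to 0$ as $s\to\infty$; moreover $f$ is continuous, being a finite sum (or a finite minimum of finite sums) of the continuous functions $s\mapsto a_j^s$ from equation~(\ref{e:H}). Strict monotonicity forbids two solutions, while $f(0)\geq 2>1>0=\lim_{s\to\infty}f(s)$ together with continuity and the intermediate value theorem forces exactly one solution, which is positive since $f(0)>1$. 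For case (i) I read Lemma~\ref{l:decrease}(i) literally, taking $\U\in K^1(F)$ (the only covers for which the equation is nondegenerate); for (ii) and (iii) I invoke parts (ii) and (iii) of that lemma.

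For the ``only if'' direction I show that focal points make the equation degenerate. If $F$ has focal points then $K^1(F)=\emptyset$ by Theorem~\ref{t:FOCAL}. In case (iii), Lemma~\ref{l:H} gives $H^s(F)=\DE(F)^s$ for every $s$, so the equation is an identity and every $s\geq 0$ solves it; the same holds in case (ii), since for $\de\geq\nabla(F)$ the minimum defining $H^s_\de(F)$ is then taken over all of $K(F)$ and is attained at $\U_0=\{F\}$, giving $H^s_\de(F)=\DE(F)^s$. In case (i) the behaviour depends on the cover: for $\U_0$ the equation holds identically, whereas for $\U\in K^2(F)$ one has $H^s_\U(F)=\DE(F)^s+(\text{strictly positive terms})>\DE(F)^s$ for all $s$, so there is no solution at all. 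In each case uniqueness fails, completing the equivalence.

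The computations are routine; the two points requiring care are the degenerate case and continuity. The main (modest) obstacle is organizing the ``only if'' direction so that each of the three equations is genuinely seen to fail uniqueness — in particular recognizing, via Lemma~\ref{l:H} and the collapse of the defining minimum onto $\U_0$, that a single focal point turns $H^s_\bullet(F)$ into the bare $\DE(F)^s$, so that the equation carries no information. I would also make explicit that continuity of $f$ (not asserted in Lemma~\ref{l:decrease} but immediate from its proof) is exactly what lets me apply the intermediate value theorem for \emph{existence}, while strict monotonicity alone supplies only \emph{uniqueness}.
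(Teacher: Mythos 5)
Your proposal is correct and follows essentially the same route as the paper: normalize to $\DE(F)=1$ via a similarity (Proposition~\ref{p:H-simi}), get existence and uniqueness in the focal-point-free case from the monotonicity and limits in Lemma~\ref{l:decrease}, and dispose of the focal-point case by observing that $\U_0$ yields an identity (infinitely many solutions), covers in $K^2(F)$ yield no solution, and Lemma~\ref{l:H} collapses cases (ii) and (iii) to $\DE(F)^s$. Your two explicit refinements --- flagging that continuity (implicit in the proof of Lemma~\ref{l:decrease}) is what powers the intermediate value theorem, and reading case (i) as restricted to the nondegenerate covers $\U\in K^1(F)$ --- are points the paper handles tacitly, so this is the same proof, stated slightly more carefully.
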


\begin{proof}
Suppose $F$ is a subspace of $(X,d)$. The identity map:
\[
\textrm{id}_X:(X,d)\rightarrow (X,\frac  {1}{r} d)
\]
is an $r^{-1}$-similarity. To prove (i), note that by Proposition ~\ref{p:H-simi} (i), $H^s_{(\textrm{id}_X)_*(\U)}(\textrm{id}_X(F)) = r^{-s} H^s_\U(F)$. Taking $r=\DE(F)$, we see that $\DE(F)^s=H^s_\U(F)$ is equivalent to:
\begin{equation}~\label{e:equiv}
\DE(\textrm{id}_X(F))^s=1=\frac  {1}{\DE(F)^s}H^s_\U(F)=H^s_{(\textrm{id}_X)_*(\U)} (\textrm{id}_X(F)).
\end{equation}
It follows from (\ref{e:equiv}) that in the proof of (i) we may assume, without loss of generality, that $\DE(F)=1$. Consider first the reverse implication. If $F$ has no focal points, Lemma~\ref{l:decrease} guarantees the existence of a unique $s_\U\in (0,\infty)$ such that $H^{s_\U} \U(F)=1$, as desired. To prove the direct implication, suppose that $K^1(F)= \emptyset$. Using  Def.~\ref{d:H} we see that the equation in (i) has infinitely many solutions when $\U\in K^0(F)$, and no solution when $\U\in K^2(F)$. This completes the proof of (i).

The proof of (ii) is completely analogous, except, perhaps, for the last part. So assume $K^1(F)=\emptyset$. Then $H^s_\de(F)=H^s(F)=\DE(F)^s$, by Lemma ~\ref{l:H}, and the equation has infinitely many solutions, as before. Finally, (iii) is a special case of (ii). This completes the proof.
\end{proof}

\begin{lem}
Suppose $F$ has no focal points. Then
\begin{enumerate}
\item{$s_0=s_{\nabla(F)}$.}
\item{$s_0=\emph{max}\{s_\de|\de\geq \nabla(F)\}$.}
\item{$s_\de=\emph{min}\{s_\U|\U\in K^1_\de(F)\}$, for all $\de\geq \nabla(F)$.}
\end{enumerate}
\end{lem}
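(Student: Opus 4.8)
The plan is to reduce all three claims to a single observation: after normalizing $\DE(F)=1$, each of the functions $H^s_\U(F)$ (for $\U\in K^1(F)$), $H^s_\de(F)$, and $H^s(F)$ is strictly decreasing in $s$ and meets the value $1$ exactly once, so that $s_\U$, $s_\de$, and $s_0$ are precisely these unique crossing points. First I would invoke the similarity reduction already used in the proof of Proposition~\ref{p:uniqueSol}: the identity $\mathrm{id}_X:(X,d)\to(X,\tfrac1r d)$ is a similarity, and by Proposition~\ref{p:H-simi} and Lemma~\ref{l:(r,a)-H}(ii) the defining equations $\DE(F)^s=H^s_\U(F)$, $\DE(F)^s=H^s_\de(F)$, $\DE(F)^s=H^s(F)$ transform so that their roots are preserved; taking the scale factor so that $\DE(F)=1$, they become $H^s_\U(F)=1$, $H^s_\de(F)=1$, and $H^s(F)=1$. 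Since $F$ has no focal points, $K^1(F)\neq\emptyset$ by Theorem~\ref{t:FOCAL}, so Lemma~\ref{l:decrease} applies: each of these functions is positive, strictly decreasing, with value $\geq2$ at $s=0$ and limit $0$ as $s\to\infty$. Hence each equation has the unique root guaranteed by Proposition~\ref{p:uniqueSol}, and $s_\U,s_\de,s_0$ are exactly those roots.

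For (i), I would note that Lemma~\ref{l:H} gives the equality of functions $H^s(F)=H^s_{\nabla(F)}(F)$ whenever $K^1(F)\neq\emptyset$. The equation $H^s(F)=1$ defining $s_0$ and the equation $H^s_{\nabla(F)}(F)=1$ defining $s_{\nabla(F)}$ are then literally identical, so $s_0=s_{\nabla(F)}$.

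For (ii), the key input is the $\de$-monotonicity of Lemma~\ref{l:Hdecreasing}: if $\nabla(F)\leq\de\leq\de'$ then $H^s_\de(F)\geq H^s_{\de'}(F)$ for every fixed $s$. Evaluating at $s=s_{\de'}$ gives $H^{s_{\de'}}_\de(F)\geq H^{s_{\de'}}_{\de'}(F)=1=H^{s_\de}_\de(F)$, and since $s\mapsto H^s_\de(F)$ is strictly decreasing this forces $s_\de\geq s_{\de'}$. Thus $\de\mapsto s_\de$ is non-increasing on $[\nabla(F),\infty)$, its maximum is attained at $\de=\nabla(F)$, and $\max\{s_\de\mid\de\geq\nabla(F)\}=s_{\nabla(F)}=s_0$ by (i). For (iii), I would use the defining identity $H^s_\de(F)=\min\{H^s_\U(F)\mid\U\in K^1_\de(F)\}$ (a finite minimum, as $K(F)$ is finite) with the same crossing-point idea: setting $s^*:=\min\{s_\U\mid\U\in K^1_\de(F)\}$, attained at some $\U^*$, one has $H^{s^*}_{\U^*}(F)=1$, while for every other $\U$ the inequality $s_\U\geq s^*$ combined with strict monotonicity gives $H^{s^*}_\U(F)\geq H^{s_\U}_\U(F)=1$; hence $\min_\U H^{s^*}_\U(F)=H^{s^*}_\de(F)=1$, and uniqueness of the root yields $s_\de=s^*$.

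The individual steps are short, and the hard part will not be any single computation but the bookkeeping of inequality directions in (ii) and (iii): one must be careful that the \emph{decrease} of the $H$-functions (in $\de$, and under passage to the pointwise minimum over $\U$) translates into the correct monotonicity of the \emph{roots} $s_\de$ and the correct extremum of $s_\U$. To keep this straight I would record the strict monotonicity of $s\mapsto H^s_\bullet(F)$ explicitly up front and reuse it verbatim in both parts, since an off-by-a-direction slip is the only thing that could derail the argument.
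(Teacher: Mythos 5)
Your proposal is correct and follows essentially the same route as the paper: part (i) from the identity $H^s(F)=H^s_{\nabla(F)}(F)$ of Lemma~\ref{l:H}, part (ii) from the $\de$-monotonicity of Lemma~\ref{l:Hdecreasing} translated into monotonicity of the roots, and part (iii) from the finite-minimum structure of $H^s_\de(F)$ together with the uniqueness in Proposition~\ref{p:uniqueSol}. The only difference is expository: you make explicit the normalization $\DE(F)=1$ and the strictly-decreasing crossing-point argument that the paper's terser proof leaves implicit, and in (iii) you apply uniqueness starting from the minimizing $\U^*$ rather than from a minimizer of $H^{s_\de}_\U(F)$ at $s=s_\de$ — a mirror image of the same argument.
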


\begin{proof}
(i) is obvious, since $H^s(F)=H^s_{\nabla(F)}(F)$. (ii) If $\de\geq \nabla(F)$, then $H^s_\de(F)\leq H^s_{\nabla(F)}(F)$ by Lemma~\ref{l:Hdecreasing}. Hence $s_\de\leq s_{\nabla(F)}=s_0$. Thus, $\textrm{max}\{s_\de|\de\geq\nabla(F)\}=s_0$, as required. To prove (iii), note that Def.~\ref{d:H} implies $H^s_\de(F)\leq H^s_\U(F)$, for all $\U\in K^1_\de(F)$; hence, $s_\de\leq \textrm{min}\{s_\U|\U\in K^1_\de(F)\}$. To prove the reverse inequality, recall that, given $s_\de$ there is $\U\in K^1_\de(F)$ such that $\DE(F)^{s_\de}=H^{s_\de} \de(F)=H^{s_\de} \U(F)$. By Proposition ~\ref{p:uniqueSol}, $s_\de=s_\U$. This completes the proof.
\end{proof}

\begin{defn}\label{d:DIMf}
For a finite, non-empty subset $F\subseteq \mathbb{R}^n$, we define
\[ \textrm{dim} \hh(F) := \left\lbrace
  \begin{array}{c l}
    0 & \textrm{if $|F|=1$},\\
    \infty & \textrm{if } K^1(F)=\emptyset,\\
    s_0 & \textrm{if } K^1(F)\neq\emptyset
    .
  \end{array}
\right. \]
\end{defn}
We can summarize our results so far as follows
\begin{thm}\label{t:DIMf}
Let $F$ be a non-empty, finite set. Then $\emph{dim} \hh(F)$ is a positive real number if and only if $F$ has no focal points; it is infinity if and only if $F$ has focal points, and it is zero when $F$ has one element.
\end{thm}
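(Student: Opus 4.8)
The plan is to read off each of the three clauses of the theorem directly from Definition~\ref{d:DIMf}, translating the hypothesis ``$K^1(F)\neq\emptyset$'' that appears there into the language of focal points by means of Theorem~\ref{t:FOCAL}, and certifying that the value $s_0$ is a genuine positive real via Proposition~\ref{p:uniqueSol}. In this sense the theorem is essentially a bookkeeping summary of results already established; the only care needed is to keep the three cases of the definition aligned with the two logical equivalences in the statement.

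First I would dispose of the degenerate clause: when $|F|=1$, Definition~\ref{d:DIMf} sets $\dim\hh(F)=0$ by fiat, which is precisely the last assertion. I would also note that a single-point set has no focal points to speak of, the notion presupposing a second point, so this case lies outside the focal-point dichotomy and cannot interfere with it.

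Next I would handle the substantive dichotomy, assuming $|F|\geq 2$. Theorem~\ref{t:FOCAL} supplies the equivalence ``$F$ has no focal points'' $\iff$ ``$K^1(F)\neq\emptyset$''. For the infinity clause I negate this: $F$ has focal points iff $K^1(F)=\emptyset$, and by the second branch of Definition~\ref{d:DIMf} this holds iff $\dim\hh(F)=\infty$. For the positive-real clause, $F$ has no focal points iff $K^1(F)\neq\emptyset$, whence the third branch of Definition~\ref{d:DIMf} gives $\dim\hh(F)=s_0$; and Proposition~\ref{p:uniqueSol}(iii) asserts exactly that, in this case, $s_0$ is a positive real number, namely the unique solution of $\DE(F)^s=H^s(F)$. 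Conversely, if $\dim\hh(F)$ is a positive real then it is neither $0$ nor $\infty$, forcing $|F|\geq 2$ and $K^1(F)\neq\emptyset$, hence no focal points.

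Since the three branches of Definition~\ref{d:DIMf} are mutually exclusive and exhaustive, these equivalences assemble without friction. The main, and quite mild, obstacle is purely organizational: confirming that the $|F|=1$ clause and the focal-point dichotomy do not overlap, so that ``positive real'', ``infinity'', and ``zero'' partition the possible values cleanly. No new computation is required.
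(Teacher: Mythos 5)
Your proposal is correct and matches the paper's own treatment: the paper states Theorem~\ref{t:DIMf} as a summary ("We can summarize our results so far as follows") whose justification is exactly the combination of Definition~\ref{d:DIMf}, Proposition~\ref{p:uniqueSol}(iii) for the positivity and uniqueness of $s_0$, and Theorem~\ref{t:FOCAL} for the translation between $K^1(F)\neq\emptyset$ and the absence of focal points. Your added care about keeping the $|F|=1$ clause disjoint from the focal-point dichotomy is a sound reading of the paper's standing convention that focal points are only discussed for $|F|\geq 2$.
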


\begin{thm}\label{t:dimHolder}
Let $\eta:X\rightarrow X'$ be $(r,\beta)$-H\"{o}lder, and $F\subseteq X$ a finite space. Then
\begin{equation}\nonumber
\beta\cdot\emph{dim} \hh(\eta(F))=\emph{dim} \hh(F).
\end{equation}
In particular, $\emph{dim} \hh$ is preserved by similarities.
\end{thm}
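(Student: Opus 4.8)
The plan is to reduce everything to the defining equation $H^s(F)=\DE(F)^s$ and to invoke the uniqueness of its solution (Proposition~\ref{p:uniqueSol}) together with the scaling formula for $H^s$ under Hölder equivalences already established in Proposition~\ref{p:H-simi}(iii). First I would dispose of the two degenerate cases. If $|F|=1$, then $\eta|_F$ is a bijection onto $\eta(F)$ by Lemma~\ref{l:(r,a)-H}(i), so $|\eta(F)|=1$ as well; both dimensions are $0$ and the identity $\beta\cdot 0=0$ holds trivially. If $F$ has focal points, then by Lemma~\ref{l:(r,a)-H}(iv) so does $\eta(F)$; by Theorem~\ref{t:DIMf} both dimensions equal $\infty$, and since $\beta>0$ we have $\beta\cdot\infty=\infty$.

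The substantive case is when $F$ has no focal points and $|F|\geq 2$; again by Lemma~\ref{l:(r,a)-H}(iv), $\eta(F)$ has no focal points either, so both $s_0:=\dim\hh(F)$ and $t_0:=\dim\hh(\eta(F))$ are the unique positive solutions of $H^s(F)=\DE(F)^s$ and $H^s(\eta(F))=\DE(\eta(F))^s$, respectively. The key step is to check that $s_0/\beta$ satisfies the defining equation for $\eta(F)$. Substituting $s=s_0/\beta$ into Proposition~\ref{p:H-simi}(iii) gives $H^{s_0/\beta}(\eta(F))=r^{s_0/\beta}H^{s_0}(F)=r^{s_0/\beta}\DE(F)^{s_0}$, using that $s_0$ solves its own equation. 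On the other hand, Lemma~\ref{l:(r,a)-H}(ii) gives $\DE(\eta(F))=r\DE(F)^\beta$, whence $\DE(\eta(F))^{s_0/\beta}=r^{s_0/\beta}\DE(F)^{s_0}$. The two right-hand sides coincide, so $s_0/\beta$ is a solution of $H^s(\eta(F))=\DE(\eta(F))^s$. By the uniqueness asserted in Proposition~\ref{p:uniqueSol}(iii), we conclude $t_0=s_0/\beta$, i.e. $\beta\cdot t_0=s_0$, which is the claim. The ``in particular'' follows by setting $\beta=1$.

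I do not expect any real obstacle: the proof is essentially the observation that the exponent substitution $s\mapsto s/\beta$ turns the scaling identity of Proposition~\ref{p:H-simi}(iii) into a statement matching the two defining equations. The only points requiring care are the arithmetic of the exponents (that $\beta\cdot(s_0/\beta)=s_0$ is precisely what makes $H^{s\beta}$ collapse to $H^{s_0}$, and likewise $\DE(F)^{\beta\cdot s_0/\beta}=\DE(F)^{s_0}$) and the bookkeeping of the degenerate cases. This is why I would establish the focal-point dichotomy via Lemma~\ref{l:(r,a)-H}(iv) at the very start, so that each of the three regimes $|F|=1$, ``focal points present,'' and ``no focal points'' is handled by the correct branch of Definition~\ref{d:DIMf} on both sides simultaneously.
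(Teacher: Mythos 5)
Your proof is correct and follows essentially the same route as the paper: both arguments combine the scaling identity $H^s(\eta(F))=r^sH^{s\beta}(F)$ of Proposition~\ref{p:H-simi}(iii) with $\DE(\eta(F))=r\DE(F)^\beta$ and the uniqueness statement of Proposition~\ref{p:uniqueSol}, and both dispose of the degenerate cases via Lemma~\ref{l:(r,a)-H}. The only cosmetic difference is that the paper rewrites the defining equation for $\eta(F)$ as the equivalent equation $H^{s\beta}(F)=\DE(F)^{s\beta}$, whereas you verify the explicit candidate $s_0/\beta$ and then invoke uniqueness; these are the same argument.
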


\begin{proof}
Suppose first that $F$ has no focal points. Then, $\mbox{dim} \hh(\eta(F))$ is the unique solution of the equation
\begin{equation}\label{e:dimSim}
H^s(\eta(F))=\DE(\eta(F))^s.
\end{equation}
Using Proposition ~\ref{p:H-simi} and Lemma ~\ref{l:simi}, we see that (\ref{e:dimSim}) is equivalent to $H^{s\beta}(F)=\DE(F)^{s\beta}$, whose only solution is $\mbox{dim} \hh(F)$, as desired. By Lemma~\ref{l:(r,a)-H}, $F$ has focal points [resp. $|F|=1$] if and only if $\eta(F)$ has focal points [resp. $|\eta(F)|=1$]. Hence, the dimension of $F$ is infinity [resp. zero] if and only if the dimension of $\eta(F)$ is infinity [resp. zero]. This completes the proof.
\end{proof}

\begin{thm} \label{t:boundsS_U}
Let $F$ be a finite space with no focal points. Suppose $\U\in K^1_\de(F)$, for some $\de\geq \nabla(F)$, and let $a_1$ [resp. $a_k$] denote the smallest [resp. largest] diameter of elements of $\U$. Then
\begin{enumerate}
\item{\begin{equation}\label{e:boundsU}
\frac{\ln|\U|}{\ln\frac{\DE(F)}{\de(F)}} \leq \frac{\ln|\U|}{\ln\frac{\DE(F)}{a_1}} \leq s_\U\leq \frac{\ln|\U|}{\ln\frac{\DE(F)}{a_k}} \leq \frac{\ln|\U|}{\ln\frac{\DE(F)}{\de}}\end{equation}}
\item{\begin{equation}\label{e:boundsU2}
\frac{\DE(F)}{\de} \leq \frac{\DE(F)}{a_k} \leq |\U|^{1/s_\U}\leq \frac{\DE(F)}{a_1} \leq \frac{\DE(F)}{\de(F)}\end{equation}}
\end{enumerate}
\end{thm}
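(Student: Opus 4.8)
The plan is to read everything off the single defining identity $\DE(F)^{s_\U}=H^{s_\U}_\U(F)$ together with the explicit expression for $H^s_\U(F)$ in~(\ref{e:H}). Writing $H^s_\U(F)=m_1a_1^s+\cdots+m_ka_k^s$ with $m_1+\cdots+m_k=|\U|$ and $a_1\leq a_j\leq a_k$, the first step is the elementary sandwich valid for every $s>0$ (using that $x\mapsto x^s$ is increasing on the positive reals):
\[
|\U|\,a_1^{s}\ \leq\ H^s_\U(F)\ \leq\ |\U|\,a_k^{s}.
\]
Specializing to $s=s_\U$ and substituting $H^{s_\U}_\U(F)=\DE(F)^{s_\U}$ turns this into $|\U|\,a_1^{s_\U}\leq\DE(F)^{s_\U}\leq|\U|\,a_k^{s_\U}$, equivalently
\[
\Big(\tfrac{\DE(F)}{a_k}\Big)^{s_\U}\ \leq\ |\U|\ \leq\ \Big(\tfrac{\DE(F)}{a_1}\Big)^{s_\U}.
\]
This one display is the entire substance of the theorem; everything that follows is rearrangement.

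Next I would take logarithms. Because $\U\in K^1(F)$ forces $a_1\leq a_k=\DE(\U)<\DE(F)$, both $\ln\tfrac{\DE(F)}{a_1}$ and $\ln\tfrac{\DE(F)}{a_k}$ are strictly positive, so dividing by them is legitimate and preserves the inequality directions. From $|\U|\leq(\DE(F)/a_1)^{s_\U}$ I obtain $s_\U\geq \ln|\U|/\ln\tfrac{\DE(F)}{a_1}$, and from $(\DE(F)/a_k)^{s_\U}\leq|\U|$ I obtain $s_\U\leq \ln|\U|/\ln\tfrac{\DE(F)}{a_k}$. These are exactly the two central inequalities of~(\ref{e:boundsU}).

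To reach the two flanking inequalities of~(\ref{e:boundsU}) I would invoke the chain $\de(F)\leq a_1\leq a_k\leq\de$, whose outer ends come respectively from the definition of the separation $\de(F)$ and from $\U\in K_\de(F)$, and note that $x\mapsto\ln\tfrac{\DE(F)}{x}$ is strictly decreasing and positive on $(0,\DE(F))$. Hence $\ln\tfrac{\DE(F)}{a_1}\leq\ln\tfrac{\DE(F)}{\de(F)}$ and $\ln\tfrac{\DE(F)}{\de}\leq\ln\tfrac{\DE(F)}{a_k}$; dividing the positive quantity $\ln|\U|$ by these reverses the order and produces the outer bounds. Part~(ii) then costs no new work: rewriting the two central inequalities of~(i) as $\ln\!\big(|\U|^{1/s_\U}\big)\leq\ln\tfrac{\DE(F)}{a_1}$ and $\ln\tfrac{\DE(F)}{a_k}\leq\ln\!\big(|\U|^{1/s_\U}\big)$ and exponentiating yields $\DE(F)/a_k\leq|\U|^{1/s_\U}\leq\DE(F)/a_1$, and the same monotonicity in $x$ pushes this out to the $\de$ and $\de(F)$ endpoints.

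The only delicate point, and the one I would watch most carefully, is the positivity of the logarithms and the correct tracking of inequality directions when one divides by or inverts a logarithm. Positivity of $\ln\tfrac{\DE(F)}{a_k}$ is guaranteed precisely by $a_k<\DE(F)$, i.e.\ by the focal-point-free hypothesis encoded in $\U\in K^1(F)$; for the rightmost bound of~(i) and the leftmost of~(ii) one additionally needs $\ln\tfrac{\DE(F)}{\de}>0$, i.e.\ $\de<\DE(F)$, which is the operative range of $\de$ consistent with $\DE(\U)\leq\de$ and $\DE(\U)<\DE(F)$. No step presents any genuine difficulty beyond this bookkeeping.
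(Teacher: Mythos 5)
Your proof is correct, and it rests on exactly the same key inequality as the paper's: the sandwich $|\U|\,\de(F)^s\leq|\U|\,a_1^s\leq H^s_\U(F)\leq|\U|\,a_k^s\leq|\U|\,\de^s$, read off from the expansion $H^s_\U(F)=m_1a_1^s+\cdots+m_ka_k^s$ of equation~(\ref{e:H}). Where you differ is in how this is converted into~(\ref{e:boundsU}). The paper first normalizes $\DE(F)=1$ by a similarity (reusing the H\"{o}lder-invariance results of Section~\ref{s:Holder}), then regards each of the five numbers in~(\ref{e:boundsU}) as the unique crossing point with the constant $1$ of one of five ordered, strictly decreasing functions $f_1\leq\cdots\leq f_5$, and concludes that the crossing points are ordered accordingly. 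You instead evaluate the sandwich at $s=s_\U$, substitute the defining identity $H^{s_\U}_\U(F)=\DE(F)^{s_\U}$ (legitimate since $\U\in K^1(F)$, by Proposition~\ref{p:uniqueSol}), and rearrange with logarithms, getting the outer bounds from the monotonicity of $x\mapsto\ln\big(\DE(F)/x\big)$. The two finishes are equivalent, but yours is more self-contained, needing neither the normalization step nor the similarity machinery, and it surfaces a caveat the paper's proof glosses over: the rightmost bound of~(\ref{e:boundsU}) is meaningful only when $\de<\DE(F)$, since otherwise $\ln\big(\DE(F)/\de\big)\leq 0$; the paper's assertion that ``all five functions $f_i$ are decreasing'' silently assumes the same restriction. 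One small over-caution on your side: the leftmost inequality of~(\ref{e:boundsU2}), namely $\DE(F)/\de\leq\DE(F)/a_k$, needs only $a_k\leq\de$ and positivity, not $\de<\DE(F)$, so part~(ii) holds for every admissible $\de$.
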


\begin{proof}
By equation (\ref{e:H}), $H^s_\U(F)=m_1 a_1^s+\cdots+m_k a_k^s$. Hence,
\begin{equation}\label{e:f_i}
|\U|\,\de(F)^s\leq |\U|\, a_1^s\leq H^s_\U(F) \leq |\U|\,a_k^s\leq|\U|\, \de^s
\end{equation}
We introduce the following definitions, as shorthand: $f_1(s,\U):=|\U|\,\de(F)^s$, $f_2(s,\U):=|\U|\, a_1^s$, $f_3(s,\U):=H^s_\U(F)$, $f_4(s,\U):=|\U|\,a_k^s$, $f_5(s,\U):=|\U|\, \de^s$. If $\eta$ is an $r$-similarity, $f_i(s,\eta_*(\U))=r^sf_i(s,\U)$, by the results of Section~\ref{s:Holder}. It follows that the equations
\begin{equation}\label{e:f_i=Delta}
f_i(s,\eta_*(\U))=\DE(\eta(F))^s,\quad\mbox{ and }\quad f_i(s,\U)=\DE(F)^s,
\end{equation}
are equivalent, i.e. have the same solutions. As in the proof of Proposition~\ref{p:uniqueSol}, we may assume, without loss of generality, that $\DE(F)=1$. When this is the case, all five functions $f_i$ are decreasing, $f_i(0,\U)=|\U|$, and they all tend to zero when $s$ goes to infinity. Since every number in (\ref{e:boundsU}) is the solution of an equation (\ref{e:f_i=Delta}), and these can be computed solving $f_i(s,\U)=1$, we see that (\ref{e:boundsU}) follows from (\ref{e:f_i}).

(ii) (\ref{e:boundsU2}) is an immediate consequence of (\ref{e:boundsU}). This proves the theorem.
\end{proof}

\begin{cor}\label{c:boundsU}
Suppose $\de=\nabla(F)$, and $\U\in K^1_{\nabla(F)}(F)$. Then
\begin{enumerate}
\item{\[\frac{\ln|\U|}{\ln\frac{\DE(F)}{\de(F)}} \leq s_\U\leq \frac{\ln|\U|}{\ln\frac{\DE(F)}{\nabla(F)}} \]}
\item{\[\frac{\DE(F)}{\nabla(F)} \leq |\U|^{1/s_\U}\leq \frac{\DE(F)}{\de(F)}\cdot\]}
\end{enumerate}
\end{cor}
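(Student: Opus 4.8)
The plan is to obtain Corollary~\ref{c:boundsU} directly as the specialization of Theorem~\ref{t:boundsS_U} to the case $\de=\nabla(F)$. The corollary is stated precisely for $\U\in K^1_{\nabla(F)}(F)$, which is exactly the hypothesis $\U\in K^1_\de(F)$ of the theorem with $\de=\nabla(F)$; note $\de\geq\nabla(F)$ is satisfied with equality, and $F$ has no focal points since $K^1(F)\neq\emptyset$ (the set $K^1_{\nabla(F)}(F)$ being nonempty forces $K^1(F)\neq\emptyset$, hence by Theorem~\ref{t:FOCAL} there are no focal points). So all the standing assumptions of Theorem~\ref{t:boundsS_U} are met.

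The substitution itself is the whole argument. First I would set $\de=\nabla(F)$ throughout the chains~(\ref{e:boundsU}) and~(\ref{e:boundsU2}). In~(\ref{e:boundsU}), the rightmost term $\tfrac{\ln|\U|}{\ln(\DE(F)/\de)}$ becomes $\tfrac{\ln|\U|}{\ln(\DE(F)/\nabla(F))}$, and the leftmost term $\tfrac{\ln|\U|}{\ln(\DE(F)/\de(F))}$ is already in the desired form. Discarding the two interior terms involving $a_1$ and $a_k$ (which are only intermediate bounds) leaves precisely the two-sided estimate
\begin{equation}\nonumber
\frac{\ln|\U|}{\ln\frac{\DE(F)}{\de(F)}} \leq s_\U\leq \frac{\ln|\U|}{\ln\frac{\DE(F)}{\nabla(F)}},
\end{equation}
which is part~(i). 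Likewise, setting $\de=\nabla(F)$ in~(\ref{e:boundsU2}) and keeping only the outer terms yields
\begin{equation}\nonumber
\frac{\DE(F)}{\nabla(F)} \leq |\U|^{1/s_\U}\leq \frac{\DE(F)}{\de(F)},
\end{equation}
which is part~(ii).

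There is essentially no obstacle here; the corollary is a pure restriction of the theorem. The only point warranting a line of justification is the boundary case $\de=\nabla(F)$ of the hypothesis $\de\geq\nabla(F)$ in Theorem~\ref{t:boundsS_U}, which is explicitly permitted, and the verification that $a_k\leq\nabla(F)$ so that the chain does not collapse degenerately: since $\U\in K_{\nabla(F)}(F)$ we have $\DE(\U)=a_k\leq\nabla(F)$, so $\DE(F)/a_k\geq\DE(F)/\nabla(F)$, consistent with the inequalities. I would therefore present the proof as a one-sentence appeal to Theorem~\ref{t:boundsS_U} with $\de=\nabla(F)$, followed by the observation that~(i) and~(ii) are the outer inequalities of~(\ref{e:boundsU}) and~(\ref{e:boundsU2}) respectively.
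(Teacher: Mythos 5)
Your proposal is correct and matches the paper's treatment: the paper offers no separate proof of Corollary~\ref{c:boundsU}, presenting it exactly as you do, namely as the specialization $\de=\nabla(F)$ of Theorem~\ref{t:boundsS_U} with the interior terms involving $a_1$ and $a_k$ dropped. Your extra remark that $\U\in K^1_{\nabla(F)}(F)$ forces $K^1(F)\neq\emptyset$ and hence (via Theorem~\ref{t:FOCAL}) the no-focal-points hypothesis of the theorem is a correct and worthwhile clarification, not a deviation.
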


The first upper bound in the next corollary follows from Remark~\ref{r:n-1elements}.
\begin{cor}
Suppose $F$ has no focal points. Then
\begin{enumerate}
\item{\[\frac{\ln 2}{\ln\frac{\DE(F)}{\de(F)}} \leq  \emph{dim} \hh(F) \leq \frac{\ln(|F|-1)}{\ln\frac{\DE(F)}{\nabla(F)}}\]}
\item{\[2\leq \big[\DE(F)/\de(F)\big]^{\emph{dim} \hh(F)},\quad \text{and}\quad \big[\DE(F)/\nabla(F)\big]^{\emph{dim} \hh(F)} \leq |F|-1. \]}
\end{enumerate}
\end{cor}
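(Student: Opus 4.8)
The plan is to read the result off Corollary~\ref{c:boundsU} together with the identity
\[
\mbox{dim}\hh(F)=s_0=\min\{s_\U\mid \U\in K^1_{\nabla(F)}(F)\},
\]
which holds because $F$ has no focal points: then $\mbox{dim}\hh(F)=s_0$ by Definition~\ref{d:DIMf}, while $s_0=s_{\nabla(F)}$ and $s_{\nabla(F)}=\min\{s_\U\mid \U\in K^1_{\nabla(F)}(F)\}$ by the lemma preceding Definition~\ref{d:DIMf}. Throughout I would use that the absence of focal points forces $\de(F)<\DE(F)$ and, by Theorem~\ref{t:FOCAL}, $\nabla(F)<\DE(F)$, so that both denominators $\ln\frac{\DE(F)}{\de(F)}$ and $\ln\frac{\DE(F)}{\nabla(F)}$ are strictly positive; this is what makes the quotients in (i) well defined and keeps the direction of the inequalities under division and exponentiation. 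Part (ii) is then immediate from (i): rewriting its two inequalities as $\ln 2\leq \mbox{dim}\hh(F)\cdot\ln\frac{\DE(F)}{\de(F)}$ and $\mbox{dim}\hh(F)\cdot\ln\frac{\DE(F)}{\nabla(F)}\leq \ln(|F|-1)$ and exponentiating yields the two displayed formulas.

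For the lower bound in (i), I would take an arbitrary $\U\in K^1_{\nabla(F)}(F)$. Membership in $K^1(F)$ forces $|\U|\geq 2$, hence $\ln|\U|\geq\ln 2$, and the leftmost inequality of Corollary~\ref{c:boundsU}(i) gives $s_\U\geq \frac{\ln|\U|}{\ln\frac{\DE(F)}{\de(F)}}\geq\frac{\ln 2}{\ln\frac{\DE(F)}{\de(F)}}$. Since this holds for \emph{every} admissible $\U$, the minimum $s_0=\mbox{dim}\hh(F)$ inherits the same lower bound. For the upper bound it suffices to exhibit a \emph{single} $\U\in K^1_{\nabla(F)}(F)$ with $|\U|\leq |F|-1$; the rightmost inequality of Corollary~\ref{c:boundsU}(i) and monotonicity of the logarithm then give $\mbox{dim}\hh(F)\leq s_\U\leq \frac{\ln|\U|}{\ln\frac{\DE(F)}{\nabla(F)}}\leq \frac{\ln(|F|-1)}{\ln\frac{\DE(F)}{\nabla(F)}}$. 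Such a covering is supplied by Remark~\ref{r:n-1elements}: one starts from the nearest-neighbor covering $\{U_a=\{a,x_a\}\mid a\in F\}$ of Lemma~\ref{l:2-cov}, whose diameter equals $\max_a\nu_F(a)=\nabla(F)$ by Proposition~\ref{p:delta,nabla}(ii), and reduces the number of sets to at most $|F|-1$.

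The step I expect to need the most care is checking that this reduction keeps the covering inside $K^1_{\nabla(F)}(F)$, i.e. that its diameter stays equal to $\nabla(F)$. The cleanest way to guarantee this is to choose a pair $p,q$ realizing $d(p,q)=\de(F)$: since $\nu_F(p)\geq\de(F)=d(p,q)\geq\nu_F(p)$, the point $q$ is a nearest neighbor of $p$ and vice versa, so taking $x_p=q$ and $x_q=p$ forces $U_p=U_q=\{p,q\}$ and collapses two of the $|F|$ sets into one, leaving at most $|F|-1$ sets, each of which is still a nearest-neighbor edge of diameter $\leq\nabla(F)$; after dropping the duplicate the family still covers $F$ and every set has at least two points, so it is a genuine $2$-covering. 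Because any $2$-covering has diameter at least $\nabla(F)$ by the definition of $\nabla$, the reduced family has diameter exactly $\nabla(F)$, and since $\nabla(F)<\DE(F)$ it lies in $K^1_{\nabla(F)}(F)$. This produces the required covering and completes both bounds in (i), after which (ii) follows by exponentiation as described above.
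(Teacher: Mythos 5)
Your proof is correct and follows exactly the route the paper intends: the bounds of Corollary~\ref{c:boundsU} combined with the identity $\mbox{dim}\hh(F)=s_0=\min\{s_\U\mid\U\in K^1_{\nabla(F)}(F)\}$ for the lower bound, and Remark~\ref{r:n-1elements} supplying a covering with at most $|F|-1$ sets for the upper bound. Your explicit construction collapsing $U_p=U_q=\{p,q\}$ for a closest pair, and the check that the reduced family stays in $K^1_{\nabla(F)}(F)$, simply fills in the detail that the paper leaves to the reader.
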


\section{Finite Box Dimension, $\mbox{dim} \m(F)$.} \label{s:4}
The classical box-counting (or Minkowski-Bouligand) dimension will be denoted $\mbox{dim}_\boxx (-)$. In this section we define an analog for finite metric spaces, denoted $\mbox{dim} \m(-)$, and called \emph{finite box dimension}. We follow the same pattern we used to define finite Hausdorff dimension. The proofs for finite box-dimension are similar, but usually simpler, than those for finite Hausdorff dimension, and will be left to the reader.

\begin{defn}  \label{d:B}
For $\U\in K(F)$, set
\[B^s_\U(F):=|\U|\,\DE(\U)^s.\]
For $\de\geq \nabla(F)$, set:
\[ B^s_\de(F) := \left\lbrace
  \begin{array}{l l}
    \textrm{min}\{B^s_\U(F)|\U\in K^1_\de(F)\}, & \emph{when $K^1(F)\neq\emptyset$},\\
    \textrm{min}\{B^s_\U(F)|\U\in K(F)\}, & \emph{when $K^1(F)=\emptyset$}.
  \end{array}
\right. \]
Finally,
\[B^s(F):={\textrm{max}}\{B^s_\de(F)|\de\geq\nabla(F)\}.\]
\end{defn}

\begin{lem}\label{l:Bdecreasing}
Suppose $\nabla(F)\leq\de\leq \de'$. Then $B^s_\de(F)\geq B^s_{\de'}(F)$.
\end{lem}

\begin{lem}~\label{l:B}
For any finite space $F$ we have:
\[ B^s(F) = \left\lbrace
  \begin{array}{l l l}
    B^s_{\nabla(F)}(F)&=\emph{min}\{B^s_\U(F)|\DE(\U)=\nabla(F)\}, & \textrm{when $K^1(F)\neq\emptyset$},\\
    \DE(F)^s&=\,\,B^s_\de(F), \emph{ for all } \de\geq\nabla(F), & \textrm{when $K^1(F)=\emptyset$}.
  \end{array}
\right. \]
\end{lem}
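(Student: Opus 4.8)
The plan is to follow the template of Lemma~\ref{l:H} almost verbatim, the only change being that the additive aggregate $H^s_\U(F)=\sum\DE(U_i)^s$ is replaced by the product $B^s_\U(F)=|\U|\,\DE(\U)^s$; the combinatorial skeleton is identical. First I would invoke Lemma~\ref{l:Bdecreasing}, which says $B^s_\de(F)$ is decreasing in $\de$ on $[\nabla(F),\infty)$, so the maximum defining $B^s(F)$ is attained at the left endpoint: $B^s(F)=B^s_{\nabla(F)}(F)$. Everything then reduces to evaluating $B^s_{\nabla(F)}(F)$ in the two cases.

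Suppose first $K^1(F)\neq\emptyset$. By definition $B^s_{\nabla(F)}(F)=\textrm{min}\{B^s_\U(F)\mid \U\in K^1_{\nabla(F)}(F)\}$, so it suffices to identify the index set. Since $\nabla(F)=\textrm{min}\{\DE(\U)\mid\U\in K(F)\}$, every $2$-covering satisfies $\DE(\U)\geq\nabla(F)$; hence $\U\in K_{\nabla(F)}(F)$ forces $\DE(\U)=\nabla(F)$. Conversely, if $\DE(\U)=\nabla(F)$ then, because $K^1(F)\neq\emptyset$ yields $\nabla(F)<\DE(F)$ by Theorem~\ref{t:FOCAL}, we get $\DE(\U)<\DE(F)$, i.e. $\U\in K^1(F)$, so $\U\in K^1_{\nabla(F)}(F)$. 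Thus $\U\in K^1_{\nabla(F)}(F)$ iff $\DE(\U)=\nabla(F)$, which is exactly the first line of the claim.

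Now suppose $K^1(F)=\emptyset$. In this case the definition of $B^s_\de(F)$ reads $\textrm{min}\{B^s_\U(F)\mid\U\in K(F)\}$ for \emph{every} $\de\geq\nabla(F)$, so $B^s_\de(F)$ is in fact independent of $\de$, and it suffices to compute this single minimum. Here $K(F)=K^0(F)\cup K^2(F)$ with $K^0(F)=\{\U_0\}$. Evaluating, $B^s_{\U_0}(F)=1\cdot\DE(F)^s=\DE(F)^s$, whereas any $\U\in K^2(F)$ has $|\U|\geq2$ and $\DE(\U)=\DE(F)$ (one $U_i$ reaches the diameter, and no subset of $F$ exceeds it), giving $B^s_\U(F)=|\U|\,\DE(F)^s\geq 2\,\DE(F)^s>\DE(F)^s$ since $\DE(F)>0$. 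Hence the minimum equals $\DE(F)^s$, attained at $\U_0$, so $B^s_\de(F)=\DE(F)^s$ for all $\de\geq\nabla(F)$, and consequently $B^s(F)=\DE(F)^s$ as well; this is the second line.

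I expect no genuine obstacle: the argument is purely combinatorial, and the product structure of $B^s_\U$ behaves just as the sum did in Lemma~\ref{l:H}. The only point requiring minimal care is the strict inequality $B^s_\U(F)>B^s_{\U_0}(F)$ for $\U\in K^2(F)$, which must also hold at $s=0$, where it reads $|\U|\geq 2>1$; this is immediate.
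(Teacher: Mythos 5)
Your proof is correct and follows exactly the route the paper intends: the paper leaves this proof to the reader as a routine adaptation of Lemma~\ref{l:H}, whose proof (via Lemma~\ref{l:Hdecreasing}, the identification $\U\in K^1_{\nabla(F)}(F)\iff\DE(\U)=\nabla(F)$, and the comparison of $\U_0$ against $K^2(F)$) you have reproduced with $B^s_\U$ in place of $H^s_\U$. Your added care at $s=0$ and the explicit use of Theorem~\ref{t:FOCAL} for the converse inclusion are sound and consistent with the paper's argument.
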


\begin{defn}\label{d:T}
Let $F$ be a finite metric space with no focal points, and suppose $\de\geq\nabla(F)$. Define:
\[T_\de(F):=\textrm{min }\{|\U|\,|\,\U\in K^1_\de(F)\}.
\]
Note that $T_\de(F)\leq |F|-1$, by Remark~\ref{r:n-1elements}. Also, $T_\de(F)\geq 2$.
\end{defn}

\begin{cor}
If $F$ has no focal points, then $B^s(F)=   T_{\nabla(F)}(F) \,\, \nabla(F)^s$.
\end{cor}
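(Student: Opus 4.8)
The plan is to prove the final Corollary by directly unwinding the definition of $B^s(F)$ via Lemma~\ref{l:B} and the definition of $T_\de(F)$. Since $F$ has no focal points, Theorem~\ref{t:FOCAL} gives $K^1(F)\neq\emptyset$, so the first branch of Lemma~\ref{l:B} applies, yielding
\[
B^s(F)=B^s_{\nabla(F)}(F)=\min\{B^s_\U(F)\mid \DE(\U)=\nabla(F)\}.
\]
First I would observe that for any $\U$ with $\DE(\U)=\nabla(F)$ we have, by Definition~\ref{d:B}, $B^s_\U(F)=|\U|\,\DE(\U)^s=|\U|\,\nabla(F)^s$. Thus the common factor $\nabla(F)^s$ pulls out of the minimum, and minimizing $B^s_\U(F)$ over such $\U$ reduces to minimizing $|\U|$.

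The key step is then to identify the range of the minimum. The set $\{\U\in K^1(F)\mid \DE(\U)=\nabla(F)\}$ coincides with $K^1_{\nabla(F)}(F)$: indeed, any $\U\in K^1_{\nabla(F)}(F)$ satisfies $\DE(\U)\le\nabla(F)$, but $\nabla(F)$ is by definition the minimum of $\DE(\U)$ over all $2$-coverings, so $\DE(\U)=\nabla(F)$; conversely any $\U$ with $\DE(\U)=\nabla(F)<\DE(F)$ lies in $K^1(F)\cap K_{\nabla(F)}(F)=K^1_{\nabla(F)}(F)$. Therefore
\[
B^s(F)=\nabla(F)^s\cdot\min\{|\U|\mid \U\in K^1_{\nabla(F)}(F)\}=\nabla(F)^s\cdot T_{\nabla(F)}(F),
\]
where the last equality is exactly Definition~\ref{d:T} with $\de=\nabla(F)$. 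This is the desired formula $B^s(F)=T_{\nabla(F)}(F)\,\nabla(F)^s$.

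The only genuinely delicate point — and where I would focus care — is verifying the identity of the index set $\{\DE(\U)=\nabla(F)\}=K^1_{\nabla(F)}(F)$, since both the $\min$ in Lemma~\ref{l:B} and the $\min$ in $T_{\nabla(F)}(F)$ must be taken over precisely the same collection of $2$-coverings for the factorization to be valid. This rests on the characterization $K^1(F)=\{\U\mid\DE(\U)<\DE(F)\}$ established in Section~\ref{s:2} together with the equivalence $K^1(F)\neq\emptyset \iff \nabla(F)<\DE(F)$ from Theorem~\ref{t:FOCAL}; both hold because $F$ has no focal points. Everything else is routine algebra, which matches the author's stated convention that the finite-box-dimension proofs are simpler and left to the reader.
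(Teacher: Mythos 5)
Your proof is correct and follows exactly the route the paper intends (the paper leaves this corollary unproved, per its stated convention for Section~\ref{s:4}): apply the first branch of Lemma~\ref{l:B}, identify the index set $\{\U\mid\DE(\U)=\nabla(F)\}$ with $K^1_{\nabla(F)}(F)$ via Theorem~\ref{t:FOCAL}, pull out the constant factor $\nabla(F)^s$, and invoke Definition~\ref{d:T}. Your care about the coincidence of the two index sets is exactly the right point to check, and your verification of it is sound.
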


\begin{lem}\label{l:Bdecrease}
Suppose that $\DE(F)=1$, and $K^1(F)\neq \emptyset$. Let $f(s)$ denote any of the following functions:
\begin{enumerate}
\item{$B^s_\U(F)$, for all $\U\in K^1(F)$,}
\item{$B^s_\de(F)$, for all $\de\geq \nabla(F)$,}
\item{$B^s(F)$.}
\end{enumerate}
Then $f$ is a positive, strictly decreasing function, $f(0)\geq 2$, and $\lim_{s\rightarrow \infty} f(s)=0$.
\end{lem}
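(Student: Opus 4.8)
The plan is to prove Lemma~\ref{l:Bdecrease} by reducing the three cases to the first, just as was done for Lemma~\ref{l:decrease}. The analogy is exact: $B^s_\U(F)$ plays the role of $H^s_\U(F)$, and the only structural difference is that $B^s_\U(F)=|\U|\,\DE(\U)^s$ is a \emph{single} power $|\U|\,a^s$ (where $a=\DE(\U)$) rather than a sum $m_1a_1^s+\cdots+m_ka_k^s$. This actually makes the argument simpler, not harder.

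First I would treat case (i). Fix $\U\in K^1(F)$ and write $a:=\DE(\U)$. Since $\U\in K^1(F)$ means $\DE(\U)<\DE(F)=1$, we have $0<a<1$. Thus $f(s)=|\U|\,a^s$, and I read off all the required properties directly: $f$ is positive because $|\U|\geq 2$ and $a>0$; $f(0)=|\U|\geq 2$ (recall that $|\U|\geq 2$ for every $\U\in K^1(F)$ by the definition of $K^1$); $f$ is strictly decreasing because $a<1$ forces $a^s$ to be strictly decreasing in $s$; and $\lim_{s\to\infty} f(s)=|\U|\lim_{s\to\infty}a^s=0$ since $0<a<1$. This disposes of (i) with no real work.

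Next, for case (ii), I would invoke the fact that, since $K^1(F)\neq\emptyset$, Definition~\ref{d:B} gives $f(s)=B^s_\de(F)=\min\{B^s_\U(F)\mid\U\in K^1_\de(F)\}$. This is the minimum over a \emph{finite} nonempty collection (finite because $K(F)\subseteq\mathcal{P}(F)$, nonempty because $\de\geq\nabla(F)$ guarantees $K^1_\de(F)\neq\emptyset$ when $K^1(F)\neq\emptyset$) of functions each satisfying all four properties by (i). A pointwise minimum of finitely many positive strictly decreasing functions tending to zero is again positive, strictly decreasing, and tends to zero; and its value at $s=0$ is the minimum of values all $\geq 2$, hence $\geq 2$. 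This gives (ii). Finally, (iii) is just the special case $\de=\nabla(F)$ of (ii), since Lemma~\ref{l:B} identifies $B^s(F)=B^s_{\nabla(F)}(F)$ when $K^1(F)\neq\emptyset$.

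There is no genuine obstacle here; the statement is designed to mirror Lemma~\ref{l:decrease}, and the authors have already flagged that the finite-box-dimension proofs are ``similar, but usually simpler.'' The only point requiring a half-sentence of care is the preservation of strict monotonicity under taking a finite minimum --- one should note that at any $s<s'$, choosing the index $\U$ that achieves the minimum at $s'$ already yields $f(s)\leq f(s')$ only weakly, so instead one argues that for the minimizer $\U^*$ at $s$ one has $f(s)=f_{\U^*}(s)>f_{\U^*}(s')\geq f(s')$, the first inequality by strict monotonicity of each $f_\U$ and the second by the minimum property. That observation, which is exactly the one used for Lemma~\ref{l:decrease}, is all that is needed, so I expect the authors simply to write ``the proof is analogous to that of Lemma~\ref{l:decrease}'' and leave it to the reader.
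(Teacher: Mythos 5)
Your proof is correct and is precisely the argument the paper intends: the paper states this lemma without proof (the finite box-dimension proofs are explicitly "left to the reader" as being similar to, but simpler than, the Hausdorff ones), and your argument mirrors the paper's proof of Lemma~\ref{l:decrease} step for step --- single power $|\U|\,\DE(\U)^s$ in place of a sum, finite minimum over $K^1_\de(F)$ for (ii), and the special case $\de=\nabla(F)$ via Lemma~\ref{l:B} for (iii). Your added remark on why strict monotonicity survives taking a finite minimum is a detail the paper glosses over even in the Hausdorff case, and it is handled correctly.
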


\begin{prop}\label{p:B-simi}
Let $\eta:X\rightarrow X'$ be an $(r,\beta)$-H\"{o}lder equivalence, and $F\subseteq X$ a finite space. Then, for all $s\in[0,\infty)$:
\begin{enumerate}
\item{$B^s_{\eta_*(\U)}(\eta(F)) = r^s B^{s\beta} \U(F)$, for all $\U\in K(F)$.}
\item{$B^s_{r\de^\beta}(\eta(F)) = r^s B^{s\beta} \de(F)$, for all $\de\geq \nabla(F)$.}
\item{$B^s(\eta(F)) = r^s B^{s\beta}(F)$.}
\end{enumerate}
\end{prop}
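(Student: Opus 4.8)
The plan is to follow the proof of Proposition~\ref{p:H-simi} almost line for line, the task being slightly easier because $B^s_\U(F)=|\U|\,\DE(\U)^s$ is a single monomial rather than the sum defining $H^s_\U(F)$. Part (i) should then reduce to a direct computation: by Lemma~\ref{l:simi}(ii),(iii) we have $|\eta_*(\U)|=|\U|$ and $\DE(\eta_*(\U))=r\DE(\U)^\beta$, so
\[
B^s_{\eta_*(\U)}(\eta(F))=|\eta_*(\U)|\,\DE(\eta_*(\U))^s=|\U|\,\bigl(r\DE(\U)^\beta\bigr)^s=r^s\,|\U|\,\DE(\U)^{s\beta}=r^s B^{s\beta}_\U(F),
\]
which is the assertion.

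For part (ii) I would run the same two-sided minimum argument used for $H^s_\de$. Fix $s$ and $\de\geq\nabla(F)$, and suppose first that $F$ has no focal points, so that $B^{s\beta}_\de(F)=B^{s\beta}_\U(F)$ for some $\U\in K^1_\de(F)$. By Lemma~\ref{l:simi}(i)(c), $\eta_*(\U)\in K^1_{r\de^\beta}(\eta(F))$, hence using part (i),
\[
B^s_{r\de^\beta}(\eta(F))\leq B^s_{\eta_*(\U)}(\eta(F))=r^s B^{s\beta}_\U(F)=r^s B^{s\beta}_\de(F).
\]
For the reverse inequality I would pick $\V\in K^1_{r\de^\beta}(\eta(F))$ realizing $B^s_{r\de^\beta}(\eta(F))=B^s_\V(\eta(F))$; since the inverse $\eta'$ is $(r^{-1/\beta},1/\beta)$-H\"{o}lder (Lemma~\ref{l:(r,a)-H}(i)) and $\eta'_*=(\eta_*)^{-1}$, Lemma~\ref{l:simi}(i)(c) gives $\eta'_*(\V)\in K^1_\de(F)$, whence
\[
B^{s\beta}_\de(F)\leq B^{s\beta}_{\eta'_*(\V)}(F)=\frac{1}{r^s}B^s_\V(\eta(F))=\frac{1}{r^s}B^s_{r\de^\beta}(\eta(F)).
\]
Combining the two inequalities yields the claimed equality. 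When $F$ has focal points the argument is identical with $K(F)$ in place of $K^1_\de(F)$ throughout, using that $\eta_*:K(F)\to K(\eta(F))$ is a bijection (Lemma~\ref{l:simi}(i)(a)); here Lemma~\ref{l:(r,a)-H}(iv) guarantees that $\eta(F)$ also has focal points, so the correct branch of Definition~\ref{d:B} is selected on both sides.

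Finally, part (iii) is the special case $\de=\nabla(F)$. If $F$ has no focal points then, by Lemma~\ref{l:B}, $B^{s\beta}(F)=B^{s\beta}_{\nabla(F)}(F)$ and $B^s(\eta(F))=B^s_{\nabla(\eta(F))}(\eta(F))$; since $\nabla(\eta(F))=r\nabla(F)^\beta$ by Lemma~\ref{l:simi}(iv), substituting $\de=\nabla(F)$ into (ii) gives
\[
B^s(\eta(F))=B^s_{r\nabla(F)^\beta}(\eta(F))=r^s B^{s\beta}_{\nabla(F)}(F)=r^s B^{s\beta}(F).
\]
If $F$ has focal points, Lemma~\ref{l:B} collapses both sides to diameters and the result follows from Lemma~\ref{l:(r,a)-H}(ii) via $B^s(\eta(F))=\DE(\eta(F))^s=(r\DE(F)^\beta)^s=r^s B^{s\beta}(F)$. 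The only real subtlety in the whole argument is the bookkeeping in (ii) — tracking which branch of Definition~\ref{d:B} applies and checking that $\eta_*$ and $\eta'_*$ carry the relevant covering classes ($K^1_\de$, respectively $K$) onto one another — but Lemma~\ref{l:simi} supplies exactly these bijections, so no genuine obstacle arises.
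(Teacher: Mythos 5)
Your proof is correct and is exactly what the paper intends: the paper gives no separate proof of Proposition~\ref{p:B-simi}, stating instead that the box-dimension proofs mirror the Hausdorff ones and are left to the reader, and your argument is precisely the line-by-line adaptation of the proof of Proposition~\ref{p:H-simi}, with the correct bookkeeping of the two branches of Definition~\ref{d:B} via Lemma~\ref{l:(r,a)-H}(iv) and the covering bijections of Lemma~\ref{l:simi}.
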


\subsection{Definition of $\mbox{dim} \m(F)$.} We define \emph{finite box-dimension}, $\textrm{dim} \m(F)$, by solving the equation:

\begin{equation}
B^s(F)=\DE(F)^s,
\end{equation}
which has exactly one solution $s_0^b\in(0,\infty)$, precisely when $F$ has no focal points. More generally, we have:

\begin{prop}\label{p:BuniqueSol}
Consider the following equations:
\begin{enumerate}
\item{$\DE(F)^s=B^s_\U(F)$, for all $\U\in K(F)$,}
\item{$\DE(F)^s=B^s_\de(F)$, for all $\de\geq \nabla(F)$,}
\item{$\DE(F)^s=B^s(F)$.}
\end{enumerate}
Then, in each of these cases, the equation has a unique solution iff $F$ has no focal points. When this is the case, the solutions are positive real numbers, and will be denoted, respectively, $s_\U^b,\,s_\de^b$, and $s_0^b$. Moreover,
\begin{equation}
s^b_0=\frac{\ln T_{\nabla(F)}(F)}{\ln\frac{\DE(F)}{\nabla(F)}}, \,\,\,\,\,\,\,
\textrm{    and  }\,\,\,\,\,\, s^b_\U=\frac{\ln|\U|}{\ln\frac{\DE(F)}{\DE(\U)}}.
\end{equation}
\end{prop}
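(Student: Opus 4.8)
The plan is to follow the proof of Proposition~\ref{p:uniqueSol} line for line, substituting the box-dimension objects for their Hausdorff counterparts. As there, I would first normalize the diameter. Viewing $F$ as a subspace of $(X,d)$, the identity $\textrm{id}_X:(X,d)\rightarrow(X,\frac{1}{r}d)$ is an $r^{-1}$-similarity, and Proposition~\ref{p:B-simi}(i) gives $B^s_{(\textrm{id}_X)_*(\U)}(\textrm{id}_X(F))=r^{-s}B^s_\U(F)$. Taking $r=\DE(F)$ turns the equation $\DE(F)^s=B^s_\U(F)$ into $\DE(\textrm{id}_X(F))^s=1=B^s_{(\textrm{id}_X)_*(\U)}(\textrm{id}_X(F))$ for the rescaled space, which has diameter $1$. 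Hence in all three parts I may assume $\DE(F)=1$, so that each equation reads $f(s)=1$, where $f$ is, respectively, $B^s_\U(F)$, $B^s_\de(F)$, or $B^s(F)$.

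For the reverse implication, assume $F$ has no focal points, so $K^1(F)\neq\emptyset$ by Theorem~\ref{t:FOCAL}. Then Lemma~\ref{l:Bdecrease} tells me that each such $f$ is continuous, strictly decreasing, with $f(0)\geq 2$ (indeed $f(0)=|\U|\geq 2$ in case (i)) and $\lim_{s\rightarrow\infty}f(s)=0$. By the intermediate value theorem the equation $f(s)=1$ has exactly one root, which is positive; these roots are $s^b_\U$, $s^b_\de$, $s^b_0$. For the direct implication I argue the contrapositive: if $F$ has focal points then $K^1(F)=\emptyset$, so $K(F)=K^0(F)\cup K^2(F)$. For $\U_0=\{F\}\in K^0(F)$ one has $B^s_{\U_0}(F)=1\cdot\DE(F)^s=\DE(F)^s$, so the equation in (i) holds identically and has infinitely many solutions; for $\U\in K^2(F)$ one has $\DE(\U)=\DE(F)$ and $|\U|\geq 2$, whence $B^s_\U(F)=|\U|\,\DE(F)^s>\DE(F)^s$ and there is no solution. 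For (ii) and (iii), Lemma~\ref{l:B} gives $B^s_\de(F)=B^s(F)=\DE(F)^s$ whenever $K^1(F)=\emptyset$, so again the equation is satisfied by every $s$. In all cases uniqueness fails, completing the equivalence.

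It remains to derive the two explicit formulas, which is pure algebra once the right-hand sides are known and the denominators are seen to be positive. For $s^b_\U$ I solve $\DE(F)^s=B^s_\U(F)=|\U|\,\DE(\U)^s$ directly: rearranging gives $(\DE(F)/\DE(\U))^s=|\U|$, and since $\U\in K^1(F)$ forces $\DE(\U)<\DE(F)$, the quantity $\ln(\DE(F)/\DE(\U))$ is strictly positive, so $s^b_\U=\ln|\U|/\ln(\DE(F)/\DE(\U))$. For $s^b_0$ I use the formula $B^s(F)=T_{\nabla(F)}(F)\,\nabla(F)^s$ (the Corollary to Lemma~\ref{l:B}); solving $\DE(F)^s=T_{\nabla(F)}(F)\,\nabla(F)^s$ gives $(\DE(F)/\nabla(F))^s=T_{\nabla(F)}(F)$, and $\nabla(F)<\DE(F)$ (Theorem~\ref{t:FOCAL}, since $F$ has no focal points) keeps $\ln(\DE(F)/\nabla(F))$ positive, yielding $s^b_0=\ln T_{\nabla(F)}(F)/\ln(\DE(F)/\nabla(F))$. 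The only point requiring genuine care is exactly this positivity of the denominators, i.e. that the covering diameter (resp. $\DE(\U)$) lies strictly below $\DE(F)$; everything else is routine and entirely parallel to the Hausdorff case.
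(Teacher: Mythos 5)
Your proof is correct and is essentially the argument the paper intends: Section~\ref{s:4} explicitly leaves this proof to the reader as a routine adaptation of Proposition~\ref{p:uniqueSol}, and your steps (normalization via the similarity invariance of Proposition~\ref{p:B-simi}, existence/uniqueness from Lemma~\ref{l:Bdecrease}, the case analysis over $K^0(F)$ and $K^2(F)$ when $K^1(F)=\emptyset$ using Lemma~\ref{l:B}) follow that template exactly. Your added derivation of the closed formulas, including the observation that $\nabla(F)<\DE(F)$ and $\DE(\U)<\DE(F)$ (Theorem~\ref{t:FOCAL}) make the denominators positive, is precisely the "simpler" extra content the box-counting case requires, using the unlabeled Corollary following Definition~\ref{d:T}.
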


\begin{lem}
Suppose $F$ has no focal points. Then
\begin{enumerate}
\item{$s_0^b=s_{\nabla(F)}^b$.}
\item{$s_0^b=\emph{max}\{s_\de^b|\de\geq \nabla(F)\}$.}
\item{$s_\de^b=\emph{min}\{s_\U^b|\U\in K^1_\de(F)\}$, for all $\de\geq \nabla(F)$.}
\end{enumerate}
\end{lem}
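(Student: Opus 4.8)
The plan is to mirror, step for step, the structure of the corresponding Hausdorff lemma for $s_0,\,s_\de,\,s_\U$, replacing $H$ by $B$ throughout and invoking the box-counting analogue of each ingredient. Since $F$ has no focal points, Theorem~\ref{t:FOCAL} gives $K^1(F)\neq\emptyset$, so the "$K^1(F)\neq\emptyset$" branches of Definition~\ref{d:B}, Lemma~\ref{l:B} and Proposition~\ref{p:BuniqueSol} are all in force; in particular $s_0^b$, $s_\de^b$ and $s_\U^b$ exist and are unique positive reals. Part (i) is then immediate: by Lemma~\ref{l:B} we have $B^s(F)=B^s_{\nabla(F)}(F)$ for every $s$, so the equations $B^s(F)=\DE(F)^s$ and $B^s_{\nabla(F)}(F)=\DE(F)^s$ coincide and hence share their unique solution, giving $s_0^b=s_{\nabla(F)}^b$.

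For parts (ii) and (iii) the only substantive step is to pass from a pointwise inequality between two of the $B$-functions to the reverse inequality between the points where they cross $\DE(F)^s$. I would first reduce to $\DE(F)=1$ exactly as in the proof of Proposition~\ref{p:BuniqueSol}, using the $\DE(F)^{-1}$-similarity together with Proposition~\ref{p:B-simi}(i); after this normalization each relevant function is strictly decreasing with limit $0$ by Lemma~\ref{l:Bdecrease}, and its unique solution is simply the value of $s$ at which it equals $1$. The elementary principle I will use repeatedly is: if $g,h$ are strictly decreasing to $0$ and $g(s)\le h(s)$ for all $s$, then the solution of $g=1$ is $\le$ the solution of $h=1$ (evaluate $g$ at the solution $s_h$ of $h=1$; there $g(s_h)\le h(s_h)=1$, and since $g$ already equals $1$ at its own solution, strict monotonicity forces that solution to be $\le s_h$).

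Part (ii) then follows quickly. Lemma~\ref{l:Bdecreasing} gives $B^s_\de(F)\le B^s_{\nabla(F)}(F)$ whenever $\de\ge\nabla(F)$, so the principle above yields $s_\de^b\le s_{\nabla(F)}^b=s_0^b$ for all such $\de$; since the value $s_0^b$ is itself attained at $\de=\nabla(F)$ by part (i), we conclude $s_0^b=\max\{s_\de^b\mid\de\ge\nabla(F)\}$. Part (iii) splits the same way. The inequality $B^s_\de(F)\le B^s_\U(F)$ for every $\U\in K^1_\de(F)$, read directly off Definition~\ref{d:B}, gives $s_\de^b\le s_\U^b$ and hence $s_\de^b\le\min\{s_\U^b\mid\U\in K^1_\de(F)\}$. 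For the reverse inequality I will exploit that the minimum defining $B^s_\de(F)$ is attained: at the particular value $s=s_\de^b$ there is some $\U^\ast\in K^1_\de(F)$ with $B^{s_\de^b}_{\U^\ast}(F)=B^{s_\de^b}_\de(F)=\DE(F)^{s_\de^b}$, so $s_\de^b$ solves the $\U^\ast$-equation and, by the uniqueness in Proposition~\ref{p:BuniqueSol}, equals $s_{\U^\ast}^b\ge\min\{s_\U^b\}$; combining the two inequalities gives equality.

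I expect no genuine obstacle here, consistent with the author's remark that the box-counting proofs are simpler. The single point that demands care is the monotonicity-to-ordering principle in the second paragraph, which is valid only because the functions are \emph{strictly} decreasing after normalization; for this reason I would state the reduction to $\DE(F)=1$ explicitly (so that Lemma~\ref{l:Bdecrease} can be cited) rather than leave it implicit, since a merely weak pointwise inequality between two non-strictly-monotone functions would not suffice to order their crossing points.
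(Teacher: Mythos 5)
Your proposal is correct and follows essentially the same route as the paper: the paper leaves this lemma to the reader as a direct transcription of its proof of the Hausdorff analogue, which proceeds exactly as you do --- part (i) from $B^s(F)=B^s_{\nabla(F)}(F)$, part (ii) from the monotonicity $B^s_\de(F)\leq B^s_{\nabla(F)}(F)$, and part (iii) by combining $B^s_\de(F)\leq B^s_\U(F)$ with the fact that the minimum defining $B^{s}_\de(F)$ is attained at some $\U^\ast$, whose equation $s_\de^b$ then solves, forcing $s_\de^b=s_{\U^\ast}^b$ by the uniqueness in Proposition~\ref{p:BuniqueSol}. Your only additions --- stating the normalization to $\DE(F)=1$ and the strict-monotonicity crossing principle explicitly --- are details the paper leaves implicit, and they are handled correctly.
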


\begin{defn}\label{d:DIMfB}
For a finite, non-empty subset $F\subseteq (X,d)$, we define
\[ \textrm{dim} \m(F) := \left\lbrace
  \begin{array}{c l}
    0 & \textrm{if $|F|=1$},\\
    \infty & \textrm{if } K^1(F)=\emptyset,\\
    s_0^b & \textrm{if } K^1(F)\neq\emptyset
    .
  \end{array}
\right. \]
\end{defn}
We can summarize our results so far as follows:
\begin{thm}\label{t:DIMfB}
Let $F$ be a non-empty, finite set. Then $\mbox{dim} \m(F)$ is a positive real number if and only if $F$ has no focal points; it is infinity if and only if $F$ has focal points, and it is zero when $F$ has one element. When $F$ has no focal points,
\begin{equation}
\emph{dim} \m(F)=\frac{\ln T_{\nabla(F)}(F)}{\ln \frac{\DE(F)}{\nabla(F)}}\cdot
\end{equation}
\end{thm}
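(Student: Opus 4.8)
The plan is to read off the theorem from Definition~\ref{d:DIMfB} together with the two characterizations already established, in exact parallel to the proof of Theorem~\ref{t:DIMf}; indeed all the substantive work has already been absorbed into Proposition~\ref{p:BuniqueSol}, so what remains is a bookkeeping assembly.

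First I would dispose of the degenerate case. When $|F|=1$ the first branch of Definition~\ref{d:DIMfB} gives $\mbox{dim}\m(F)=0$ directly, which settles the final clause of the theorem. For the remaining assertions I may therefore assume $|F|\geq 2$, so that focal points and $2$-coverings are defined and $\U_0=\{F\}\in K(F)$.

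Next I would split according to whether $F$ has focal points, using Theorem~\ref{t:FOCAL} to translate this dichotomy into the language of $K^1(F)$. If $F$ has no focal points, then $K^1(F)\neq\emptyset$ by Theorem~\ref{t:FOCAL}, so Definition~\ref{d:DIMfB} gives $\mbox{dim}\m(F)=s_0^b$; and Proposition~\ref{p:BuniqueSol}(iii) guarantees that $s_0^b$ is the unique, positive, real solution of $B^s(F)=\DE(F)^s$. Conversely, if $F$ has focal points, then $K^1(F)=\emptyset$, again by Theorem~\ref{t:FOCAL}, so the second branch of Definition~\ref{d:DIMfB} yields $\mbox{dim}\m(F)=\infty$. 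This establishes simultaneously that $\mbox{dim}\m(F)$ is a positive real number exactly when $F$ has no focal points, and that it is infinite exactly when $F$ does have focal points.

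Finally, the explicit formula is precisely the ``Moreover'' clause of Proposition~\ref{p:BuniqueSol}: under the standing hypothesis that $F$ has no focal points we have $\mbox{dim}\m(F)=s_0^b=\frac{\ln T_{\nabla(F)}(F)}{\ln\frac{\DE(F)}{\nabla(F)}}$, and the denominator is strictly positive because $\nabla(F)<\DE(F)$ by Theorem~\ref{t:FOCAL}. Since every step is a direct appeal to a previously proved result, there is no genuine obstacle here; the only point demanding any care is checking that the three branches of Definition~\ref{d:DIMfB} are mutually exclusive and jointly exhaustive once the single-point case is separated off, and this is exactly what the equivalence $\text{(no focal points)}\iff K^1(F)\neq\emptyset$ in Theorem~\ref{t:FOCAL} supplies.
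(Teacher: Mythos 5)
Your proposal is correct and matches the paper's intent exactly: the paper states Theorem~\ref{t:DIMfB} as a summary of Definition~\ref{d:DIMfB}, Proposition~\ref{p:BuniqueSol}, and the focal-point dichotomy of Theorem~\ref{t:FOCAL}, which is precisely the assembly you carry out. Your explicit check that the branches of the definition are exhaustive and mutually exclusive (via the equivalence of ``no focal points'' with $K^1(F)\neq\emptyset$), and that the denominator $\ln\frac{\DE(F)}{\nabla(F)}$ is positive, only makes the paper's implicit argument more careful.
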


\begin{thm}
Let $\eta:X\rightarrow X'$ be an $(r,\beta)$-H\"{o}lder equivalence, and $F\subseteq X$ a finite space. Then
\begin{equation}\label{e:dimfBSim}
\beta \cdot\emph{dim} \m(\eta(F))=\emph{dim} \m(F).
\end{equation}
\end{thm}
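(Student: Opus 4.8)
The plan is to prove the scaling relation $\beta\cdot\mbox{dim}\m(\eta(F))=\mbox{dim}\m(F)$ by exploiting the $H$-dimension argument of Theorem~\ref{t:dimHolder} almost verbatim, since all the infrastructure for the box-dimension has been set up in exact parallel to the Hausdorff case. First I would dispose of the degenerate cases: by Lemma~\ref{l:(r,a)-H}(iv), $F$ has focal points if and only if $\eta(F)$ does, and $|F|=1$ if and only if $|\eta(F)|=1$. Hence by Theorem~\ref{t:DIMfB} the two dimensions are simultaneously $\infty$ or simultaneously $0$, and in either case the relation~(\ref{e:dimfBSim}) holds (read $\infty=\infty$ and $0=0$). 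This reduces everything to the generic case in which $F$, and therefore $\eta(F)$, has no focal points.

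In that case $\mbox{dim}\m(\eta(F))$ is by definition the unique positive solution $s$ of the equation
\begin{equation}\label{e:boxScaleTgt}
B^s(\eta(F))=\DE(\eta(F))^s.
\end{equation}
The key step is to rewrite both sides of~(\ref{e:boxScaleTgt}) in terms of quantities attached to $F$. On the left I would invoke Proposition~\ref{p:B-simi}(iii), which gives $B^s(\eta(F))=r^sB^{s\beta}(F)$. On the right I would use Lemma~\ref{l:(r,a)-H}(ii), namely $\DE(\eta(F))=r\DE(F)^\beta$, so that $\DE(\eta(F))^s=r^s\DE(F)^{s\beta}$. Substituting both into~(\ref{e:boxScaleTgt}) and cancelling the common factor $r^s>0$ shows that $s$ solves~(\ref{e:boxScaleTgt}) if and only if $s\beta$ solves $B^{s\beta}(F)=\DE(F)^{s\beta}$, i.e. if and only if $s\beta=\mbox{dim}\m(F)$ by the uniqueness part of Proposition~\ref{p:BuniqueSol}(iii). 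Therefore $s=\mbox{dim}\m(\eta(F))$ satisfies $\beta\cdot\mbox{dim}\m(\eta(F))=\mbox{dim}\m(F)$, which is~(\ref{e:dimfBSim}).

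I do not expect a genuine obstacle here, since the argument is a direct transcription of the proof of Theorem~\ref{t:dimHolder} with $H$ replaced by $B$; the only thing to check is that the two ingredients it needs—the scaling law $B^s(\eta(F))=r^sB^{s\beta}(F)$ and the uniqueness of the positive solution—are available, and they are precisely Proposition~\ref{p:B-simi}(iii) and Proposition~\ref{p:BuniqueSol}(iii). The one point deserving a moment's care is the bookkeeping of the exponent substitution $s\mapsto s\beta$: one must track that $H^{s\beta}$-type expressions are evaluated at the rescaled exponent consistently on both sides, so that the cancellation of $r^s$ is legitimate and the uniqueness statement is applied to the variable $s\beta$ rather than $s$. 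As noted in the paper, this is the same substitution that appears in Theorem~\ref{t:dimHolder}, so no new difficulty arises.
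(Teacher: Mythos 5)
Your proof is correct and is essentially the paper's intended argument: the paper leaves the box-dimension proofs to the reader as ``similar, but usually simpler'' than the Hausdorff ones, and your argument is a faithful transcription of the paper's proof of Theorem~\ref{t:dimHolder} with $H^s$ replaced by $B^s$, using exactly the ingredients the paper supplies (Proposition~\ref{p:B-simi}, Lemma~\ref{l:(r,a)-H}, Proposition~\ref{p:BuniqueSol}, Theorem~\ref{t:DIMfB}). The only cosmetic quibble is that the equivalence $|F|=1 \Leftrightarrow |\eta(F)|=1$ follows from the injectivity in Lemma~\ref{l:(r,a)-H}(i) rather than from part (iv), which concerns focal points.
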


\section{Bounds.}\label{s:5}

In this section we collect technical results that are useful when computing finite dimension. Most of the results are classical ones adapted to the present situation. We start with the relationship between finite Hausdorff and finite box dimension.

\begin{lem}\label{l:H-Brelation}
Suppose $F$ has no focal points, $\de\geq \nabla(F)$, and $s\in [0,\infty)$. Then:
\begin{enumerate}
\item{$T_\de(F) \nabla(F)^s\leq B^s_\U(F)\leq |\U|\,\de^s,\,\,\forall \, \U\in K^1_\de(F).$}
\item{$T_\de (F)\nabla(F)^s\leq B^s_\de(F)\leq T_\de(F)\, \de^s$.}
\item{$H^s_\U(F)\leq B^s_\U(F), \,\,\forall\, \U\in K^1(F)$.}
\item{$H^s_\de(F)\leq B^s_\de(F)$.}
\item{$H^s(F)\leq B^s(F).$}
\end{enumerate}
\end{lem}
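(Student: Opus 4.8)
The plan is to prove the five inequalities in the stated order, exploiting the fact that (iii) is the fundamental comparison and that the other four follow by taking minima, maxima, or direct definitional bounds. The key elementary observation is the definitional difference between $H^s_\U(F)=\sum_{U_i\in\U}\DE(U_i)^s$ and $B^s_\U(F)=|\U|\,\DE(\U)^s$: the latter replaces every individual diameter $\DE(U_i)$ by the \emph{maximal} diameter $\DE(\U)=\max_i\DE(U_i)$. Since $s\geq 0$ and $\DE(U_i)\leq\DE(\U)$ for every $i$, we get $\DE(U_i)^s\leq\DE(\U)^s$ termwise, and summing over the $|\U|$ sets yields $H^s_\U(F)\leq|\U|\,\DE(\U)^s=B^s_\U(F)$. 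This is exactly part (iii), and it is the heart of the lemma.

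\medskip
\noindent\textbf{Proof of (i) and (ii): bounds on $B$.} First I would establish (i). For any $\U\in K^1_\de(F)$ we have $\nabla(F)\leq\DE(\U)\leq\de$ (the lower bound because $\nabla(F)$ is the minimal covering diameter, the upper bound because $\U\in K_\de(F)$), so raising to the $s$-th power gives $\nabla(F)^s\leq\DE(\U)^s\leq\de^s$. Multiplying through by $|\U|$ gives $|\U|\,\nabla(F)^s\leq B^s_\U(F)\leq|\U|\,\de^s$, and since $|\U|\geq T_\de(F)$ by the definition of $T_\de(F)$ as the minimum cardinality over $K^1_\de(F)$, the left end becomes $T_\de(F)\,\nabla(F)^s\leq B^s_\U(F)$, yielding (i). For (ii), recall $B^s_\de(F)=\min\{B^s_\U(F)\mid\U\in K^1_\de(F)\}$. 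The lower bound $T_\de(F)\,\nabla(F)^s\leq B^s_\de(F)$ is immediate from the lower bound in (i), which holds for every admissible $\U$. For the upper bound, choose a minimizing $\U^*$ realizing $|\U^*|=T_\de(F)$; applying the upper bound in (i) to $\U^*$ gives $B^s_\de(F)\leq B^s_{\U^*}(F)\leq|\U^*|\,\de^s=T_\de(F)\,\de^s$, which is (ii).

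\medskip
\noindent\textbf{Proof of (iv) and (v): from cover-level to the $H^s,B^s$ comparison.} For (iv), fix $\de\geq\nabla(F)$. Since $F$ has no focal points, $K^1(F)\neq\emptyset$ by Theorem~\ref{t:FOCAL}, so both $H^s_\de(F)$ and $B^s_\de(F)$ are minima over the \emph{same} index set $K^1_\de(F)$. Let $\U^\dagger$ realize $B^s_\de(F)=B^s_{\U^\dagger}(F)$. Then by (iii), $H^s_\de(F)\leq H^s_{\U^\dagger}(F)\leq B^s_{\U^\dagger}(F)=B^s_\de(F)$, where the first inequality holds because $H^s_\de(F)$ is the minimum of $H^s_\U(F)$ over all such $\U$. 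This proves (iv). Finally, (v) follows by taking maxima over $\de\geq\nabla(F)$: since $H^s_\de(F)\leq B^s_\de(F)$ pointwise in $\de$, we have $H^s(F)=\max_\de H^s_\de(F)\leq\max_\de B^s_\de(F)=B^s(F)$.

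\medskip
\noindent\textbf{Main obstacle.} None of the steps presents a genuine difficulty; the lemma is essentially a bookkeeping exercise in tracking which index set the minimum runs over. The only point requiring slight care is in (iv): one must verify that the minimizing cover for $B^s_\de$ is a legitimate competitor for $H^s_\de$, i.e.\ that both minima genuinely range over $K^1_\de(F)$. This is guaranteed precisely by the hypothesis that $F$ has no focal points (so $K^1(F)\neq\emptyset$), which is what makes the two definitions comparable cover-by-cover rather than through the degenerate $K^0$/$K^2$ cases. This matches the author's remark that the finite box-dimension proofs are ``similar, but usually simpler.''
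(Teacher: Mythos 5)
Your proof is correct and follows essentially the same route as the paper's: the termwise comparison $\DE(U_i)^s\leq\DE(\U)^s$ for (iii), the bounds $\nabla(F)\leq\DE(\U)\leq\de$ together with $|\U|\geq T_\de(F)$ for (i) and (ii), and passing to minima over $K^1_\de(F)$ (valid since no focal points means $K^1(F)\neq\emptyset$) and then maxima over $\de$ for (iv) and (v). The only difference is expository—you spell out the choice of minimizing covers where the paper compresses these steps into one line—so there is nothing to correct.
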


\begin{proof}
(i) This follows from the definitions and the inequalities $\nabla(F)\leq \DE(\U)\leq \de$, valid for all $\U\in K^1_\de(F)$. (ii) Both inequalities follow from (i) and the definition of $B^s_\de$. (iii) Let $\U=\{U_1,\dots,U_n\}\in K^1(F)$. Then
\[H^s_\U(F)=\sum_{i=1}^n \DE(U_i)^s\leq \sum_{i=1}^n \DE(\U)^s= |\U|\,\DE(\U)^s=B^s_\U(F),\,\textrm{ as desired.}\]
(iv) Using (iii) and the fact that $K^1(F)\neq \emptyset$, we have
\[H^s_\de(F):=\textrm{min }\{H^s_\U(F)|\,\U\in K^1_\de(F)\} \leq \textrm{min }\{B^s_\U(F)|\,\U\in K^1_\de(F)\}:=B^s_\de(F),\]
as was to be proved. The proof of (v) is similar, but one uses (iv) instead of (iii). This completes the proof of the lemma.
\end{proof}

\begin{cor}
For any $\U\in K^1_{\nabla(F)}(F)$, $H^s(F)\leq |\U|\,\,\nabla(F)^s$.
\end{cor}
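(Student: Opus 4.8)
The plan is to chain together inequalities that are already available, after first noting that the hypothesis is vacuous unless $F$ has no focal points. Indeed, if some $\U\in K^1_{\nabla(F)}(F)$ exists then $K^1(F)\neq\emptyset$, so $F$ has no focal points by Theorem~\ref{t:FOCAL}, and every preceding lemma applies.

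First I would record that any $\U\in K^1_{\nabla(F)}(F)$ has $\DE(\U)=\nabla(F)$: membership in $K_{\nabla(F)}(F)$ forces $\DE(\U)\leq\nabla(F)$, while $\DE(\U)\geq\nabla(F)$ holds for every $2$-covering by the definition of covering diameter. Next, since $K^1(F)\neq\emptyset$, Lemma~\ref{l:H} identifies $H^s(F)=H^s_{\nabla(F)}(F)=\textrm{min}\{H^s_\V(F)\mid\DE(\V)=\nabla(F)\}$; in particular $H^s(F)\leq H^s_\U(F)$ for the cover $\U$ at hand. Finally, Lemma~\ref{l:H-Brelation}(iii) together with Definition~\ref{d:B} gives $H^s_\U(F)\leq B^s_\U(F)=|\U|\,\DE(\U)^s=|\U|\,\nabla(F)^s$. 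Combining these steps yields $H^s(F)\leq|\U|\,\nabla(F)^s$.

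I expect no real obstacle: this is a routine consequence of the established machinery, and the only point requiring a moment's care is the identity $\DE(\U)=\nabla(F)$, which is exactly what allows $\DE(\U)^s$ to be replaced by $\nabla(F)^s$ in the last bound. As an alternative route one could argue purely at the level of the global quantities, via $H^s(F)\leq B^s(F)=T_{\nabla(F)}(F)\,\nabla(F)^s\leq|\U|\,\nabla(F)^s$, using Lemma~\ref{l:H-Brelation}(v), the Corollary following Definition~\ref{d:T}, and the fact that $T_{\nabla(F)}(F)=\textrm{min}\{|\V|\mid\V\in K^1_{\nabla(F)}(F)\}\leq|\U|$.
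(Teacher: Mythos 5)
Your proof is correct and is essentially the paper's own argument: the paper likewise chains $H^s(F)\leq H^s_\U(F)\leq B^s_\U(F)$ using the minimality in the definition of $H^s$ together with Lemma~\ref{l:H-Brelation}(iii), then reads off $B^s_\U(F)=|\U|\,\nabla(F)^s$. You merely make explicit two points the paper leaves implicit, namely that the existence of $\U\in K^1_{\nabla(F)}(F)$ forces $F$ to have no focal points (Theorem~\ref{t:FOCAL}) and that $\DE(\U)=\nabla(F)$.
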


\begin{proof}
Given that $F$ has no focal points, for any such $\U$, $H^s(F)\leq H^s_\U(F)\leq B^s_\U(F)$, by (iii) of the lemma.
\end{proof}

\begin{prop}
Let $F$ be a finite metric space. Then,
\begin{equation}\label{e:dimf-fB}
\emph{dim} \hh(F)\leq \emph{dim} \m(F) \leq \frac{\ln (|F|-1)}{\ln \frac{\DE(F)}{\nabla(F)}}\cdot
\end{equation}
\end{prop}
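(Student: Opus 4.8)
The plan is to prove the two inequalities separately, after disposing of the degenerate cases. If $F$ has focal points, equivalently $K^1(F)=\emptyset$ by Theorem~\ref{t:FOCAL}, then $\mbox{dim}\hh(F)=\mbox{dim}\m(F)=\infty$ by Definitions~\ref{d:DIMf} and~\ref{d:DIMfB}, while $\nabla(F)=\DE(F)$ makes the denominator on the right vanish, so the right-hand member is $+\infty$ as well and the chain holds in the extended sense; the one-point case gives $0$ throughout. Thus I may assume $K^1(F)\neq\emptyset$, i.e. $F$ has no focal points, for the remainder (note this already forces $|F|\geq 3$).

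For $\mbox{dim}\hh(F)\leq\mbox{dim}\m(F)$, the governing fact is the pointwise bound $H^s(F)\leq B^s(F)$ of Lemma~\ref{l:H-Brelation}(v). As in the proof of Proposition~\ref{p:uniqueSol} I would first normalize to $\DE(F)=1$; this costs nothing because both finite dimensions are invariant under similarities (Theorem~\ref{t:dimHolder} and its box-counting analogue, each with $\beta=1$). Under this normalization $s_0:=\mbox{dim}\hh(F)$ is the unique root of $H^s(F)=1$ and $s_0^b:=\mbox{dim}\m(F)$ the unique root of $B^s(F)=1$. Evaluating the pointwise bound at $s=s_0$ gives $1=H^{s_0}(F)\leq B^{s_0}(F)$, so $B^{s_0}(F)\geq 1=B^{s_0^b}(F)$; since $B^s(F)$ is strictly decreasing by Lemma~\ref{l:Bdecrease}, this forces $s_0\leq s_0^b$, which is exactly the desired inequality.

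For the second inequality I would feed in the closed formula $\mbox{dim}\m(F)=\ln T_{\nabla(F)}(F)\big/\ln\frac{\DE(F)}{\nabla(F)}$ from Theorem~\ref{t:DIMfB}. The bound $T_{\nabla(F)}(F)\leq|F|-1$ recorded in Definition~\ref{d:T} (coming from Remark~\ref{r:n-1elements}) gives $\ln T_{\nabla(F)}(F)\leq\ln(|F|-1)$ by monotonicity of the logarithm, while the absence of focal points gives $\nabla(F)<\DE(F)$ by Theorem~\ref{t:FOCAL}, so the denominator $\ln\frac{\DE(F)}{\nabla(F)}$ is strictly positive; dividing the numerator estimate by this positive quantity yields the second inequality, and chaining the two completes the proof.

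I do not anticipate a genuine obstacle. The only step carrying any content is the comparison of the two dimensions, and that collapses to evaluating the already-established inequality $H^s(F)\leq B^s(F)$ at the single value $s_0$ and reading off the order of the two roots from strict monotonicity of $B^s(F)$. The sole delicate point is bookkeeping: phrasing the normalization and the focal/singleton cases so that the displayed chain is correctly interpreted in $[0,\infty]$.
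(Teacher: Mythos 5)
Your proposal is correct and follows essentially the same route as the paper's own proof: dispose of the singleton/focal cases, normalize to $\DE(F)=1$ by similarity invariance, obtain $\mbox{dim}\hh(F)\leq \mbox{dim}\m(F)$ from the pointwise bound $H^s(F)\leq B^s(F)$ of Lemma~\ref{l:H-Brelation}(v), and get the last inequality from the formula in Theorem~\ref{t:DIMfB} together with $T_{\nabla(F)}(F)\leq |F|-1$ from Definition~\ref{d:T}. The only difference is one of explicitness: the paper simply asserts that the first inequality "follows from" $H^s(F)\leq B^s(F)$, whereas you spell out the implicit step (evaluating at $s_0$ and invoking strict monotonicity of $B^s(F)$ from Lemma~\ref{l:Bdecrease} to order the two roots), which is a faithful filling-in of the same argument rather than a different one.
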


\begin{proof}
Clearly, the first inequality holds (with equality) when $F$ has only one element, or when it has focal points. So suppose $F$ has no focal points. Since both $H^s,\,B^s$ are invariant under similarities, we can assume, without loss of generality, that $\DE(F)=1$. In this case, the desired inequality follows from the fact that $H^s(F)\leq B^s(F)$, proved in Lemma~\ref{l:H-Brelation}. The last inequality follows from Theorem~\ref{t:DIMfB} and Definition~\ref{d:T}. This completes the proof.

\end{proof}

\subsection{Locally uniform spaces.}\label{s:locUnif} These are spaces for which the two finite dimensions we introduced coincide.

\begin{defn}
A finite metric space is called \emph{locally uniform} when \[\de(F)=\nabla(F).\]
Equivalently, when $\nu_F$ is constant.
\end{defn}

\begin{prop}\label{p:dimLocUnif}
If $F$ has no focal points and is locally uniform, then:
\[  \emph{dim} \hh(F) = \emph{dim} \m(F)\cdot \]
Consequently,
\[T_{\nabla(F)}(F)=\big(\DE(F)/\nabla(F)\big)^{\emph{dim} \hh(F)}\cdot\]
\end{prop}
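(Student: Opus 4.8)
The plan is to reduce the entire statement to the single identity $H^s(F)=B^s(F)$, valid for every $s\geq 0$. Granting this, both dimensions are defined as solutions of the same equation: $\dim\hh(F)$ is the unique solution of $H^s(F)=\DE(F)^s$ and $\dim\m(F)$ the unique solution of $B^s(F)=\DE(F)^s$ (uniqueness coming from Propositions~\ref{p:uniqueSol} and \ref{p:BuniqueSol}, since $F$ has no focal points). As these are literally the same equation, the solutions coincide, giving the first assertion. The displayed consequence is then immediate: substituting $\dim\hh(F)=\dim\m(F)$ into the explicit formula $\dim\m(F)=\ln T_{\nabla(F)}(F)/\ln(\DE(F)/\nabla(F))$ from Theorem~\ref{t:DIMfB} and exponentiating yields $T_{\nabla(F)}(F)=(\DE(F)/\nabla(F))^{\dim\hh(F)}$.

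The heart of the argument is a squeeze forced by local uniformity. Since $\de(F)=\nabla(F)$, Proposition~\ref{p:delta,nabla} shows $\nu_F$ is constant, equal to $\nabla(F)$. I would then look at any $2$-covering $\U$ with $\DE(\U)=\nabla(F)$. Each member $U_i$ contains at least two points, separated by at least $\de(F)$, so $\DE(U_i)\geq \de(F)=\nabla(F)$; on the other hand $\DE(U_i)\leq \DE(\U)=\nabla(F)$. Hence every element has diameter exactly $\nabla(F)$. In the notation of (\ref{e:H}) there is then a single distinct diameter $a_1=\nabla(F)$ with multiplicity $m_1=|\U|$, so $H^s_\U(F)=|\U|\,\nabla(F)^s=B^s_\U(F)$.

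To assemble the pieces I would invoke Lemma~\ref{l:H}: as $F$ has no focal points, $K^1(F)\neq\emptyset$, so $H^s(F)=\min\{H^s_\U(F)\mid \DE(\U)=\nabla(F)\}$. By the previous step each such term is $|\U|\,\nabla(F)^s$, whence $H^s(F)=\nabla(F)^s\min\{|\U|\mid \DE(\U)=\nabla(F)\}$. The same squeeze shows that in the locally uniform case the condition $\DE(\U)\leq\nabla(F)$ forces $\DE(\U)=\nabla(F)$, so that $K^1_{\nabla(F)}(F)$ consists precisely of the coverings with $\DE(\U)=\nabla(F)$; the minimizing set-count is therefore $T_{\nabla(F)}(F)$ by Definition~\ref{d:T}. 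This gives $H^s(F)=T_{\nabla(F)}(F)\,\nabla(F)^s$, which is exactly the value of $B^s(F)$ recorded in the corollary following Definition~\ref{d:T}. Hence $H^s(F)=B^s(F)$ for all $s$, closing the reduction.

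The main obstacle is really just the squeeze step: one must check carefully that local uniformity forces every member of a minimal-diameter covering to have diameter exactly $\nabla(F)$, so the sum defining $H^s_\U$ collapses to the single term $|\U|\,\nabla(F)^s$ that defines $B^s_\U$. Once that is in hand, everything else is bookkeeping with the established formulas for $H^s$, $B^s$, and $\dim\m$.
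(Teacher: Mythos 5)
Your proof is correct and rests on the same essential step as the paper's: local uniformity forces every element of a minimal-diameter $2$-covering to have diameter exactly $\nabla(F)$ (the paper states this as $k=1$ and $\de(F)=a_1=\nabla(F)$ in the notation of equation (\ref{e:H})). The paper then closes by citing Corollary~\ref{c:boundsU}, whose upper and lower bounds on $s_\U$ collapse to a single value when $\de(F)=\nabla(F)$, while you close by deriving the identity $H^s(F)=B^s(F)=T_{\nabla(F)}(F)\,\nabla(F)^s$ and noting that the two defining equations then coincide; this is equivalent bookkeeping built on the same squeeze.
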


\begin{proof}
Recall the notation $a_1,\dots,a_k$ introduced just before equation (\ref{e:H}). In general, $\de(F)\leq a_1<a_k\leq \nabla(F)$. Our hypothesis imply $k=1$, and $\de(F)= a_1= \nabla(F)$. The proposition follows from Corollary~\ref{c:boundsU}.
\end{proof}

\begin{exa} \label{ex:double}
Let $F$ be an arbitrary finite metric space. Consider its double, $D_x(F)$, defined as follows. Abstractly, it is the product of $F$ with $\{0,1\}$, where $d(0,1)=x$, with the product metric. More concretely, we can assume $F\subseteq \R^n$. Then
\[D_x(F):=\{(b_i,\varepsilon)\in\R^{n+1}|b_i\in F,\,\,\varepsilon=0,x\} \cdot\]
It is easy to see that $D_x(F)$ is locally uniform when $x<\de(F)$. In this case, $T_{\nabla(D_x(F))}(D_x(F))=|F|$, $\nabla(D_x(F))=x$, and $\DE(D_x(F))=\sqrt{\DE(F)^2+x^2}$. By Proposition~\ref{p:dimLocUnif},
\[\textrm{dim} \hh(D_x(F))= \frac{\ln |F|}{\ln\sqrt{1+\big(\frac{\DE(F)}{x}\big)^2}} \cdot\]
\end{exa}

\subsection{Mass distributions.} Mass distributions are used in the classical theory to obtain lower bounds to the Hausdorff dimension. A \emph{mass distribution} is a function $\mu:F\rightarrow [0,\infty)$. We extend to subsets $F'\subseteq F$ by
\[\mu(F'):=\sum_{x\in F'}\mu(x).\]

\begin{lem}
For any family $\{U_i\} {i=1}^m$ of subsets of $F$, we have:
\[\mu(\bigcup_{i=1}^m U_i)\leq \sum_{i=1}^m \mu(U_i).\]
\end{lem}

\begin{proof}
Obvious.
\end{proof}

\begin{prop}\label{t:MASSdist}
Let $\mu$ be a mass distribution on a finite set $F$ with no focal points. Suppose there exist $c>0,\,s>0$, such that $\mu(U)\leq c\,\Delta(U)^s$, for all $U\subseteq F$ with $\Delta(U)\leq \nabla(F)$, and $|U|\geq 2$. Then $\mu(F)\leq c H^s(F)$. If, moreover, $c\,\Delta(F)^s\leq \mu(F)$, then \[s\leq \emph{dim} \hh(F).\]
\end{prop}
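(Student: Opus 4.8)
The plan is to mimic the classical mass-distribution principle, exploiting the fact that $H^s(F)$ is realized by an optimal covering of diameter exactly $\nabla(F)$, which is precisely the regime in which the hypothesis on $\mu$ is assumed. First, since $F$ has no focal points, Theorem~\ref{t:FOCAL} gives $K^1(F)\neq\emptyset$, so by Lemma~\ref{l:H} we have $H^s(F)=H^s_{\nabla(F)}(F)=\textrm{min}\{H^s_\U(F)\mid\DE(\U)=\nabla(F)\}$. I would fix a covering $\U=\{U_i\}$ attaining this minimum, so that $H^s(F)=\sum_i\DE(U_i)^s$. Every member $U_i$ of this $\U$ satisfies $|U_i|\geq 2$ (it is a $2$-covering) and $\DE(U_i)\leq\DE(\U)=\nabla(F)$, so each $U_i$ falls under the hypothesis and $\mu(U_i)\leq c\,\DE(U_i)^s$. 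Applying the preceding subadditivity lemma to $F=\bigcup_i U_i$ and summing then yields
\[
\mu(F)\leq\sum_i\mu(U_i)\leq c\sum_i\DE(U_i)^s=c\,H^s_\U(F)=c\,H^s(F),
\]
which is the first assertion.

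For the second assertion, combining $c\,\DE(F)^s\leq\mu(F)$ with the inequality just obtained gives $\DE(F)^s\leq H^s(F)$. Here I would normalize: since $H^s$ and $\mbox{dim}\hh$ are invariant under similarities (Proposition~\ref{p:H-simi} and Theorem~\ref{t:dimHolder}), I may assume $\DE(F)=1$, so the inequality reads $H^s(F)\geq 1$. By Lemma~\ref{l:decrease}, the function $s\mapsto H^s(F)$ is strictly decreasing, and by definition $\mbox{dim}\hh(F)=s_0$ is the unique solution of $H^{s}(F)=1$. Consequently $H^s(F)\geq 1$ forces $s\leq s_0=\mbox{dim}\hh(F)$: if instead $s>s_0$ held, strict monotonicity would give $H^s(F)<H^{s_0}(F)=1$, a contradiction. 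This establishes $s\leq\mbox{dim}\hh(F)$.

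I do not expect a genuine obstacle here; the argument is a faithful adaptation of the classical principle. The only point requiring care --- and the reason the statement is phrased with the restriction $\DE(U)\leq\nabla(F)$ --- is matching the hypothesis on $\mu$ to the sets appearing in an optimal covering. This is exactly what Lemma~\ref{l:H} provides: the covering realizing $H^s(F)$ has diameter $\nabla(F)$, so all of its members automatically satisfy both $\DE(U_i)\leq\nabla(F)$ and $|U_i|\geq 2$, and no bound on $\mu$ over larger-diameter sets is ever needed. One should also note that $c>0$ is used to divide through in the final step, and that the normalization is legitimate precisely because both the equation $H^s(F)=\DE(F)^s$ and the value $\mbox{dim}\hh(F)$ transform compatibly under the rescaling similarity.
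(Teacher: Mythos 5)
Your proposal is correct and follows essentially the same route as the paper: cover $F$ by a $2$-covering of diameter $\nabla(F)$, apply the hypothesis on $\mu$ to each member together with subadditivity, and conclude via Lemma~\ref{l:H}; the only cosmetic differences are that you work with an optimal covering attaining the minimum (the paper takes an arbitrary $\U\in K^1_{\nabla(F)}(F)$ and then passes to the minimum), and that you spell out, via normalization and the strict monotonicity from Lemma~\ref{l:decrease}, the final implication $\DE(F)^s\leq H^s(F)\Rightarrow s\leq \mbox{dim}\hh(F)$, which the paper states without detail.
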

\begin{proof}
Let $\U=\{U_1,\dots,U_n\}\in K^1_{\nabla(F)}(F)$, be arbitrary. By hypothesis, $\mu(U_i)\leq c\,\Delta(U_i)^s$. Hence,
\[\mu(F)=\mu(\bigcup U_i)\leq\sum_{i=1}^n\mu(U_i)\leq c\sum_{i=1}^n\Delta(U_i)^s=cH^s_\U(F).\]
This readily implies that $\mu(F)\leq cH^s(F)$, as was to be proved. If we also know that $c\,\Delta(F)^s\leq \mu(F)$, then $\Delta(F)^s\leq H^s(F)$, which shows that $s\leq \textrm{dim} \hh(F)$. This completes the proof.
\end{proof}

\section{Computations.} \label{s:6}
We collect first results of a more or less general nature, and then compute several examples. We begin by showing that every positive real number is the dimension of some finite metric space.

\begin{thm}\label{t:DIMrealiz}
For every $t\in[0,\infty]$ there exist finite spaces $F_t$, such that

\[\emph{dim} \hh(F_t)=t=\emph{dim} \m(F_t).\]
\end{thm}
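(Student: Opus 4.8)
The plan is to split the argument into the three regimes $t=0$, $t=\infty$, and $t\in(0,\infty)$, reducing the finite case to a single flexible construction for which both dimensions can be read off at once. The unifying idea is to work with \emph{locally uniform} spaces having no focal points: for these, Proposition~\ref{p:dimLocUnif} gives $\mbox{dim}\hh(F)=\mbox{dim}\m(F)$, so it suffices to realize $t$ as this common value. The natural supply of such spaces is the double $D_x(F)$ of Example~\ref{ex:double}, whose dimension is given by an explicit, tunable formula.

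First I would dispose of the two extreme values. For $t=0$, take $F_0$ to be a single point; then $\mbox{dim}\hh(F_0)=0=\mbox{dim}\m(F_0)$ directly from Definitions~\ref{d:DIMf} and~\ref{d:DIMfB}. For $t=\infty$, take any finite space possessing a focal point — for instance a two-point space, or three points in an equilateral configuration, where every point is focal — so that $K^1(F)=\emptyset$ and, by Theorems~\ref{t:DIMf} and~\ref{t:DIMfB}, both finite dimensions equal $\infty$.

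The heart of the proof is the case $t\in(0,\infty)$. Let $S_N\subseteq\R^{N-1}$ denote the vertices of a regular simplex: $N$ points at mutual distance $1$, so that $\de(S_N)=\DE(S_N)=1$. For any $x\in(0,1)$, Example~\ref{ex:double} shows that the double $D_x(S_N)$ is locally uniform with no focal points (its nearest-neighbor distance is $x<\sqrt{1+x^2}=\DE(D_x(S_N))$), with $T_{\nabla}(D_x(S_N))=N$ and $\nabla(D_x(S_N))=x$, and that
\[
\mbox{dim}\hh(D_x(S_N))=\frac{\ln N}{\ln\sqrt{1+(1/x)^2}}.
\]
Given $t$, I would then choose an integer $N>2^{t/2}$ and set $x:=(N^{2/t}-1)^{-1/2}$. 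The inequality $N>2^{t/2}$ is exactly what guarantees $N^{2/t}>2$, hence $x\in(0,1)$, so $D_x(S_N)$ is admissible; substituting this $x$ into the displayed formula collapses the denominator to $\tfrac1t\ln N$ and yields $\mbox{dim}\hh(D_x(S_N))=t$. Since $D_x(S_N)$ is locally uniform without focal points, Proposition~\ref{p:dimLocUnif} gives $\mbox{dim}\m(D_x(S_N))=t$ as well, and we set $F_t:=D_x(S_N)$.

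The one genuine obstacle is reaching \emph{arbitrarily large} finite dimensions: the double of a fixed base only produces dimensions in a bounded interval, since doubling $S_N$ caps the dimension at $2\log_2 N$ (the constraint $x<\de(S_N)=1$ forces $\DE/\nabla=\sqrt{1+(1/x)^2}>\sqrt2$). This is precisely why the base simplex must be allowed to grow: taking $N$ large opens the range $(0,2\log_2 N)$ wide enough to contain any prescribed $t$, and the explicit value of $x$ above then pins the dimension exactly. Everything else is a routine verification that $D_x(S_N)$ behaves as asserted in Example~\ref{ex:double}.
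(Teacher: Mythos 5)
Your proof is correct, and it takes a cleaner route through the case $t\in(0,\infty)$ than the paper does. The paper uses the same skeleton (singleton for $t=0$, two-point set for $t=\infty$, locally uniform spaces plus Proposition~\ref{p:dimLocUnif} for the rest), but it splits the positive reals into two regimes handled by two different constructions: for $t\in[1,\infty)$ it uses an isoceles three-point set $A(\eps)\subseteq\R^2$ with side lengths $1,1-\eps,1-\eps$ and $\eps_t=1-2^{-1/t}$, and for $t\in(0,1)$ it uses the double $D_{x(t)}(\{0,1\})$ with $x(t)=(4^{1/t}-1)^{-1/2}$. Your construction $D_x(S_N)$ with $x=(N^{2/t}-1)^{-1/2}$ is precisely the paper's small-$t$ construction when $N=2$ (note $4^{1/t}=2^{2/t}$), and your observation that letting the base simplex grow ($N>2^{t/2}$) removes the ceiling $2\log_2 N$ on attainable dimensions is exactly what eliminates the paper's case split: one two-parameter family covers all of $(0,\infty)$. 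Your verification of the admissibility condition $x<\de(S_N)=1$ and of the no-focal-points property is the right one, and the algebra collapsing the formula of Example~\ref{ex:double} to $t$ is correct. The trade-off is mild: the paper's spaces are economical (three or four points, ambient dimension at most $2$), whereas your $F_t$ has $2N$ points sitting in $\R^N$ with $N$ growing like $2^{t/2}$; in exchange you get a single uniform formula and no regime analysis.
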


\begin{proof} We construct a family $F_t$ of locally uniform spaces, so that both dimensions coincide. For $t=0$ [resp. $t=\infty$] we can take $F_t$ to be any singleton [resp. any two-point set]. Suppose now that $t$ is a positive real number, and consider first the case where $t\in[1,\infty)$. For $\eps\in (0,1/2]$, define $A(\eps\emph{}):=\{a_1,a_2,a_3\}\subseteq\mathbb{R}^2$, where $a_1=(0,0)$, $a_2=(1,0)$, and $a_3=\Big(\frac  {1}{2}, \frac  {1}{2}\sqrt{3-8\eps+4\eps^2}\Big)$. We have $d(a_1,a_2)=1$, and $d(a_1,a_3)=d(a_2,a_3)=1-\eps$. $A(\eps)$ is locally uniform, since $\de(A(\eps))=1-\eps=\nabla(A(\eps))$. Clearly, $T_{\nabla(A(\eps))}(A(\varepsilon))=2$, and $\DE(A(\eps))=1$. By Proposition~\ref{p:dimLocUnif},
\[
\textrm{dim} \hh(A(\eps))= \textrm{dim} \m(A(\eps))= \frac{\ln 2}{\ln[\frac  {1}{1-\eps}]}\,\,\cdot
\]
Setting $\eps_t:=1-2^{-1/t}$, and $F_t:=A(\eps_t)$, we have $\textrm{dim} \hh(F_t)=t$, as desired.

Suppose now that $t\in(0,1)$. Set $x(t):=(4^{1/t}-1)^{-1/2}$, and use the double $D_{x(t)}(F)$ of Ex.~\ref{ex:double}, for $F:=\{0,1\}\subseteq \R$. Since $x(t)\in (0,1/\sqrt{3})$, the double is locally uniform, and
\[
\textrm{dim} \hh(D_{x(t)}(F))= \textrm{dim} \m(D_{x(t)}(F))) =t,
\]
as desired. The proof is complete.
\end{proof}

\begin{exa}\label{ex:dimLin}
Let $L_n\subseteq \mathbb{R}^1$ denote a set with $n$ equally  spaced points. Then $L_n$ is locally uniform and, for $n\geq3$, we have:
\[\textrm{dim} \hh(L_{2k})=\frac{\ln k}{\ln(2k-1)};\hspace{5mm} \textrm{dim} \hh(L_{2k+1})=\frac{\ln (k+1)}{\ln (2k)}\cdot\]
Indeed, if the distance between consecutive points is $x>0$, then $\de(L_n)=x=\nabla(L_n)$, $\DE(L_n)=(n-1)x$, and $T_{\nabla(L_n)}(L_n)$ equals $k$ [resp. $k+1$], for $n=2k$ [resp. $n=2k+1$]. Applying Proposition~\ref{p:dimLocUnif}, the result follows. Note that $\lim_{k\to\infty}\mbox{dim} \hh(F_k)= 1$ (the sequence is strictly increasing for $k\geq 4$).
\end{exa}

\begin{lem}\label{l:dim<}
Let $F$ denote a finite metric space with at least 3 elements, and let $t$ be a positive real number. Then the following conditions are equivalent:
\begin{enumerate}
\item{$\emph{dim} \hh(F)<t$,}
\item{$H^t(F)<\DE(F)^t$,}
\item{$\exists\,\U=\{U_1,\dots,U_m\}\in K^1_{\nabla(F)}(F)$, such that
    \[\sum_{i=1}^m \DE(U_i)^t<\DE(F)^t.\]}
\end{enumerate}
\end{lem}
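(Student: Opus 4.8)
The plan is to prove the chain of equivalences (i) $\Leftrightarrow$ (ii) $\Leftrightarrow$ (iii) for a finite space $F$ with at least three elements and $t > 0$. I would first unpack what each condition asserts using the machinery already developed, and then establish a cycle of implications. The cleanest route is to prove (i) $\Leftrightarrow$ (ii) directly from the definition of $\mbox{dim} \hh(F)$ as the solution $s_0$ of $H^s(F) = \DE(F)^s$, and then prove (ii) $\Leftrightarrow$ (iii) using Lemma~\ref{l:H}, which tells us that $H^t(F) = H^t_{\nabla(F)}(F) = \mbox{min}\{H^t_\U(F) \mid \DE(\U) = \nabla(F)\}$ when $K^1(F) \neq \emptyset$. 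Note that since $F$ has at least three elements and (i) presupposes a finite dimension, we are in the case where $F$ has no focal points, so $K^1(F) \neq \emptyset$ by Theorem~\ref{t:FOCAL}.

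For (i) $\Leftrightarrow$ (ii), I would invoke the monotonicity established in Lemma~\ref{l:decrease}: after reducing to $\DE(F) = 1$ via the similarity-invariance argument used in Proposition~\ref{p:uniqueSol}, the function $s \mapsto H^s(F)$ is strictly decreasing, and $s_0 = \mbox{dim} \hh(F)$ is the unique point where $H^s(F) = \DE(F)^s = 1$. Consider the auxiliary function $g(s) := H^s(F) - \DE(F)^s$. Since $H^s(F)$ is strictly decreasing while $\DE(F)^s = 1$ is constant (in the normalized case), $g$ is strictly decreasing and vanishes exactly at $s_0$. Therefore $g(t) < 0$, i.e. $H^t(F) < \DE(F)^t$, holds if and only if $t > s_0 = \mbox{dim} \hh(F)$, which is precisely (i). The reduction to $\DE(F) = 1$ is harmless because both $H^s(F)$ and $\DE(F)^s$ scale the same way under the rescaling similarity, so the sign of $H^t(F) - \DE(F)^t$ is preserved.

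For (ii) $\Leftrightarrow$ (iii), the forward direction is immediate: by Lemma~\ref{l:H}, $H^t(F) = \mbox{min}\{H^t_\U(F) \mid \DE(\U) = \nabla(F)\}$, and the minimizing $\U$ lies in $K^1_{\nabla(F)}(F)$; if $H^t(F) < \DE(F)^t$, then this minimizing $\U$ witnesses (iii), since $H^t_\U(F) = \sum_i \DE(U_i)^t$. Conversely, if some $\U \in K^1_{\nabla(F)}(F)$ satisfies $\sum_i \DE(U_i)^t < \DE(F)^t$, then $H^t(F) \leq H^t_\U(F) = \sum_i \DE(U_i)^t < \DE(F)^t$, giving (ii).

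The main obstacle I anticipate is bookkeeping rather than conceptual: I must verify that $F$ genuinely has no focal points in all three conditions so that the formula from Lemma~\ref{l:H} for the $K^1(F) \neq \emptyset$ case applies throughout. Conditions (i) and (iii) each tacitly force this (a finite dimension means no focal points by Theorem~\ref{t:DIMf}; and the existence of any $\U \in K^1(F)$ forces $K^1(F) \neq \emptyset$, hence no focal points), but (ii) alone is the delicate one — I would need to confirm that if $K^1(F) = \emptyset$ then $H^t(F) = \DE(F)^t$ by Lemma~\ref{l:H}, so the strict inequality in (ii) already \emph{excludes} the focal-point case and places us in the regime where the equivalences are valid. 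Handling this degenerate case cleanly at the outset, before running the strictly-decreasing-function argument, is the step that requires the most care.
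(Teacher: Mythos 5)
Your proposal is correct and follows essentially the same route as the paper's proof: normalize to $\DE(F)=1$, use the strict monotonicity of $H^s(F)$ from Lemma~\ref{l:decrease} to get (i) $\Leftrightarrow$ (ii), and use the characterization $H^t(F)=\min\{H^t_\U(F)\mid \U\in K^1_{\nabla(F)}(F)\}$ from Lemma~\ref{l:H} to get (ii) $\Leftrightarrow$ (iii), the only difference being that the paper arranges these as a cycle (i)~$\Rightarrow$~(ii)~$\Rightarrow$~(iii)~$\Rightarrow$~(i). Your explicit treatment of the focal-point degenerate case is actually a point of extra care that the paper's proof leaves implicit.
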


\begin{proof}
We may assume, without loss of generality, that $\DE(F)=1$. Set $s_0= \mbox{dim} \hh(F)$, so that $H^{s_0}(F)=1$. By Lemma~\ref{l:decrease}, if $s_0<t$, then $H^t(F)<1$, as desired. We show now that (ii) implies (iii). Given $t$, we can find $\U\in K^1_{\nabla(F)}(F)$ such that $H^t(F)= H^t_\U(F)=\sum \DE(U_i)^t$, as required. Suppose now that (iii) holds. Given $\U\in K^1_{\nabla(F)}(F)$ satisfying $H^t_\U(F)<1$, we have $H^t(F)\leq H^t_\U(F)<1$. Hence, $s_0<t$. This completes the proof.
\end{proof}

\begin{thm}\label{t:DIMsubsetR}
Let $F$ be a subset of $\mathbb{R}^1$. Then
\begin{enumerate}
\item{If $|F|=3$, then $\emph{dim} \hh(F)=1$, and $\emph{dim} \m(F)\geq 1$. }
\item{If $|F|\geq 4$, then $\emph{dim} \hh(F)<1$.}
\end{enumerate}
\end{thm}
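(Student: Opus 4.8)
The plan is to use the very explicit structure of $2$-coverings available for a subset of the line. I would write $F=\{x_1<\cdots<x_n\}$ and record the gaps $g_i:=x_{i+1}-x_i>0$, so that $\DE(F)=\sum_{i=1}^{n-1}g_i$. The first step is to read off the nearest–neighbor data: $\nu_F(x_1)=g_1$, $\nu_F(x_n)=g_{n-1}$, and $\nu_F(x_m)=\min(g_{m-1},g_m)$ for interior $m$. By Proposition~\ref{p:delta,nabla} this gives $\nabla(F)=\max_m\nu_F(x_m)$, whence $g_1,g_{n-1}\leq\nabla(F)$ and, for $n\geq 3$, no point is focal (an endpoint has a neighbor strictly closer than $\DE(F)$). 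Thus $\mbox{dim}\hh(F)$ is a positive real solving $H^s(F)=\DE(F)^s$ and, crucially, Lemma~\ref{l:dim<} becomes applicable.

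For (i) with $n=3$, I would note that the only subsets of diameter $\leq\nabla(F)=\max(g_1,g_2)$ and size $\geq 2$ are $\{x_1,x_2\}$ and $\{x_2,x_3\}$, since $\{x_1,x_3\}$ and $F$ have diameter $g_1+g_2$. Hence the unique $\U$ with $\DE(\U)=\nabla(F)$ is $\{\{x_1,x_2\},\{x_2,x_3\}\}$, so by Lemma~\ref{l:H}, $H^s(F)=g_1^s+g_2^s$. The defining equation $g_1^s+g_2^s=(g_1+g_2)^s$ is visibly solved by $s=1$, which is the only solution by Proposition~\ref{p:uniqueSol}; thus $\mbox{dim}\hh(F)=1$. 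For $\mbox{dim}\m(F)\geq 1$ I would simply invoke $\mbox{dim}\hh(F)\leq\mbox{dim}\m(F)$ from Section~\ref{s:5} (or compute directly, using $T_{\nabla(F)}(F)=2$ and $\frac{g_1+g_2}{\max(g_1,g_2)}\leq 2$).

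For (ii) with $n\geq 4$, the plan is to verify condition (iii) of Lemma~\ref{l:dim<} at $t=1$, i.e. to exhibit $\U\in K^1_{\nabla(F)}(F)$ with $\sum_i\DE(U_i)<\DE(F)$. My candidate is the family of consecutive pairs $P_i=\{x_i,x_{i+1}\}$ taken over exactly those $i$ with $g_i\leq\nabla(F)$. By construction $\DE(\U)\leq\nabla(F)<\DE(F)$, so $\U\in K^1_{\nabla(F)}(F)$ once I check it covers $F$; its total diameter is $\sum_{g_i\leq\nabla(F)}g_i=\DE(F)-\sum_{g_i>\nabla(F)}g_i$. If some gap exceeds $\nabla(F)$, the subtracted sum is positive and I am done; otherwise $\U$ is the full chain of pairs and its total is exactly $\DE(F)$, so I would instead delete one interior pair $P_i$ with $2\leq i\leq n-2$ (available precisely because $n\geq 4$), which keeps $x_i$ covered by $P_{i-1}$ and $x_{i+1}$ by $P_{i+1}$ and lowers the total to $\DE(F)-g_i<\DE(F)$.

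I expect the main obstacle to be the coverage claim, since the naive consecutive-pairs covering telescopes to exactly $\DE(F)$ and discarding the pairs over large gaps threatens to leave a point uncovered. This is handled by the key observation that two \emph{adjacent} gaps cannot both exceed $\nabla(F)$: if $g_{m-1},g_m>\nabla(F)$ then $\nu_F(x_m)=\min(g_{m-1},g_m)>\nabla(F)$, contradicting $\nabla(F)=\max_m\nu_F(x_m)$. Combined with $g_1,g_{n-1}\leq\nabla(F)$ (which keep the two end pairs in $\U$), this shows the small–gap pairs already cover $F$, after which Lemma~\ref{l:dim<} delivers $\mbox{dim}\hh(F)<1$ in both cases.
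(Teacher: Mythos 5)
Your proposal is correct and follows essentially the same route as the paper's own proof: the same gap decomposition, the same consecutive-pair coverings, the same two-case analysis (some gap exceeding $\nabla(F)$ versus none), the same coverage argument via "adjacent gaps cannot both exceed $\nabla(F)$" plus $g_1,g_{n-1}\leq\nabla(F)$, and the same appeal to Lemma~\ref{l:dim<} at $t=1$. The only cosmetic difference is in part (i), where the paper computes $B^s(F)=2\nabla(F)^s$ directly while you primarily invoke the general inequality $\mbox{dim}\hh(F)\leq \mbox{dim}\m(F)$; both are valid.
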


\begin{proof}
Assume $F=\{a_0,a_1,\dots,a_n\}$, and, without loss of generality, that $a_0=0$, and $0<a_1<\cdots<a_n$. Let $y_i:=a_i-a_{i-1}>0$, for $i=1,\dots,n$. We have $\DE(F)=a_n=y_1+\cdots+y_n$.

We prove (i). When $|F|=3$, $\nabla(F)=\textrm{max}\{y_1,y_2\}=y_2$, say, and $K^1(F)=K^1_{\nabla(F)}(F)=\{\U\}$, with $\U=\{\{a_0,a_1\},\{a_1,a_2\}\}$. Hence $H^1(F)=y_1+y_2$, and the equation $H^1(F)=\DE(F)$ yields $\textrm{dim} \hh(F)=1$. On the other hand, $B^s(F)=2y_2^s$, and $ \textrm{dim} \m(F)=\ln 2/\ln(1+(y_1/y_2))\leq 1.$ The proof of (i) is complete.

Consider (ii). Note that $\nu_F(a_1)= y_1,\,\nu_F(a_n)=y_n$, and $\nu_F(a_i)=\textrm{min }\{y_i,y_{y+1}\}$. Suppose $\nabla(F)=y_k$, and consider the 2-covering ($n\geq 3$):
\[\U:=\{\{a_0,a_1\},\{a_1,a_2\},\dots, \{a_{n-1},a_n\}\}.\]
We distinguish two cases: (a) $y_k=\textrm{max }\{y_i|1\leq i\leq n\}$, and (b) $y_k<y_j$, for some $j\leq n$. In case (a), $\DE(\U)=\textrm{max }\{y_i\}=y_k=\nabla(F)$. Then $\U':=\U\setminus \{a_1,a_2\}$ still covers (because $n\geq 3$), and $\U'\in K^1_{\nabla(F)}(F)$. Clearly, $H^1_{\U'}(F)<\DE(F)$, and the result follows from Lemma~\ref{l:dim<}.

In case (b), we set:
\[\U'':=\U\setminus \{U_\ell\in \U| \DE(U_\ell)> y_k\},\]
where $U_\ell$ stands for $\{a_\ell,a_{\ell+1}\}$. Now, $\U''$ would fail to cover $F$, only if: (1) two consecutive $U_i$ are removed, or (2) $U_1$ or $U_n$ are removed. In case (1), suppose we have removed $U_\ell,U_{\ell+1}$, for some $1<\ell<n$. Then $\nu_F(a_\ell)=\textrm{min }\{y_\ell,y_{\ell+1}\}>\nabla(F)$, a contradiction. To deal with (2), notice that always $y_1,\,y_n\leq \nabla(F)$, so these sets will not be removed. In conclusion, $\U''$ is in $K^1_{\nabla(F)}(F)$, and $H^1_{\U''}(F)<\DE(F)$, and Lemma ~\ref{l:dim<} gives the result. This proves (ii) and completes the proof.
\end{proof}

\begin{rem}
When $|F|>3$, $ \textrm{dim} \m(F)$ can be larger or smaller than 1.
\end{rem}

\begin{cor}
Let $F$ be a three-point subset of $\R^n$. Then $\mbox{dim} \hh(F)=1$ if and only if $F$ is collinear.
\end{cor}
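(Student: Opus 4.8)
The plan is to fix notation for the three pairwise distances and then treat the two implications separately, reducing each to a result already proved. Write $F=\{p_1,p_2,p_3\}$ and let $c:=\DE(F)$ be the largest pairwise distance, say $c=d(p_1,p_3)$; put $a:=d(p_1,p_2)$ and $b:=d(p_2,p_3)$, so that $a,b\le c$. The geometric fact I will lean on is the triangle inequality $c\le a+b$ together with its equality case: $a+b=c$ holds if and only if $p_2$ lies on the segment $[p_1,p_3]$, i.e. if and only if $F$ is collinear.

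For the implication ``collinear $\Rightarrow\ \mbox{dim} \hh(F)=1$'' I would observe that three collinear points in $\R^n$ are isometric to a three-point subset of $\R^1$. An isometry is a $1$-similarity (an $(r,\beta)$-H\"older equivalence with $r=\beta=1$), so Theorem~\ref{t:dimHolder} gives $\mbox{dim} \hh(F)=\mbox{dim} \hh(\eta(F))$, and Theorem~\ref{t:DIMsubsetR}(i) gives $\mbox{dim} \hh(\eta(F))=1$.

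For the converse I would argue as follows. If $\mbox{dim} \hh(F)=1$, then in particular the dimension is finite, so by Theorem~\ref{t:DIMf} (equivalently Theorem~\ref{t:FOCAL}) $F$ has no focal points and $\nabla(F)<c$. The key computation is then to identify $H^s(F)$. From $\nu_F(p_1)=a$, $\nu_F(p_3)=b$, $\nu_F(p_2)=\min\{a,b\}$ and Proposition~\ref{p:delta,nabla}(ii) I obtain $\nabla(F)=\max\{a,b\}$. Since the only subsets of $F$ of size $\ge 2$ and diameter $\le\nabla(F)<c$ are $\{p_1,p_2\}$ and $\{p_2,p_3\}$, the family $K^1_{\nabla(F)}(F)$ consists of the single $2$-covering $\U=\{\{p_1,p_2\},\{p_2,p_3\}\}$; hence Lemma~\ref{l:H} yields $H^s(F)=a^s+b^s$. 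Because $\mbox{dim} \hh(F)=1$ is by definition a solution of $H^s(F)=\DE(F)^s$, evaluating at $s=1$ gives $a+b=c$, and the equality case of the triangle inequality forces $F$ to be collinear.

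I do not expect a serious obstacle here; the only delicate point lies in the bookkeeping of the converse. One must verify that finiteness of $\mbox{dim} \hh(F)$ excludes focal points \emph{before} writing down $H^s(F)$, and that $K^1_{\nabla(F)}(F)$ genuinely collapses to the single covering $\U$, so that $H^s(F)=a^s+b^s$ with no remaining minimization. This collapse is precisely where the ``three points'' hypothesis does its work, and it is what makes the defining equation reduce, at $s=1$, to the clean collinearity condition $a+b=c$.
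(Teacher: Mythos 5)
Your proposal is correct and follows essentially the same route as the paper: both directions reduce to the identity $H^s(F)=a^s+b^s$ (valid once focal points are excluded, so that $\nabla(F)=\max\{a,b\}<c$ and the minimization collapses to the single covering $\{\{p_1,p_2\},\{p_2,p_3\}\}$), evaluated at $s=1$ to invoke the equality case of the triangle inequality, with the converse delegated to Theorem~\ref{t:DIMsubsetR}. The only difference is that you spell out details the paper leaves implicit, namely the unique-covering bookkeeping and the isometry-plus-Theorem~\ref{t:dimHolder} step needed to transport Theorem~\ref{t:DIMsubsetR} from $\R^1$ to collinear subsets of $\R^n$.
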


\begin{proof}
Suppose $F$ has dimension one. Let $a\leq b\leq c$ denote the three pairwise distances between $F$'s elements. Since $c=\DE(F)$, and $b=\nabla(F)$, we see that $b<c$. It follows that $H^s(F)=a^s+b^s$. By hypothesis, $a+b=H^1(F)=\DE(F)^1=c$, hence the points are collinear. The converse follows from Theorem~\ref{t:DIMsubsetR}. The proof is complete.
\end{proof}

\begin{exa}\label{e:metric}
Let $F\subseteq \R^2$ consist of the points $(0,0),(0,3),(4,0)$. We let $F_2$ [resp. $F_1,\,F_\infty$] denote $F$ with the Euclidean [resp. $\ell_1$, $\ell_\infty$] metric. Then $\mbox{dim} \hh(F_2)=2< \ln 2/\ln(5/4)=\mbox{dim} \m(F_2)$; and $\mbox{dim} \hh(F_1)=1< \ln 2/\ln (7/4)= \mbox{dim} \m(F_1)$. But $\mbox{dim} \hh(F_\infty)=\infty=\mbox{dim} \m(F_\infty)$ because $F_\infty$ has a focal point.
\end{exa}

\begin{exa}\label{ex:fold2}
(Continues Ex.~\ref{ex:fold}). By Theorem~\ref{t:dimHolder}, $\mbox{dim} \hh(F_n')=2\cdot\mbox{dim} \hh(F_n)$. Using Ex.~\ref{ex:dimLin},
\[\mbox{dim} \hh(F_{2k}')=\frac{\ln k^2}{\ln(2k-1)};\qquad
\mbox{dim} \hh(F_{2k+1}')=\frac{\ln (k+1)^2}{\ln(2k)}\cdot\]
Note that the sequence $\mbox{dim} \hh(F_{3}')=2$, $\mbox{dim} \hh(F_{4}')\approx 1.26$, $\mbox{dim} \hh(F_{5}')\approx 1.58,\dots$, converges: $\mbox{dim} \hh(F_{n}')\nearrow 2$, as ${n\to \infty}$.
\end{exa}

While the classical Hausdorff dimension is well-behaved with respect to H\"{o}lder transformations, $\mbox{dim} \hh(-)$ is not. For instance, for a function $\eta:X\rightarrow X'$, the following assertions hold (see Falconer~\cite{Falconer}):
\begin{enumerate}
\item{\emph{If $\eta$ satisfies an $(r,\beta)$-H\"{o}lder condition, then}
\begin{equation}\nonumber
\beta \cdot\mbox{dim}_\hau (\eta(X))\leq \mbox{dim}_\hau (X)
\end{equation}}
\item{\emph{If $\eta$ is bi-Lipschitz, then}
\begin{equation}\nonumber
\mbox{dim}_\hau (\eta(X))= \mbox{dim}_\hau (X)
\end{equation}}
\end{enumerate}

Lemma~\ref{l:simNOcond} suggests that these results cannot hold for $\mbox{dim} \hh(-)$. The example below shows this for (i). We leave it to the reader to find examples where (ii) fails.

\begin{exa}\label{e:noHolder}
Let $F:=\{x_1,\dots,x_4\}\subseteq \R$, where $x_1=0,\,x_2=1,\,x_3=b\geq 1,\,x_4=b+1$. Let $F':=\{y_1,y_2,y_3\}\subseteq \R$, where $y_i=x_i\,(i=1,2,3)$. Define $\eta:F\rightarrow F'$ by $\eta(x_i):= y_i$, for $i=1,2,3$, and $\eta(x_4):=y_3$. Clearly $\eta$ is a 1-similarity, and $\eta(F)=F'$. However, $\mbox{dim} \hh(\eta(F))=\mbox{dim} \hh(F')=1$, while $\mbox{dim} \hh(F)<1$, both claims by Theorem~\ref{t:DIMsubsetR}. Thus, $\mbox{dim} \hh(\eta(F))>\mbox{dim} \hh(F)$, contrary to (i) above.
\end{exa}

\begin{exa}\label{ex:Cantor} \textsc{[Cantor set]}
This example is related to the classical Cantor set $C$. Define a sequence of finite spaces $L_n\subseteq \R$, starting with $L_0:=\{0\}$. Next, add a point to $L_0$, at distance 2/3, to obtain $L_1$. For $L_n$, start from $L_{n-1}$, and add a point at distance $2/3^n$ to the right of every point of $L_{n-1}$. One can see that $|L_n|=2^n$, $\de(L_n)=2/3^n=\nabla(L_n)$, so that the $L_n$ are locally uniform. Moreover, $\DE(L_n)=(3^n-1)/3^n$, and $T_{\nabla(L_n)}(L_n)=2^{n-1}$. Using Proposition~\ref{p:dimLocUnif},
\[
\textrm{dim} \hh(L_n)= \textrm{dim} \m(L_n) =\frac{\ln 2^{n-1}} {\ln \big(\frac{3^n-1}{2} \big)}\cdot
\]

The attentive reader will have noticed the following convergence properties: $L_n\rightarrow C$ in the Hausdorff metric (see the next section for more details), and
\[
\lim_{n\rightarrow \infty}\textrm{dim} \hh(L_n)= \frac{\ln 2} {\ln 3} =\textrm{dim}_\hau (C)\cdot
\]
\end{exa}

\begin{exa} \textsc{[Square of Cantor set]}
Consider the cartesian square of the previous example, $L_n^2\subseteq I\times I\subseteq \R^2$. Thus, $|L_n^2|=2^{2n}$, $\DE(L_n^2)=\sqrt{2}\DE(L_n)=\sqrt{2}(3^n-1)/3^n$, $\de(L_n^2)=\de(L_n)=2/3^n=\nabla(L_n^2)$. Finally, $T_{\nabla(L_n^2)}(L_n^2)= |L_n|T_{\nabla(L_n)}(L_n)= 2^{2n-1}$. Again, using Proposition~\ref{p:dimLocUnif},
\[
\textrm{dim} \hh(L_n^2)= \textrm{dim} \m(L_n^2) =\frac{\ln 2^{2n-1}} {\ln (\frac{\sqrt{2}(3^n-1)}{2})}\cdot
\]
As in the previous example, $L_n^2\rightarrow C^2$, and
\[
\lim_{n\rightarrow \infty}\textrm{dim} \hh(L_n^2)= 2\frac{\ln 2} {\ln 3} =\textrm{dim}_\hau (C^2) \cdot
\]
\end{exa}

\begin{exa} \textsc{[Sierpinski triangle]}. Construct a sequence of finite spaces $L_n$, as follows. $L_0$ consists of a single point, say the origin of $\R^2$. Choose two directions, one in the direction of the $x$-axis, the other forms a 60 degree angle with the $x$-axis, and points towards the first quadrant. To build $L_1$, start with $L_0$ and add two points in the given directions, at distance $1/2$. Inductively, construct $L_n$ from $L_{n-1}$, by adding two points in the given directions to each point of $L_{n-1}$, at distance $1/2^n$. The following properties are easy to check: $|L_n|=3^n,\,\,\DE(L_n)=(2^n-1)/2^n$, and $\de(L_n)=1/2^n=\nabla(L_n)$. Finally, $T_{\nabla(L_n)}(L_n)=3^{n-1}$. Hence,
\[
\textrm{dim} \hh(L_n)= \textrm{dim} \m(L_n) =\frac{\ln 3^{n-1}} {\ln (2^n-1)}\cdot
\]
The reader can check that $L_n\rightarrow ST$, where $ST$ stands for the Sierpinski triangle, and
\[
\lim_{n\rightarrow \infty}\textrm{dim} \hh(L_n)= \frac{\ln 3} {\ln 2} =\textrm{dim}_\hau (ST) \cdot
\]

Proceeding in a similar way, but starting with three directions in $\R^3$, each forming a 60 degree angle with the other, one can construct a sequence of finite spaces $L_n$, related to the Sierpinski tetrahedron $STh$. It is not difficult to check that: $|L_n|=4^n,\,\,\DE(L_n)=(2^n-1)/2^n$, and $\de(L_n)=1/2^n=\nabla(L_n)$. Finally, $T_{\nabla(L_n)}(L_n)=4^{n-1}$. Hence,
\[
\textrm{dim} \hh(L_n)= \textrm{dim} \m(L_n) =\frac{\ln 4^{n-1}} {\ln (2^n-1)}\cdot
\]
As before, $L_n\rightarrow STh$, and
\[
\lim_{n\rightarrow \infty}\textrm{dim} \hh(L_n)= \frac{\ln 4} {\ln 2}=2 =\textrm{dim}_\hau (STh) \cdot
\]
\end{exa}

\begin{exa} \textsc{[Cantor carpet]}. Recall the classical Cantor carpet $CC$, and let $Q_0$ denote the unit square $I^2\subseteq\R^2$. Divide $Q_0$ into nine subsquares of side $1/3$, and remove the interior of the central one; call this set $Q_1$. Let $Q_{1,i}$ ($i=1,\dots,8$) denote the eight remaining subsquares of $Q_  {1}$. To obtain $Q_2$, replace each $Q_{1,i}$, by $Q_1$ scaled by $1/3$. Thus $Q_2$ has nine holes: a central square of side $1/3$, and eight squares of side $1/3^2$, surrounding the large one.

In general, $Q_{n+1}$ is obtained from $Q_n$ by replacing each $Q_{n}\cap Q_{1,i}$ ($i=1,\dots,8$), by $Q_{n}$ scaled by $1/3$. Thus, $Q_{n+1}$ has $8^{n}$ holes that are squares of side $1/3^{n}$ (the holes of smallest size in $Q_{n+1}$).

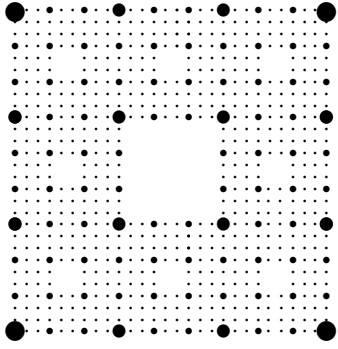
\begin{figure}[h]
\begin{center}%
\resizebox{125pt}{125pt}{
\begin{picture}(250,250)(-100,100)
\multiput(-100,100)(0,9){28}{\circle*   {1.5}}
\multiput(-91,100)(0,9){28}{\circle*   {1.5}}
\multiput(-82,100)(0,9){28}{\circle*   {1.5}}
\multiput(-73,100)(0,9){28}{\circle*   {1.5}}
\multiput(-46,100)(0,9){28}{\circle*   {1.5}}
\multiput(-37,100)(0,9){28}{\circle*   {1.5}}
\multiput(-28,100)(0,9){28}{\circle*   {1.5}}
\multiput(-19,100)(0,9){28}{\circle*   {1.5}}
\multiput(-64,100)(0,9){4}{\circle*   {1.5}}
\multiput(-55,100)(0,9){4}{\circle*   {1.5}}
\multiput(-64,154)(0,9){7}{\circle*   {1.5}}
\multiput(-55,154)(0,9){7}{\circle*   {1.5}}
\multiput(-64,235)(0,9){7}{\circle*   {1.5}}
\multiput(-55,235)(0,9){7}{\circle*   {1.5}}
\multiput(-64,316)(0,9){4}{\circle*   {1.5}}
\multiput(-55,316)(0,9){4}{\circle*   {1.5}}
\multiput(-10,100)(0,9){10}{\circle*   {1.5}}
\multiput(-1,100)(0,9){10}{\circle*   {1.5}}
\multiput(8,100)(0,9){10}{\circle*   {1.5}}
\multiput(-10,262)(0,9){10}{\circle*   {1.5}}
\multiput(-1,262)(0,9){10}{\circle*   {1.5}}
\multiput(8,262)(0,9){10}{\circle*   {1.5}}
\multiput(17,100)(0,9){4}{\circle*   {1.5}}
\multiput(26,100)(0,9){4}{\circle*   {1.5}}
\multiput(35,100)(0,9){10}{\circle*   {1.5}}
\multiput(17,154)(0,9){4}{\circle*   {1.5}}
\multiput(26,154)(0,9){4}{\circle*   {1.5}}
\multiput(35,154)(0,9){4}{\circle*   {1.5}}
\multiput(17,262)(0,9){4}{\circle*   {1.5}}
\multiput(26,262)(0,9){4}{\circle*   {1.5}}
\multiput(35,154)(0,9){4}{\circle*   {1.5}}
\multiput(17,316)(0,9){4}{\circle*   {1.5}}
\multiput(26,316)(0,9){4}{\circle*   {1.5}}
\multiput(35,316)(0,9){4}{\circle*   {1.5}}
\multiput(35,262)(0,9){6}{\circle*   {1.5}}
\multiput(26,316)(0,9){4}{\circle*   {1.5}}
\multiput(35,316)(0,9){4}{\circle*   {1.5}}
\multiput(44,100)(0,9){10}{\circle*   {1.5}}
\multiput(53,100)(0,9){10}{\circle*   {1.5}}
\multiput(44,262)(0,9){10}{\circle*   {1.5}}
\multiput(53,262)(0,9){10}{\circle*   {1.5}}
\multiput(62,100)(0,9){28}{\circle*   {1.5}}
\multiput(71,100)(0,9){28}{\circle*   {1.5}}
\multiput(80,100)(0,9){28}{\circle*   {1.5}}
\multiput(89,100)(0,9){28}{\circle*   {1.5}}
\multiput(97,100)(0,9){4}{\circle*   {1.5}}
\multiput(107,100)(0,9){4}{\circle*   {1.5}}
\multiput(97,154)(0,9){7}{\circle*   {1.5}}
\multiput(106,154)(0,9){7}{\circle*   {1.5}}
\multiput(98,235)(0,9){7}{\circle*   {1.5}}
\multiput(107,235)(0,9){7}{\circle*   {1.5}}
\multiput(98,316)(0,9){4}{\circle*   {1.5}}
\multiput(107,316)(0,9){4}{\circle*   {1.5}}
\multiput(116,100)(0,9){28}{\circle*   {1.5}}
\multiput(125,100)(0,9){28}{\circle*   {1.5}}
\multiput(133,100)(0,9){28}{\circle*   {1.5}}
\multiput(142,100)(0,9){28}{\circle*   {1.5}}
\multiput(-100,100)(0,242){2}{\circle*{15}}
\multiput(142,100)(0,242){2}{\circle*{15}}
\multiput(-100,181)(0,81){2}{\circle*{10}}
\multiput(142,181)(0,81){2}{\circle*{10}}
\multiput(-19,100)(0,81){4}{\circle*{10}}
\multiput(62,100)(0,81){4}{\circle*{10}}
\multiput(-100,181)(0,81){2}{\circle*{10}}
\multiput(142,181)(0,81){2}{\circle*{10}}
\multiput(-73,100)(0,27){10}{\circle*{5}}
\multiput(-46,100)(0,27){10}{\circle*{5}}
\multiput(89,100)(0,27){10}{\circle*{5}}
\multiput(116,100)(0,27){10}{\circle*{5}}
\multiput(-100,127)(0,27){8}{\circle*{5}}
\multiput(-19,127)(0,27){8}{\circle*{5}}
\multiput(62,127)(0,27){8}{\circle*{5}}
\multiput(142,127)(0,27){8}{\circle*{5}}
\multiput(8,100)(0,27){4}{\circle*{5}}
\multiput(35,100)(0,27){4}{\circle*{5}}
\multiput(8,262)(0,27){4}{\circle*{5}}
\multiput(35,262)(0,27){4}{\circle*{5}}
{
\linethickness{.7pt}
}
{\linethickness{.25pt}
}
{\linethickness{.7pt}
}
{
\linethickness{.25pt}
}
{
\linethickness{.25pt}
}
{
\linethickness{.25pt}
}
{
\linethickness{.25pt}
}
{
\linethickness{.25pt}
}
\end{picture}
}
\end{center}
\caption{Cantor Carpet: $L_0\subseteq L_1\subseteq L_2\subseteq L_3$. %
         \label{figure:c-lambda}}
\end{figure}

We approximate $CC$ by an increasing sequence of finite spaces $L_n$  (see Figure 1). Let $L_0$ consist of the four vertices of the unit square. To obtain $L_1$ we replace each $Q_{1,i}$ ($i=1,\dots,8$), by $L_0$ scaled by $1/3$; so $L_1$ consists of 16 points. In general, $L_{n+1}$ is obtained from $L_n$ by replacing each $L_{n}\cap Q_{1,i}$ ($i=1,\dots,8$), by $L_{n}$ scaled by $1/3$.

Let's compute recursively the cardinality of the $L_n$. The four corner portions of $L_{n+1}$ have $4|L_{n}|$ elements, and the rest (four portions which, together with the missing central part, build a "cross") contributes with $4(|L_n|-2(3^n+1))$; i.e. each of these four squares contributes with $|L_n|$ minus two sides (which they have in common with each corner portion). So we have the recursive formula:
\begin{equation}\label{e:cardCC}
|L_0|=4,\quad  |L_{n+1}| = 4|L_n|+4(|L_n|-2(3^n+1)) \quad (n\geq 0)
\end{equation}

We claim that
\begin{equation}\label{e:CC}
8^{n}\leq |L_n|\leq 2\cdot 8^{n}, \quad \mbox{for }\quad n\geq 2.
\end{equation}
To prove (\ref{e:CC}), write $|L_n|=8^n+k_n$ ($n\geq 0$). We show that $k_n>0$ for all $n$, by establishing the following claims. First, $8k_n \geq 22(3^n+1)$, which is valid for $n\geq 2$, and second, $k_n\leq 8^n$, valid for $n\geq 1$. The inequalities (\ref{e:CC}) follow directly from these claims.

For the first claim, let $n\geq 2$, and consider the inductive step. Using (\ref{e:cardCC}), $k_{n+1}=8[k_n-(3^n+1)]$. By the induction hypothesis, $8k_{n+1}\geq 8\cdot 22\cdot(3^n+1)-8^2(3^n+1)=112\cdot(3^n+1)$. The desired inequality $8k_{n+1}\geq 22\cdot(3^{n+1}+1)$ follows, since $22\cdot(3^{n+1}+1)=66\cdot(3^n+1)-44$, and $112\cdot(3^n+1)\geq 66\cdot(3^n+1)-44$.

The remaining inequality, $k_n\leq 8^n$ ($n\geq 1$), follows easily by induction. For the inductive step, recall $k_{n+1}=8[k_n-(3^n+1)]\leq 8\cdot 8^n-8\cdot(3^n+1)\leq 8^{n+1}$. This establishes (\ref{e:CC}).

The $L_n$ have the following properties: $\DE(L_n)=\sqrt2$,  $\de(L_n)=1/3^{n}=\nabla(L_n)$. Moreover, $T_{\nabla(L_n)}(L_n)=|L_n|/2$. Hence,
\[
\textrm{dim} \hh(L_n)= \textrm{dim} \m(L_n) =\frac{\ln (|L_n|/2)} {\ln (3^{n}\sqrt2)}\cdot
\]
Clearly, $L_n\rightarrow CC$ and, in view of (\ref{e:CC}),
\[
\lim_{n\rightarrow \infty}\textrm{dim} \hh(L_n)= \frac{\ln 8} {\ln 3}= \textrm{dim}_\hau (CC) \cdot
\]

\end{exa}

\section{Convergence}\label{s:7}

Let $Z$ be an arbitrary metric space. The set of all closed and bounded subsets of $Z$ with the Hausdorff distance $d _\hau$ is a metric space, denoted $(\FM(Z),d _\hau)$ or, more simply, $\FM(Z)$. In this section we prove two convergence theorems. They give conditions under which, if $F_n$ is a sequence of finite spaces that converges in $\FM(Z)$ to a space $X$, then $\mbox{dim} \hh(F_n)\rightarrow \mbox{dim}_\hau (X)$ [resp. $\mbox{dim} \m(F_n)\rightarrow \mbox{dim} _\boxx(X)$].

\subsection{Preliminaries.}

The next approximation result is well-known, see e.g. \cite{bbi}.

\begin{prop}\label{p:Fapprox}
Every compact subset of a metric space $Z$ is the limit, in $(\FM(Z),d _\hau)$, of a sequence of finite subsets.
\end{prop}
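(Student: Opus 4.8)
The plan is to exploit the fact that a compact metric space is totally bounded, and to take the approximating finite sets to be finite nets sitting inside $X$ itself. Let $X \subseteq Z$ be compact. First I would recall that total boundedness gives, for every $\eps > 0$, a finite collection of points of $X$ whose $\eps$-balls cover $X$: since the open cover $\{B(x,\eps) \mid x \in X\}$ admits a finite subcover, there is a finite set $F \subseteq X$ with $X \subseteq \bigcup_{p \in F} B(p,\eps)$. Applying this with $\eps = 1/n$ produces, for each $n \geq 1$, a finite set $F_n \subseteq X$ such that every point of $X$ lies within distance $1/n$ of some point of $F_n$.

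Next I would verify $d_\hau(F_n, X) \leq 1/n$ by checking the two one-sided distances separately. Since $F_n \subseteq X$, every point of $F_n$ has distance $0$ to $X$, so $\sup_{p \in F_n} \inf_{x \in X} d(p,x) = 0$. Conversely, the net property gives $\inf_{p \in F_n} d(x,p) \leq 1/n$ for every $x \in X$, whence $\sup_{x \in X} \inf_{p \in F_n} d(x,p) \leq 1/n$. Taking the maximum yields $d_\hau(F_n, X) \leq 1/n$, so that $d_\hau(F_n, X) \to 0$ and $F_n \to X$ in $\FM(Z)$.

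I expect essentially no obstacle here, as the result is standard. The only points deserving care are the (routine) observation that the Hausdorff distance is controlled only once \emph{both} one-sided distances are bounded --- and here the inclusion $F_n \subseteq X$ makes one of them vanish automatically --- together with the remark that each finite $F_n$ is closed and bounded, hence a genuine element of $\FM(Z)$, so that the convergence indeed takes place in the stated metric space.
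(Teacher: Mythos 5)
Your proof is correct. The paper in fact gives no proof of this proposition at all---it simply records it as well-known with a citation to Burago--Burago--Ivanov---so your argument fills in exactly the standard reasoning behind that citation: compactness gives total boundedness, a finite $(1/n)$-net $F_n\subseteq X$ satisfies $d_\hau(F_n,X)\leq 1/n$ because the inclusion $F_n\subseteq X$ kills one of the two one-sided distances, and each finite $F_n$ is closed and bounded, hence a legitimate element of $(\FM(Z),d_\hau)$. Nothing is missing.
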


Next, we extend previous definitions to spaces that are not necessarily finite.

\begin{defn}
Let $(X,d)$ be an arbitrary metric space. The diameter of $X$, $\DE(X)$, is $\sup\{d(x,y)|x,y\in X\}$. When $\DE(X)>0$ (equivalently, $X$ has at least two points), we define $\nu_X:X\rightarrow \R$,  by \[\nu_X(x):=\textrm{inf }\{d(x,y)|y\in X,\, x\neq y\},\]
and the constants:
\[\delta(X):=\inf\{\nu_X(x)|x\in X\}\in [0,\infty),\quad \nabla(X):=\sup\{\nu_X(x)|x\in X\}\in [0,\infty].\]
\end{defn}
\begin{rem}
In contrast to the diameter, which is defined for any (non-empty) space, we define $\nu_X,\de(X)$, and $\nabla(X)$, only for spaces with at least two points. As before, notice that $0\leq \de(X)\leq\nabla(X)\le\DE(X)$. However, $\de(X)$ may now be zero. When this is the case, $X$ is infinite (the converse is false, as $X=\N$ shows). The next result clarifies the meaning of $\nabla(X)=0$.
\end{rem}

Recall that we say that $X$ has \emph{no isolated points} if, for all $\eps>0$, for all $x\in X$, there is $y\neq x$ in $X$, such that $d(x,y)<\eps$.

\begin{lem}\label{l:nabla=0}
Let $X$ be an arbitrary metric subspace of $Z$, satisfying $\DE(X)>0$. Then the following conditions are equivalent:
\begin{enumerate}
\item{$\nabla(X)=0$.}
\item{$X\subseteq X'$, where $X'$ denotes the derived set.}
\item{$X$ has no isolated points.}
\item{$X\cap B(x,r)$ is infinite, for all $x\in X$, and all $r>0$.}
\item{Let $x\in X$ be arbitrary, and let $\{x_k\}$ be any sequence in $X$ that converges to $x$. Then there is a sequence $\{x'_k\}$ in $X$, such that
    \begin{enumerate}
    \item{$\{x'_k\}$ converges to $x$,}
    \item{$0<d(x_k,x'_k)<1/k$, for all $k\in \N$,}
    \item{For all $k,\ell\in \N$, if $x'_k=x'_\ell$, then $k=\ell$.}
    \end{enumerate}}
\end{enumerate}
If, in addition, $X$ is closed, then \emph{(ii)} can be replaced by the condition that $X$ is perfect (i.e. $X=X'$). In particular, $X$ is uncountably infinite.
\end{lem}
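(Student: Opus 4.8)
The plan is to establish the equivalences (i)--(v) by proving a chain of implications that returns to the start, and to treat the final "closed $\Rightarrow$ perfect" addendum as a separate remark. The cleanest cycle here is (i) $\Rightarrow$ (iii) $\Rightarrow$ (ii) $\Rightarrow$ (i), together with a separate sub-cycle linking (iii), (iv), and (v). First I would unwind the definitions: $\nabla(X)=\sup\{\nu_X(x)\mid x\in X\}$, where $\nu_X(x)=\inf\{d(x,y)\mid y\in X,\,x\neq y\}$. The statement $\nabla(X)=0$ forces $\nu_X(x)=0$ for every $x$, since each $\nu_X(x)\geq 0$ and their supremum is zero. So (i) unwinds directly to: for every $x\in X$, $\inf\{d(x,y)\mid y\neq x\}=0$, which is precisely the statement that no point of $X$ is isolated within $X$. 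This gives (i) $\Leftrightarrow$ (iii) almost immediately from the definitions, and (iii) $\Leftrightarrow$ (ii) is the standard characterization of the derived set ($x\in X'$ means $x$ is a limit point of $X$, i.e. not isolated).

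For the equivalence with (iv), I would argue (iii) $\Rightarrow$ (iv) by iteration: given $x\in X$ and $r>0$, having no isolated points lets me extract infinitely many distinct points of $X$ inside $B(x,r)$, since after finding one nearby neighbor I can shrink the radius and repeat; the reverse (iv) $\Rightarrow$ (iii) is trivial, as an infinite set in every ball certainly supplies points arbitrarily close to $x$. The most delicate piece is (iii) $\Rightarrow$ (v), the explicit sequence construction: given a sequence $\{x_k\}\to x$, I need to produce $\{x_k'\}$ satisfying the three conditions, including the injectivity condition (c). The plan is, for each $k$, to use the no-isolated-points hypothesis at the point $x_k$ to select $x_k'\neq x_k$ with $d(x_k,x_k')$ positive but smaller than $\min\{1/k,\,d(x_k,x)\}$ (when $x_k\neq x$; one handles $x_k=x$ by picking a nearby distinct point); this secures (a) and (b). To force the injectivity (c), I would build the $x_k'$ recursively, at each stage shrinking the tolerance further so that $x_k'$ avoids the finitely many previously chosen values $x_1',\dots,x_{k-1}'$ — possible precisely because each point has neighbors arbitrarily close, so I can always dodge a finite forbidden set. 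The converse (v) $\Rightarrow$ (iii) is immediate, since the existence of such $x_k'$ exhibits neighbors of $x$ at distance less than $1/k$.

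The main obstacle I anticipate is the injectivity bookkeeping in (v)(c): one must verify that the recursive shrinking of tolerances never gets stuck, i.e. that avoiding a finite set while staying within a shrinking ball is always feasible. This rests on the fact that, under (iii)/(iv), every ball $B(x_k,\rho)$ contains infinitely many points of $X$, so removing finitely many leaves infinitely many admissible choices. Finally, for the closed case I would note that if $X$ is closed then $X'\subseteq X$ always holds, so condition (ii) becomes $X\subseteq X'$ together with $X'\subseteq X$, i.e. $X=X'$, which is the definition of a perfect set. The uncountability then follows from the classical theorem that a non-empty perfect subset of a complete metric space is uncountable (one embeds a Cantor scheme of nested balls); since $X$ is closed in $Z$ and inherits completeness in the relevant ambient setting, I would cite this standard fact rather than reprove it.
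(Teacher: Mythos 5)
Your proposal is correct and follows essentially the same route as the paper: the paper likewise treats (i)--(iv) as routine, builds the sequence in (v) recursively by choosing $x'_{k+1}$ in the infinite set $X\cap B(x_{k+1},1/(k+1))$ while dodging the finite forbidden set $\{x'_1,\dots,x'_k,x_{k+1}\}$, recovers (iv) from (v) by applying (v) to the constant sequence $x_k:=x$ (the one step you leave implicit), and obtains the closed case from $X'\subseteq X$. One minor remark: the final uncountability assertion genuinely needs completeness (it fails, e.g., for $X=Z=\Q$), so your hedge about the ambient setting is warranted, and is in fact more careful than the paper, which asserts uncountability without comment.
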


\begin{proof}
The equivalences (i)--(iv) will be left to the reader. Given (iv) and $x_k\to x$, we define $\{x'_k\}$ inductively. Set $x'_1:=x_1$. Suppose $x'_1,\dots,x'_k$ are constructed, and satisfy (b)--(c). Since $X\cap B(x_{k+1},1/(k+1))$ is infinite, we can find $x'_{k+1}\in X\cap B(x_{k+1},1/(k+1))$ such that $x'_{k+1}\notin \{x'_1,\dots,x'_k,x_{k+1}\}$. Clearly (b)--(c) are satisfied, and the constructed sequence satisfies (a) too, as desired.

We show that (v) implies (iv). Let $x\in X$ be arbitrary. From the constant sequence $x_k:=x$, for all $k$, we obtain, by (v), a sequence satisfying (a)--(c). Then the intersection of $X$ with any ball centered at $x$, will contain a tail of the sequence, which consists of distinct points. Thus, the intersection must be infinite.

When $X$ is closed we have $X'\subseteq X$. This, together with (ii), gives $X=X'$, i.e. $X$ is perfect. The converse is obvious. The proof is complete.
\end{proof}

\begin{lem}\label{l:NUcont}
Let $(X,d)$ be a metric space with $\DE(X)>0$. Then $\nu_X$ is continuous.
\end{lem}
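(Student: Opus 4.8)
The plan is to prove the sharper fact that $\nu_X$ is non-expansive, i.e. $1$-Lipschitz:
\[|\nu_X(x)-\nu_X(y)|\le d(x,y),\qquad \forall\, x,y\in X.\]
Continuity follows at once. Before that I would check that $\nu_X$ really takes values in $[0,\infty)$: since $\DE(X)>0$, the space $X$ has at least two points, so for each $x$ the set $\{d(x,y)\mid y\neq x\}$ is non-empty, and $\nu_X(x)$ is a well-defined real number with $0\le\nu_X(x)\le d(x,y)<\infty$ for any $y\neq x$.

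To establish the Lipschitz bound, fix $x,y\in X$; we may assume $x\neq y$, and set $\rho:=d(x,y)>0$. By symmetry it suffices to prove $\nu_X(y)\le\nu_X(x)+\rho$. Given $\eps>0$, choose $z\in X$, $z\neq x$, with $d(x,z)<\nu_X(x)+\eps$. The argument then splits according to whether this near-optimal witness $z$ coincides with $y$. If $z\neq y$, then $z$ is an admissible competitor in the infimum defining $\nu_X(y)$, and the triangle inequality gives
\[\nu_X(y)\le d(y,z)\le d(y,x)+d(x,z)<\rho+\nu_X(x)+\eps.\]
If instead $z=y$, I would discard $z$ and use the point $x$ itself, which is admissible for $\nu_X(y)$ precisely because $x\neq y$; this yields $\nu_X(y)\le d(y,x)=\rho\le \nu_X(x)+\rho$. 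In either case $\nu_X(y)\le\nu_X(x)+\rho+\eps$, and letting $\eps\to 0$ gives $\nu_X(y)\le\nu_X(x)+d(x,y)$. Interchanging the roles of $x$ and $y$ yields the full Lipschitz inequality, whence $\nu_X$ is continuous.

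The only delicate point—and the single place where the exclusion of the base point in the definition of $\nu_X$ matters—is the degenerate case $z=y$: the (near-)best witness for $\nu_X(x)$ is exactly the point that must be excluded when evaluating $\nu_X(y)$. The resolution is the observation that $x$ is then itself a legitimate witness for $\nu_X(y)$, so no competitor is genuinely lost. I expect this case distinction to be the only real content of the proof; everything else is the standard non-expansiveness argument for distance-type functions.
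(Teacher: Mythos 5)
Your proof is correct, and it takes a somewhat different route from the paper's. The paper argues locally: it fixes $x_0$ and $\eps>0$, picks a near-minimizer $x'_0$ with $0<d(x_0,x'_0)<\nu_X(x_0)+\eps$, and then chooses $\delta<\min\{d(x_0,x'_0),\eps\}$ precisely so that any $x\in B(x_0,\delta)$ is automatically distinct from $x'_0$, making $x'_0$ an admissible competitor for $\nu_X(x)$; the degenerate coincidence in the reverse direction (the near-minimizer for $\nu_X(x)$ equal to $x_0$) is handled directly, much as you do. This yields $|\nu_X(x)-\nu_X(x_0)|<2\eps$ on the $\delta$-ball, i.e.\ bare continuity. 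You instead prove the sharper global statement that $\nu_X$ is $1$-Lipschitz, resolving the only delicate case ($z=y$) by swapping in $x$ itself as a witness rather than by shrinking a neighborhood. What your approach buys: a quantitative, symmetric inequality $|\nu_X(x)-\nu_X(y)|\le d(x,y)$, hence uniform continuity with an explicit modulus, and no need to tune $\delta$ against the location of the witness; the paper's approach proves only what the lemma asserts, at the cost of the slightly fussier two-sided $\eps$--$\delta$ bookkeeping. Either argument suffices for the later applications (e.g.\ attaining $\nabla(Y)$ on compact $Y$ in Lemma~\ref{l:approxNA}), and your Lipschitz bound would serve there equally well.
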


\begin{proof}
Suppose $\eps>0$, and $x_0\in X$ are given. Then there is $x'_0\in X$ such that
$0<d(x_0,x'_0)<\nu_X(x_0)+\eps$. Let $\delta$ satisfy: $0<\delta<\min\{d(x_0,x'_0),\eps\}$. Then, for any $x\in B(x_0,\delta)\cap X$, $d(x,x'_0)\leq d(x,x_0)+d(x_0,x'_0)<\delta+\nu_X(x_0)+\eps$. Note that $x$ is different from $x'_0$, for if $x=x'_0$, then $\delta>d(x,x_0)=d(x'_0,x_0)>\delta$, a contradiction. It follows that
\[\nu_X(x)\leq d(x,x'_0)<\nu_X(x_0)+\eps+\delta<\nu_X(x_0)+2\eps\]
To prove $\nu_X(x_0)<\nu_X(x)+2\eps$, choose $x'\in X$ such that
$0<d(x,x') <\nu_X(x)+\eps$. If $x'\neq x_0$, then $\nu_X(x_0)\leq d(x_0,x')\leq d(x_0,x)+d(x,x')<\de+\nu_X(x)+ \eps\leq \nu_X(x)+2\eps$, as desired. If $x'=x_0$, then $\nu_X(x_0) \leq d(x,x_0)<\nu_X(x)+\eps$. The proof is complete.
\end{proof}

\begin{lem}\label{l:approxDE}
Suppose that $X,Y\subseteq Z$, are arbitrary metric spaces, and $\de>0$. If $d _\hau(X,Y)<\de$, then $|\DE(X)-\DE(Y)|\leq 2\emph{}\de$.
\end{lem}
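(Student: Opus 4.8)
The plan is to reduce the two-sided estimate to a single inequality and then to invoke the triangle inequality. Recall that $d _\hau(X,Y)<\de$ means exactly that every point of $X$ lies within distance $\de$ of some point of $Y$, and symmetrically every point of $Y$ lies within $\de$ of some point of $X$. By the evident symmetry of the claim in $X$ and $Y$, it suffices to establish $\DE(X)\leq\DE(Y)+2\de$; interchanging the roles of $X$ and $Y$ then gives $\DE(Y)\leq\DE(X)+2\de$, and the two bounds together yield $|\DE(X)-\DE(Y)|\leq 2\de$.

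To prove $\DE(X)\leq\DE(Y)+2\de$, I would fix arbitrary points $x_1,x_2\in X$ and use the hypothesis to choose $y_1,y_2\in Y$ with $d(x_1,y_1)<\de$ and $d(x_2,y_2)<\de$. The triangle inequality then gives
\[ d(x_1,x_2)\leq d(x_1,y_1)+d(y_1,y_2)+d(y_2,x_2)<\DE(Y)+2\de, \]
using $d(y_1,y_2)\leq\DE(Y)$. Since $x_1,x_2\in X$ were arbitrary, taking the supremum over all such pairs converts this pointwise bound into $\DE(X)\leq\DE(Y)+2\de$. Passing to the supremum is also exactly what relaxes the strict inequality to the non-strict one appearing in the statement.

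This argument is entirely routine, so I do not anticipate any genuine obstacle. The only two points demanding a little care are, first, that $\DE$ is a supremum which need not be attained, so one must work with arbitrary pairs of points rather than with a maximizing pair; and second, that the estimate is in principle an estimate on the extended reals: if $X$ (equivalently $Y$, since a bounded $\de$-neighborhood of a bounded set is bounded, and $d _\hau(X,Y)<\de$ places $Y$ in the $\de$-neighborhood of $X$) is unbounded, then both diameters are infinite and the inequality holds in the conventional sense. In the applications of this lemma the spaces are compact, so both diameters are finite and the computation above applies verbatim.
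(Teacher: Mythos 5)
Your proof is correct and follows essentially the same route as the paper's: choose $\de$-close points $y_1,y_2\in Y$ for arbitrary $x_1,x_2\in X$, apply the triangle inequality to get $d(x_1,x_2)<\DE(Y)+2\de$, take the supremum, and conclude by symmetry. Your added remarks about the supremum not being attained and the extended-real case are careful touches but do not change the argument.
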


\begin{proof} For arbitrary $x_1,x_2\in X$, one can find $y_1,y_2\in Y$ such that $d(y_i,x_i)<\de$. It follows that $d(x_1,x_2)<2\de + d(y_1,y_2)\leq 2\de+\DE(Y)$. Hence, $\DE(X)\leq 2\de+\DE(Y)$. The reverse inequality is proved similarly. The proof is complete.
\end{proof}

\begin{lem}\label{l:approxNA}
Suppose that $X,Y\subseteq Z$, are arbitrary compact metric spaces, and $\nabla(X)=0$. Then, for all $\eps >0$, there exists $\de>0$, such that, if $d _\hau(X,Y)<\de$, then $\nabla(Y)< \eps$.
\end{lem}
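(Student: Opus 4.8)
The plan is to argue by contradiction, exploiting the compactness of $X$ to pass to a convergent subsequence, together with the hypothesis $\nabla(X)=0$, which by Lemma~\ref{l:nabla=0} means precisely that $X$ has no isolated points, i.e. every point of $X$ is an accumulation point of $X$. First I would suppose the conclusion fails: then there is an $\eps_0>0$ and a sequence of compact sets $Y_n\subseteq Z$ with $d_\hau(X,Y_n)<1/n$ yet $\nabla(Y_n)\geq\eps_0$ for every $n$.

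Next I would locate, for each $n$, a ``maximally isolated'' point of $Y_n$. Since $Y_n$ is compact and $\nu_{Y_n}$ is continuous by Lemma~\ref{l:NUcont}, the supremum defining $\nabla(Y_n)$ is attained; so I choose $y_n\in Y_n$ with $\nu_{Y_n}(y_n)=\nabla(Y_n)\geq\eps_0$, which means $d(y_n,y)\geq\eps_0$ for \emph{every} $y\in Y_n$ with $y\neq y_n$. Because $d_\hau(X,Y_n)<1/n$, I pick $x_n\in X$ with $d(x_n,y_n)<1/n$, and then use compactness of $X$ to pass to a subsequence along which $x_n\to x_*\in X$; since $d(x_n,y_n)\to 0$, also $y_n\to x_*$.

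Now I would bring in the absence of isolated points. As $x_*\in X$ is an accumulation point of $X$, I fix once and for all a point $x'\in X$ with $0<d(x_*,x')=:\rho<\eps_0/2$, and I approximate it from the $Y_n$ by choosing $y'_n\in Y_n$ with $d(x',y'_n)<1/n$. For all sufficiently large $n$ we have $d(y_n,x_*)<\rho/4$ and $1/n<\rho/4$, so the triangle inequality gives
\[
d(y_n,y'_n)\;\leq\;d(y_n,x_*)+d(x_*,x')+d(x',y'_n)\;<\;\rho/4+\rho+\rho/4\;=\;3\rho/2\;<\;\eps_0,
\]
while the reverse triangle inequality gives $d(y_n,y'_n)\geq d(x_*,x')-d(x_*,y_n)-d(x',y'_n)>\rho-\rho/4-\rho/4=\rho/2>0$, so in particular $y'_n\neq y_n$. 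Thus $y'_n$ is a point of $Y_n$ distinct from $y_n$ at distance $<\eps_0$, contradicting $\nu_{Y_n}(y_n)\geq\eps_0$; this contradiction proves the lemma.

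The delicate point, and the reason for arguing by contradiction rather than constructing $\de$ explicitly, is guaranteeing that the $Y_n$-approximant $y'_n$ of the nearby point $x'$ is genuinely \emph{distinct} from $y_n$ and not merely a relabelling of it. A direct construction stumbles here: a point of $X$ taken close to $y_n$ may have its $Y_n$-approximant collapse onto $y_n$, and ``gaps'' in $X$ can obstruct the choice of a separating point at a controlled distance. Localising at the limit $x_*$ and fixing a separation $\rho>0$ that does \emph{not} shrink with $n$ is exactly what resolves this: for large $n$ the two perturbations $d(y_n,x_*)$ and $d(x',y'_n)$ are dwarfed by $\rho$, keeping $y_n$ and $y'_n$ a definite distance $\geq\rho/2$ apart.
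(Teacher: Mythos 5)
Your proof is correct and follows essentially the same contradiction argument as the paper: extract maximally isolated points $y_n$ via compactness and the continuity of $\nu_{Y_n}$ (Lemma~\ref{l:NUcont}), pass to a convergent subsequence so that $y_n\to x_*\in X$, and then use the absence of isolated points (Lemma~\ref{l:nabla=0}) to produce a point of $Y_n$ that is distinct from $y_n$ yet within $\eps_0$ of it, contradicting $\nu_{Y_n}(y_n)\geq\eps_0$. The only cosmetic difference is that the paper finds its separating point among the approximants $x_k$ themselves (made pairwise distinct via Lemma~\ref{l:nabla=0}(v)), whereas you fix an accumulation witness $x'$ near $x_*$ directly --- a slight streamlining of the same method, not a different route.
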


\begin{proof}
Suppose, for contradiction, that we can find $\eps>0$, a compact space $X$ with $\nabla(X)=0$, and, for all $\de>0$, a compact $Y_\de$, such that
\[d _\hau(X,Y_\de)<\de\quad\mbox{ and } \quad\nabla(Y_\de)\geq \eps.\]
In particular, for all $k\in \N$, there exist compact spaces $Y_k\subseteq Z$, such that
\begin{equation}\label{e:nablaYn}
d _\hau(X,Y_k)<1/k\quad\mbox{ and }\quad\nabla(Y_k)\geq \eps.
\end{equation}
By compactness of $Y_k$ and continuity of $\nu_{_{Y_k}}$ (Lemma~\ref{l:NUcont}), we can find points $y_k\in Y_k$ ($k\in\N$), such that
\begin{equation}\label{e:NU=naYn}
\nu_{_{Y_k}}(y_k)=\nabla(Y_k).
\end{equation}
Since $d _\hau(X,Y_k)<1/k$, $Y_k\subseteq N_{1/k}(X)$, the tubular neighborhood of $X$ of radius $1/k$, defined by $N_{1/k}(X):=\bigcup_{x\in X} B(x,1/k)$. Hence, we can find $x_k\in X$, such that, for all $k\in\N$,
\begin{equation}\label{e:d(xk,yk)<1/k}
d(x_k,y_k)<1/k.
\end{equation}
By compactness of $X$, there exists a subsequence of $\{x_k\}_{k=1}^\infty$, still denoted $\{x_k\}_{k=1}^\infty$, and a point $x_0\in X$, such that
\begin{equation}\label{e:xnCONVx0}
x_k\To x_0.
\end{equation}
Observe that, since $\nabla(X)=0$, by Lemma~\ref{l:nabla=0}(v), we can further assume that $x_k\neq x_j$, whenever $k\neq j$. In particular, $x_k=x_0$, for at most one $k\geq 1$.
From (\ref{e:d(xk,yk)<1/k}) and (\ref{e:xnCONVx0}), we get:
\begin{equation}\label{e:ynCONVx0}
y_k\To x_0.
\end{equation}
By (\ref{e:xnCONVx0}) and the remark following it, we can choose $\ell$ so large that \begin{equation}\label{e:eps'}
0<d(x_0,x_{\ell})<\eps/100.
\end{equation}
We set $\eps':=d(x_0,x_{\ell})$. Choose $M>3$, and define $r:=\eps'/M$. By (\ref{e:eps'}), $r>0$. By (\ref{e:ynCONVx0}), there is $m$ so large that
\begin{equation}\label{e:kBIS}
1/m < r, \quad\mbox{ and }\quad d(x_0,y_{m})<r.
\end{equation}
Thus, by (\ref{e:nablaYn}) $d _\hau(X,Y_{m})<1/m<r$, so that $X\subseteq N_r(Y_{m})$. Hence, there is a point $y'_{m}\in Y_{m}$, such that
\begin{equation}\label{e:d(x,y'm)<r}
d(x_{\ell},y'_{m})<r.
\end{equation}
Now, $d(y'_{m},y_{m})\leq d(y'_{m},x_{\ell})+d(x_{\ell},x_0)+d(x_0,y_{m})<\eps'(1+2/M)<2\eps/100$, where we have used (\ref{e:eps'})--(\ref{e:d(x,y'm)<r}). Hence,
\begin{equation}\label{e:d(y'm,ym)<eps}
d(y'_{m},y_{m})<\eps
\end{equation}
Moreover, we claim that $d(y'_{m},y_{m})>0$. Indeed, using (\ref{e:kBIS}) and (\ref{e:d(x,y'm)<r}), we see that $\eps'=d(x_0,x_{\ell})\leq d(x_0,y_{m})+d(y_{m},y'_{m})+d(y'_{m},x_{\ell})<2r+d(y_{m},y'_{m})$. Hence, $d(y_{m},y'_{m})>\eps'-2r=r(M-2)>r>0$, as desired. It follows that
\begin{equation}\label{e:nuLEQd(,)}
\nu_{_{Y_{m}}}(y_{m})\leq d(y_{m},y'_{m}).
\end{equation}
Then (\ref{e:nablaYn}), (\ref{e:NU=naYn}), (\ref{e:d(y'm,ym)<eps}), and (\ref{e:nuLEQd(,)}), yield:
\begin{equation}\nonumber
\eps\leq \nabla(Y_{m})=\nu_{_{Y_{m}}}(y_{m})\leq d(y_{m},y'_{m})<\eps,
\end{equation}
a contradiction. This concludes the proof.
\end{proof}

Lemma~\ref{l:approxNA} is false without the hypothesis $\nabla(X)=0$, as the following example shows.

\begin{exa} It follows from the lemma that $Y_s\to X$ implies $\nabla(Y_s)\to 0$. We present an example where $Y_s\to X$, but $\nabla(Y_s)$ does not converge to $\nabla(X)$. Recall the double $D_s(F)$ defined in Example~\ref{ex:double}. Clearly, $D_s(F)\to F$, when $s\to 0$. From Example~\ref{ex:double}, we know that for $s$ small enough, $\nabla(D_s(F))=s$, so that $\nabla(D_s(F))\to 0$. Hence, we obtain the desired counterexample by choosing any $F$ with $\nabla(F)>0$. For instance, $F=\{0,1,2,3\}\subset\R$.

\end{exa}

\begin{lem}\label{l:Tlarge}
Suppose $X\subseteq Z$ is compact and $\nabla(X)=0<\Delta(X)$. Let $\{F_k\}$ be a sequence of finite subsets of $Z$ that converges to $X$ in $(\FM(Z),d _\hau)$. Then $\lim_{k\to\infty}T_{\nabla(F_k)}(F_k)=\infty$.
\end{lem}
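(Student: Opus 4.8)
The plan is to bound $T_{\nabla(F_k)}(F_k)$ from below by the size of a large $\nabla(F_k)$-separated subset of $F_k$ and to produce such subsets of arbitrary size, exploiting that $X$ is infinite. First I would record two consequences of the convergence $F_k\to X$. Since each $F_k$ is finite (hence compact) and $\nabla(X)=0$, Lemma~\ref{l:approxNA} gives $\nabla(F_k)\to 0$; and Lemma~\ref{l:approxDE} gives $\DE(F_k)\to\DE(X)>0$. In particular, for all large $k$ we have $\nabla(F_k)<\DE(F_k)$, so by Theorem~\ref{t:FOCAL} the space $F_k$ has no focal points and $T_{\nabla(F_k)}(F_k)$ is well defined (cf. Def.~\ref{d:T}).

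Next, fix $N\in\N$. Since $X$ is compact with $\nabla(X)=0<\DE(X)$, Lemma~\ref{l:nabla=0} shows $X$ is perfect, hence infinite, so I can choose $N$ distinct points $p_1,\dots,p_N\in X$. Set $\rho:=\min_{i\neq j}d(p_i,p_j)>0$. For $k$ large enough that $d_\hau(F_k,X)<\rho/4$ and $\nabla(F_k)<\rho/2$, each $p_i$ admits a point $q_i\in F_k$ with $d(p_i,q_i)<\rho/4$; the triangle inequality then yields, for $i\neq j$,
\[
d(q_i,q_j)\geq d(p_i,p_j)-d(p_i,q_i)-d(p_j,q_j)>\rho-\tfrac{\rho}{4}-\tfrac{\rho}{4}=\tfrac{\rho}{2}>\nabla(F_k),
\]
so $q_1,\dots,q_N$ are distinct and pairwise more than $\nabla(F_k)$ apart.

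Finally I would invoke the packing bound. Let $\U\in K^1_{\nabla(F_k)}(F_k)$ be arbitrary; every $U\in\U$ satisfies $\DE(U)\leq\nabla(F_k)$, so $U$ cannot contain two of the $q_i$ (that would force $\DE(U)\geq d(q_i,q_j)>\nabla(F_k)$). Since $\U$ covers $F_k\supseteq\{q_1,\dots,q_N\}$, it must have at least $N$ members, whence $|\U|\geq N$ and therefore $T_{\nabla(F_k)}(F_k)\geq N$. As $N$ was arbitrary and the inequality holds for all sufficiently large $k$, this gives $\lim_{k\to\infty}T_{\nabla(F_k)}(F_k)=\infty$.

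The two analytic inputs, namely $\nabla(F_k)\to 0$ and $\DE(F_k)\to\DE(X)$, are already supplied by the preliminary lemmas, so no single step is a genuine obstacle; the only point demanding care is the bookkeeping with the triangle inequality that keeps the approximating points $q_i$ separated by more than $\nabla(F_k)$, since it is precisely this separation that makes the elementary packing lower bound applicable.
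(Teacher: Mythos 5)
Your proof is correct, and it reaches the conclusion by a cleaner route than the paper's. Both arguments rest on the same packing principle: exhibit points of $F_k$ that are pairwise more than $\nabla(F_k)$ apart, so that each member of a 2-covering of diameter at most $\nabla(F_k)$ contains at most one of them, forcing $|\U|\ge N$ for every $\U\in K^1_{\nabla(F_k)}(F_k)$ and hence $T_{\nabla(F_k)}(F_k)\ge N$. The difference lies in how the separated points are produced and in the logical structure. The paper argues by contradiction: it fixes a bound $M$, applies Lemma~\ref{l:nabla=0}(iv) repeatedly to build a sequence $x_i\in X$ accumulating at a point $x$ with dyadically decreasing distances $0<d(x,x_{i+1})<d(x,x_i)/2$, derives the separation estimate $d(x_i,x_j)>(2^{j-i}-1)\,d(x,x_j)$, and then needs delicate bookkeeping (the constants $\alpha_{M+1}$ and $\delta<\alpha_{M+1}/2^{M+1}$, the choice of a particular index $\ell$) to transplant $M+1$ separated points into $F_\ell$. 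You instead argue directly: Lemma~\ref{l:nabla=0} makes $X$ perfect, hence infinite, so you may take \emph{any} $N$ distinct points of $X$ with mutual separation $\rho>0$; then $d_\hau(F_k,X)<\rho/4$ together with $\nabla(F_k)<\rho/2$ (the latter from Lemma~\ref{l:approxNA}, which the paper needs as well, though it cites Lemma~\ref{l:nabla=0} at that step) yields the $N$ separated points of $F_k$ in one stroke. Your preliminary observations ($\DE(F_k)\to\DE(X)>0$ via Lemma~\ref{l:approxDE}, hence no focal points for large $k$ via Theorem~\ref{t:FOCAL}, so $T_{\nabla(F_k)}(F_k)$ is defined as in Definition~\ref{d:T}) also make explicit a well-definedness point the paper leaves implicit. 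Nothing is lost in your version: the paper's dyadic construction buys nothing here that your minimum-separation argument does not already provide, and your proof has the added virtue of giving the quantitative statement (for each $N$ there is $K$ with $T_{\nabla(F_k)}(F_k)\ge N$ for all $k\ge K$) without a proof by contradiction.
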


\begin{proof} Suppose, for contradiction, that
\begin{equation}\label{e:limInf1}
\exists\,\,M\geq 1 \quad\forall\,\, N\quad \exists\,\, k\geq N\,\, : \,\, T_{\nabla(F_k)}(F_k)\leq M.
\end{equation}
Since $\DE(X)>0$, we can choose distinct points $x,x_1\in X$. The condition $\nabla(X)=0$ implies, by Lemma~\ref{l:nabla=0}, that $x$ is not isolated. Applying Lemma~\ref{l:nabla=0}(iv) repeatedly, we can construct, starting from $x_1$, a sequence $\{x_i\}_{i=1}^\infty\subseteq X$, such that:
\begin{equation}\label{e:limInf2}
0<d(x,x_{i+1})<\frac{d(x,x_i)}{2}.
\end{equation}
For $1\leq i<j$, we have $d(x_i,x_j)\geq d(x,x_i)-d(x,x_j)$. Applying (\ref{e:limInf2}) repeatedly:
\begin{equation}\label{e:limInf3}
d(x_i,x_{j})>(2^{j-i}-1)d(x,x_j), \mbox{ for all } i<j\in\N.
\end{equation}

If $d _\hau(F_k,X)<\delta$, there are points $y^k_i\in F_k$ that correspond to the $x_i$, i.e. $d(y^k_i,x_i)<\de$ (both $\de$ and $k$ will be determined presently). It follows that $d(x_i,x_j)< 2\de + d(y^k_i,y^k_j)$ which, together with (\ref{e:limInf3}), yields:
\begin{equation}\label{e:limInf4}
d(y^k_i,y^k_{j})>(2^{j-i}-1)d(x,x_j)-2\de, \mbox{ for all } i<j.
\end{equation}

For $M$ in (\ref{e:limInf1}), set $\alpha_{M+1}:=\min\{d(x,x_i)|i=1,\dots,M+1\}>0$. Choose $\de$ so that:
\begin{equation}\label{e:limInf5}
0<\de<\frac{\alpha_{M+1}}{2^{M+1}}.
\end{equation}
By Lemma ~\ref{l:nabla=0}, there exists $L\geq 1$ such that $0<\nabla(F_k)<(\alpha_{M+1}/2^{M})-2\de$, for all $k\geq L$. For this $L$ we can find, by (\ref{e:limInf1}), $\ell\geq L$ such that $T_{\nabla(F_\ell)}(F_\ell)\leq M$. So, for this $\ell$, we have:
\begin{equation}\label{e:limInf6}
\nabla(F_\ell)<\frac{\alpha_{M+1}}{2^{M}}-2\de,\quad\mbox{ and } \quad T_{\nabla(F_\ell)}(F_\ell)\leq M.
\end{equation}

Define $S:=\{y^\ell_1,\dots,y^\ell_{M+1}\}\subseteq F_\ell$. It follows from (\ref{e:limInf5}) and (\ref{e:limInf4}), that the elements of $S$ are distinct, i.e. $|S|=M+1$. Let $\U\in K_{\nabla(F_\ell)}(F_\ell)$, $\U=\{U_1,\dots,U_n\}$, be a 2-covering of $F_\ell$, with $n=T_{\nabla(F_\ell)}(F_\ell)$. We claim that $|U_i\cap S|\leq 1$, for all $U_i\in\U$. Otherwise, there would be two elements, say $y^\ell_i, y^\ell_j\in U_m\cap S$, for some $i<j\leq M+1$, $1\leq m\leq n$. Then $d(y^\ell_i,y^\ell_j)\leq \DE(U_m)\leq\nabla(F_\ell)< (\alpha_{M+1}/2^M)-2\de$, by (\ref{e:limInf6}). Using (\ref{e:limInf4}) and simplifying, we have
\begin{equation}\nonumber
\alpha_{M+1}/2^M>(2^{j-i}-1)d(x,x_j)\geq d(x,x_j)\geq \alpha_{M+1},
\end{equation}
the last inequality by the definition of $\alpha_{M+1}$. This is a  contradiction, as desired.

Now, since $\U$ covers $F_\ell$, $S=\bigcup_{i=1}^n U_i \cap S$, so that
$M+1=|S|\leq \sum_{i=1}^n |U_i \cap S|\leq n\leq M$, a contradiction. This establishes the lemma.

\end{proof}

\begin{lem}\label{l:lim}
Suppose given three convergent sequences of positive real numbers $\{T_k\}$, $\{\DE_k\}$, and $\{\nabla_k\}$, such that $\nabla_k< \DE_k$ for all $k$, and $T_k\to \infty$, $\DE_k\to \DE>0$, $\nabla_k\to 0$. Then
\begin{equation} \nonumber
\lim_{k\rightarrow \infty} \frac{\ln T_k}{\ln \frac{\Delta_k}{\nabla_k}}=\lim_{k\rightarrow \infty} \frac{\ln T_k}{-\ln \nabla_k}
\end{equation}
\end{lem}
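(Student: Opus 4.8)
The plan is to factor the first quotient so that the second one appears explicitly, multiplied by a correction term that tends to $1$. Writing $\ln\frac{\DE_k}{\nabla_k}=\ln\DE_k-\ln\nabla_k$, the decisive observation is that $-\ln\nabla_k\To+\infty$ (because $\nabla_k\To 0$ with $\nabla_k>0$), whereas $\ln\DE_k\To\ln\DE$ stays bounded. Thus the two denominators $\ln\frac{\DE_k}{\nabla_k}$ and $-\ln\nabla_k$ differ only by the bounded quantity $\ln\DE_k$ while both diverge to $+\infty$, so one expects the two quotients to share the same limiting behaviour.

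First I would record that every quantity in sight is eventually positive, so that the quotients make sense and no sign issues arise: since $T_k\To\infty$ we have $\ln T_k>0$ for large $k$; since $\nabla_k\To 0$ we have $-\ln\nabla_k>0$; and $\nabla_k<\DE_k$ gives $\frac{\DE_k}{\nabla_k}>1$, hence $\ln\frac{\DE_k}{\nabla_k}>0$. Next I would carry out the factorization
\[
\frac{\ln T_k}{\ln\frac{\DE_k}{\nabla_k}}=\frac{\ln T_k}{-\ln\nabla_k}\cdot\frac{-\ln\nabla_k}{\ln\DE_k-\ln\nabla_k}=\frac{\ln T_k}{-\ln\nabla_k}\cdot\frac{1}{1-\frac{\ln\DE_k}{\ln\nabla_k}},
\]
and then observe that the correction factor $c_k:=\bigl(1-\ln\DE_k/\ln\nabla_k\bigr)^{-1}$ satisfies $c_k\To 1$, because $\ln\DE_k\To\ln\DE$ is finite while $\ln\nabla_k\To-\infty$, so that $\ln\DE_k/\ln\nabla_k\To 0$.

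Finally I would conclude by the algebra of limits. Since $c_k\To 1\neq 0$ and the common factor $\ln T_k/(-\ln\nabla_k)$ is eventually positive, the identity $\ln T_k/\ln\frac{\DE_k}{\nabla_k}=c_k\cdot\ln T_k/(-\ln\nabla_k)$ shows that the left-hand quotient has a limit in $[0,+\infty]$ if and only if $\ln T_k/(-\ln\nabla_k)$ does, and that the two limits then coincide, which is exactly the asserted equality. The step I would treat most carefully — and the only genuine subtlety — is that the common limit may well be $+\infty$, since nothing in the hypotheses forces it to be finite; the product rule must therefore be invoked in the extended reals. This is legitimate precisely because $c_k$ converges to the nonzero finite value $1$, so the product never presents an indeterminate form $0\cdot\infty$. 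Everything else is routine manipulation of logarithms.
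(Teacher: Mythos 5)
Your proof is correct: the factorization $\ln T_k/\ln\frac{\DE_k}{\nabla_k}=c_k\cdot\ln T_k/(-\ln\nabla_k)$ with $c_k\To 1$, together with the care you take about extended-real limits, is exactly the verification needed. The paper itself offers no argument here (its proof reads ``The proof is obvious''), and your write-up is precisely the routine computation that remark leaves to the reader, so there is nothing to compare against and nothing missing.
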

\begin{proof}
The proof is obvious.
\end{proof}

We can summarize these results in the following:
\begin{prop}\label{p:limDIMfB}
Suppose that $X$ is a compact metric space satisfying $\nabla(X)=0<\Delta(X)$, and $\lim_{k\to\infty} F_k=X$, for some sequence of finite metric spaces. Then
\[
\lim_{k\rightarrow \infty} \emph{dim} \m(F_k)=\lim_{k\rightarrow \infty} \frac{\ln T_{\nabla(F_k)}(F_k)}{-\ln \nabla(F_k)}\cdot
\]
\end{prop}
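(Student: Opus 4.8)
The plan is to assemble the preceding lemmas: the substantive work has already been carried out, so what remains is to verify that their hypotheses are met and to combine their conclusions. First I would observe that, being finite, each $F_k$ is compact, so Lemmas~\ref{l:approxDE} and~\ref{l:approxNA} both apply to the pair $(X,F_k)$. From $F_k\to X$ and Lemma~\ref{l:approxDE} I obtain $\DE(F_k)\to\Delta(X)>0$, and from the hypothesis $\nabla(X)=0$ together with Lemma~\ref{l:approxNA} I obtain $\nabla(F_k)\to 0$. Consequently there is an index $K$ with $\nabla(F_k)<\DE(F_k)$ for all $k\geq K$; by Theorem~\ref{t:FOCAL} this means that $F_k$ has no focal points for $k\geq K$, so that the explicit formula of Theorem~\ref{t:DIMfB} is available:
\[
\mbox{dim} \m(F_k)=\frac{\ln T_{\nabla(F_k)}(F_k)}{\ln\frac{\DE(F_k)}{\nabla(F_k)}},\qquad k\geq K.
\]

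Next I would record the asymptotics of the three sequences $T_k:=T_{\nabla(F_k)}(F_k)$, $\DE_k:=\DE(F_k)$, and $\nabla_k:=\nabla(F_k)$ (for $k\geq K$). We have just seen that $\DE_k\to\Delta(X)>0$ and $\nabla_k\to 0$, with $\nabla_k<\DE_k$; the third ingredient, $T_k\to\infty$, is precisely Lemma~\ref{l:Tlarge}, whose hypotheses ($X$ compact, $\nabla(X)=0<\Delta(X)$, and $F_k$ finite converging to $X$) are exactly those we are given. Thus all the assumptions of Lemma~\ref{l:lim} are in force.

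Finally I would invoke Lemma~\ref{l:lim} to conclude
\[
\lim_{k\to\infty}\mbox{dim} \m(F_k)=\lim_{k\to\infty}\frac{\ln T_k}{\ln\frac{\DE_k}{\nabla_k}}=\lim_{k\to\infty}\frac{\ln T_k}{-\ln\nabla_k},
\]
which is the asserted identity. There is no genuine obstacle here; the only point deserving a word of care is that the statement is an equality of limits, so it must be read as saying that the left-hand limit exists in $[0,\infty]$ if and only if the right-hand one does, and that they then coincide. This is transparent once one writes $\ln(\DE_k/\nabla_k)=(-\ln\nabla_k)\bigl(1+\ln\DE_k/(-\ln\nabla_k)\bigr)$ and notes that the correcting factor tends to $1$, since $\ln\DE_k\to\ln\Delta(X)$ stays bounded while $-\ln\nabla_k\to\infty$; this is exactly the content of Lemma~\ref{l:lim}, and it is what makes the nuisance term $\DE(F_k)$ vanish in the limit, leaving the simpler expression on the right.
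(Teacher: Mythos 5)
Your proposal is correct and follows essentially the same route as the paper: the paper's own proof likewise sets $T_k$, $\DE_k$, $\nabla_k$, invokes Lemmas~\ref{l:approxDE}, \ref{l:approxNA}, and \ref{l:Tlarge} to verify the hypotheses of Lemma~\ref{l:lim}, and concludes. Your additional explicit steps (no focal points for large $k$ via Theorem~\ref{t:FOCAL}, hence the formula of Theorem~\ref{t:DIMfB}) are exactly what the paper leaves implicit, so this is a faithful, slightly more detailed version of the same argument.
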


\begin{proof} Let $T_k:=T_{\nabla(F_k)}(F_k)$, $\DE_k:=\DE(F_k)$, and $\nabla_k :=\nabla(F_k)$. By lemmas~\ref{l:approxDE},~\ref{l:approxNA}, and~\ref{l:Tlarge}, the hypothesis of Lemma~\ref{l:lim} are satisfied for all large enough $k$. This gives the result.
\end{proof}

\subsection{Subsets of $\R^n$}

 In this section we exploit special properties of Euclidean space to refine Proposition~\ref{p:Fapprox}. For any $\eps >0$, consider the lattice $\eps\cdot\Z^n\subseteq \R^n$. Using the hyperplanes parallel to the coordinate hyperplanes through each point of the lattice, we obtain a tiling of $\R^n$ by hypercubes. Let $\QQ(\varepsilon)$ denote the set of all these hypercubes, which we simply call cubes. The cubes have side-length $\eps$, and diameter $\eps\sqrt{n}$. Each cube is compact and has $2^n$ corners that lie in $\eps\cdot\Z^n$. Each corner belongs to $2^n$ cubes.

\begin{defn}
For $X\subseteq \R^n$, $\QQ(\eps,X)$ will denote the set of cubes of $\QQ(\eps)$ that have non-empty intersection with $X$.
\end{defn}

\begin{prop}\label{t:FapproxR^n}
For every compact $X\subseteq\R^n\,(n\geq 1)$, and $\varepsilon >0$, there is a finite, locally uniform $F_\varepsilon \subseteq \R^n$, such that $d _\hau(F_\varepsilon, X)\leq \varepsilon\sqrt{n}$. In particular, given $0<c<1$, there is a convergent sequence of finite, locally uniform $F_k \subseteq \R^n$, such that $F_k\to X$ in $(\FM(\R^n),d _\hau)$, and $\nabla(F_{k+1})\geq c\cdot\nabla(F_k)$, for all $k\in\N$.
\end{prop}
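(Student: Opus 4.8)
The plan is to build $F_\varepsilon$ directly from the cubical grid $\QQ(\varepsilon)$ already introduced, taking $F_\varepsilon$ to be the set of \emph{all} corners of \emph{all} cubes in $\QQ(\varepsilon,X)$, i.e. the vertices of those cubes of the tiling that meet $X$. Since $X$ is compact it is bounded, hence contained in a large box meeting only finitely many cubes; so $\QQ(\varepsilon,X)$ is finite, and $F_\varepsilon$, being a finite union of finite vertex-sets, is finite as well. By construction $F_\varepsilon\subseteq \varepsilon\cdot\Z^n$ and $|F_\varepsilon|\geq 2^n\geq 2$.

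The key point is local uniformity, and here the decision to include \emph{every} corner of each selected cube is exactly what makes the argument work. Because $F_\varepsilon\subseteq\varepsilon\cdot\Z^n$ and distinct points of $\varepsilon\cdot\Z^n$ lie at distance at least $\varepsilon$, we get $\nu_{F_\varepsilon}(p)\geq\varepsilon$ for every $p$. For the reverse bound, each $p\in F_\varepsilon$ is a corner of some cube $Q\in\QQ(\varepsilon,X)$; the $n$ vertices of $Q$ joined to $p$ by an edge lie in $F_\varepsilon$ and sit at distance exactly $\varepsilon$, so $\nu_{F_\varepsilon}(p)\leq\varepsilon$. Hence $\nu_{F_\varepsilon}\equiv\varepsilon$, which gives $\de(F_\varepsilon)=\nabla(F_\varepsilon)=\varepsilon$, so $F_\varepsilon$ is locally uniform.

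For the metric estimate I would bound $d_\hau(F_\varepsilon,X)$ by checking the two one-sided distances, each time using that every cube has diameter $\varepsilon\sqrt n$. If $x\in X$, then $x$ lies in some cube $Q$, which therefore belongs to $\QQ(\varepsilon,X)$; the nearest corner of $Q$ is a point of $F_\varepsilon$ within $\varepsilon\sqrt n$ of $x$. Conversely, if $p\in F_\varepsilon$ is a corner of $Q\in\QQ(\varepsilon,X)$, pick any $x\in Q\cap X$; then $d(p,x)\leq\DE(Q)=\varepsilon\sqrt n$. Taking suprema gives $d_\hau(F_\varepsilon,X)\leq\varepsilon\sqrt n$, which proves the first assertion.

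Finally, for the sequence I would apply the first part along a geometric scale. Fix $\varepsilon_0>0$, set $\varepsilon_k:=\varepsilon_0\,c^{\,k}$, and let $F_k:=F_{\varepsilon_k}$. Then each $F_k$ is finite and locally uniform with $\nabla(F_k)=\varepsilon_k$, and $d_\hau(F_k,X)\leq\varepsilon_k\sqrt n\to 0$, so $F_k\to X$ in $\FM(\R^n)$; moreover $\nabla(F_{k+1})=c\,\varepsilon_k=c\,\nabla(F_k)\geq c\,\nabla(F_k)$, as required. The construction is elementary, and I expect the only step demanding genuine care to be the verification of local uniformity, namely confirming that retaining \emph{all} vertices forces $\nu_{F_\varepsilon}$ to equal the constant $\varepsilon$ rather than merely to be bounded below by it; an unfortunate choice (such as keeping only some corners) could create isolated points and destroy uniformity.
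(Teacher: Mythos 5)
Your proposal is correct and follows essentially the same route as the paper: the paper's $F_\varepsilon:=\varepsilon\cdot\Z^n\cap\QQ(\varepsilon,X)$ is precisely your set of all corners of all cubes meeting $X$, the local-uniformity argument via edge-adjacent corners is the same, and the paper likewise takes a geometric scale $F_k:=F_{c^k}$ for the convergent sequence. Your write-up merely spells out the Hausdorff-distance estimate that the paper states without detail.
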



\begin{proof}
Define $F_\eps:=\eps\cdot\Z^n\cap \QQ(\eps,X)$. Since $X$ is bounded, $\QQ(\eps,X)$ is finite with, say, $N$ elements. Hence, $F_\eps$ contains no more than $N\cdot 2^n$ points. Observe that if $x\in F_\eps$, then there is at least one $Q\in\QQ(\eps,X)$ such that $x$ is one of the corners of $\QQ$. Hence, $F_\eps$ contains not only $x$, but all $2^n$ corners of $Q$. It follows that $\delta(F_\eps)=\nabla(F_\eps)=\eps$; in other words, $F_\eps$ is locally uniform. Moreover, $d _\hau(X,F_\eps)\leq \eps\sqrt{n}$. For the last part, given $0<c<1$, the sequence $F_k:=F_{c^k}$ satisfies the required conditions. This completes the proof.
\end{proof}

\begin{cor}\label{c:LIMdimH}
Let $X$ and $\{F_k\}$ be as in Proposition~\ref{t:FapproxR^n}, and suppose that
$\nabla(X)=0<\DE(X)$. Then
\[
\lim_{k\rightarrow \infty} \emph{dim} {\hh}(F_k)=\lim_{k\rightarrow \infty} \frac{\ln T_{\nabla(F_k)}(F_k)}{-\ln \nabla(F_k)}\cdot
\]
\end{cor}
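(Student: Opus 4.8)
The plan is to reduce the assertion about $\mbox{dim}\hh$ to the analogous statement for $\mbox{dim}\m$, which is precisely the content of Proposition~\ref{p:limDIMfB}. The sequence $\{F_k\}$ furnished by Proposition~\ref{t:FapproxR^n} consists of finite subsets of $\R^n$ converging to $X$ in $(\FM(\R^n),d_\hau)$, and by hypothesis $X$ is compact with $\nabla(X)=0<\DE(X)$; hence Proposition~\ref{p:limDIMfB} applies verbatim and yields
\[
\lim_{k\rightarrow \infty} \mbox{dim}\m(F_k)=\lim_{k\rightarrow \infty} \frac{\ln T_{\nabla(F_k)}(F_k)}{-\ln \nabla(F_k)}.
\]
It therefore suffices to show that $\mbox{dim}\hh(F_k)=\mbox{dim}\m(F_k)$ for all sufficiently large $k$; once this is known, the two sequences have identical tails, so they share the same limiting behaviour and the desired identity follows.

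To establish this equality I would first check that $F_k$ has no focal points once $k$ is large. By Lemma~\ref{l:approxDE}, $\DE(F_k)\to\DE(X)>0$, while Lemma~\ref{l:approxNA}, applied with the hypothesis $\nabla(X)=0$, gives $\nabla(F_k)\to 0$ as $F_k\to X$. Consequently $\nabla(F_k)<\DE(F_k)$ for all large $k$, and Theorem~\ref{t:FOCAL} then guarantees that each such $F_k$ has no focal points.

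Finally, every $F_k$ is locally uniform by its construction in Proposition~\ref{t:FapproxR^n}. For the large $k$ just identified, $F_k$ is thus locally uniform and free of focal points, so Proposition~\ref{p:dimLocUnif} applies and gives $\mbox{dim}\hh(F_k)=\mbox{dim}\m(F_k)$. Combining this with the displayed limit completes the argument.

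The proof is essentially bookkeeping: the genuine analytic work, namely $T_{\nabla(F_k)}(F_k)\to\infty$ and the asymptotics of the logarithmic quotient, was already carried out in Lemmas~\ref{l:Tlarge} and~\ref{l:lim} and packaged into Proposition~\ref{p:limDIMfB}. The only point requiring care here is the eventual absence of focal points, which hinges on the compactness of $X$ and the hypothesis $\nabla(X)=0$ through Lemma~\ref{l:approxNA}; without $\nabla(X)=0$ the convergence $\nabla(F_k)\to 0$ may fail, so this hypothesis is exactly what legitimises the reduction to the locally uniform formula.
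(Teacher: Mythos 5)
Your proof is correct and follows essentially the same route as the paper: reduce to Proposition~\ref{p:limDIMfB}, then use local uniformity of the $F_k$ (Proposition~\ref{t:FapproxR^n}) and Proposition~\ref{p:dimLocUnif} to identify $\mbox{dim}\hh(F_k)$ with $\mbox{dim}\m(F_k)$. You are in fact slightly more careful than the paper, which invokes Proposition~\ref{p:dimLocUnif} without explicitly verifying its ``no focal points'' hypothesis; your verification via Lemmas~\ref{l:approxDE}, \ref{l:approxNA} and Theorem~\ref{t:FOCAL} supplies that detail.
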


\begin{proof}
By Proposition~\ref{p:limDIMfB},
\[
\lim_{k\rightarrow \infty} \emph{dim} {\m}(F_k)=\lim_{k\rightarrow \infty} \frac{\ln T_{\nabla(F_k)}(F_k)}{-\ln \nabla(F_k)}\cdot
\]
By Proposition~\ref{t:FapproxR^n}, the $F_k$ are locally uniform. Hence $\mbox{dim} \hh(F_k)=\mbox{dim} \m(F_k)$, by Proposition~\ref{p:dimLocUnif}. The result follows.
\end{proof}

\begin{lem}\label{l:vert-cubos}
Let $C\subseteq \QQ(\varepsilon)$, and put $F(C):= \varepsilon \cdot\Z^n\cap (\bigcup_{Q\in C}Q)$, the set of all lattice points of the cubes of $C$. Then
\begin{equation}\label{e:lat-cubes}
|F(C)|\geq |C|.
\end{equation}
\end{lem}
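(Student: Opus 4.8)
The plan is to exhibit an injection from $C$ into $F(C)$, which immediately yields $|C|\le|F(C)|$. First I would record the explicit form of the tiles: every $Q\in\QQ(\eps)$ can be written as $Q=\prod_{i=1}^{n}[k_i\eps,(k_i+1)\eps]$ for a unique tuple $(k_1,\dots,k_n)\in\Z^n$, and its $2^n$ corners are exactly the lattice points $(m_1\eps,\dots,m_n\eps)$ with $m_i\in\{k_i,k_i+1\}$. In particular, if $m_i\eps\in[k_i\eps,(k_i+1)\eps]$ with $m_i\in\Z$, then $m_i\in\{k_i,k_i+1\}$, so the only points of $\eps\Z^n$ lying in the closed cube $Q$ are these corners. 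Hence $F(C)$ is precisely the set of all corners of cubes belonging to $C$.

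Next I would define $\phi\colon C\to F(C)$ by sending $Q$ to its minimal corner $\phi(Q):=(k_1\eps,\dots,k_n\eps)$, the corner with smallest coordinate in each direction. This is well defined, since $\phi(Q)$ is a corner of $Q\in C$ and therefore lies in $F(C)$. The key point — and really the entire content of the argument — is that $\phi$ is injective: the minimal corner determines the tuple $(k_1,\dots,k_n)$, which in turn determines $Q$ uniquely. Thus $\phi(Q)=\phi(Q')$ forces $Q=Q'$, and consequently $|C|=|\phi(C)|\le|F(C)|$, as desired.

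I do not expect any serious obstacle; the only thing to get right is choosing a canonical corner that recovers the cube, and the minimal corner does this cleanly (any fixed coordinatewise extremum would serve equally well). As an alternative, equally short route, one may double count the incident pairs $(Q,v)$ with $Q\in C$ and $v$ a corner of $Q$: counting by cubes gives exactly $2^n|C|$ such pairs, while counting by vertices gives at most $2^n|F(C)|$, since each lattice point is a corner of exactly $2^n$ cubes of $\QQ(\eps)$ (as noted just before the definition of $\QQ(\eps,X)$); dividing by $2^n$ again yields $|C|\le|F(C)|$.
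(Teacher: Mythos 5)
Your proof is correct, and it is genuinely different from (and simpler than) the paper's argument. You exploit the fact that the tiling cubes of $\QQ(\eps)$ are in canonical bijection with $\eps\cdot\Z^n$ via the minimal-corner map $Q=\prod_{i=1}^n[k_i\eps,(k_i+1)\eps]\mapsto(k_1\eps,\dots,k_n\eps)$; restricting this injection to $C$ lands in $F(C)$ and the inequality $|C|\le|F(C)|$ falls out in one line. The paper instead runs an extremal-plus-induction argument: it takes a point $a\in F(C)$ maximizing the linear functional $f(x)=\sum_i x_i$, shows that the maximality forces $a$ to lie in exactly one cube $Q(a)\in C$ (the one whose other corners are $a-\eps e_i$), removes that cube, and inducts on $|C|$, gaining one lattice point per step. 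The two arguments are close in spirit --- the paper's maximal point $a$ is just the $f$-largest corner, playing the role your canonical corner plays globally --- but your version eliminates the induction and the extremal bookkeeping entirely, since the minimal corner of \emph{every} cube already recovers that cube uniquely; nothing needs to be special about a maximizer. Your alternative double-counting argument (each of the $2^n|C|$ incidence pairs $(Q,v)$ charges a vertex of $F(C)$, and no vertex is charged more than $2^n$ times) is also valid and equally short. Either of your routes would serve as a correct replacement for the paper's proof; the only claim worth stating explicitly, which you do, is that the lattice points of $\eps\cdot\Z^n$ contained in a closed tile $Q$ are exactly its $2^n$ corners.
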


\begin{proof}
Consider the linear map $f:\R^n\rightarrow \R$ defined by $f(x):=\sum_{i=1}^n x_i$. Put $M=\mbox{max}\,\{f(x)|x\in F(C)\}$, and let $a\in F(C)$ be a point with $f(a)=M$. We show that there is a unique cube $Q(a)\in C$, such that $a\in Q(a)$.

Let $e_i:=(0,\dots,0,1,0,\dots,0), \, i=1,\dots,n$, denote the unit vectors of $\R^n$, and define points $p_i=a+\varepsilon e_i$, and $p_{n+i}=a-\varepsilon  e_i$, where $i=1,\dots,n$. Observe that these points belong to $\varepsilon \cdot\Z^n$, and $f(p_i)=f(a)\pm \varepsilon f(e_i)$, which is equal to $M+\varepsilon $, when $i=1,\dots,n$, and to $M-\varepsilon $, when $i=n+1,\dots,2n$. By the definition of $M$, only $p_i$ with $i=n+1,\dots,2n$, can belong to $F(C)$.

Exactly $2^n$ cubes of the tiling $\QQ(\varepsilon )$ contain $a$. Each of these is determined by $a$ together with $n$ lattice points chosen among those nearest to $a$, i.e. among the $p_i$ $(i=1,\dots,2n)$. From the previous paragraph, we see that there is only one possible choice, namely $B:=\{a,a-\varepsilon e_1,\dots,a-\varepsilon e_n\}$. Since, by definition, $a$ must belong to some cube of $C$, we conclude that $Q(a)$, the cube defined by $B$, is the only one that contains $a$ and belongs to $C$, as desired.

It is now straightforward to prove (\ref{e:lat-cubes}) by induction on $m=|C|$. Indeed, the inequality is obvious for $m=1$. For the inductive step, suppose (\ref{e:lat-cubes}) holds for every set of no more than $m$ cubes. For a subset $C$ with $m+1$ elements, find $a$ and $Q(a)$, and remove $Q(a)$. The resulting $C'$ has $m$ cubes, $a\notin F(C')$, and $C=C'\cup Q(a)$. By induction, $|F(C')|\geq |C'|$, and $|F(C)|\geq |F(C)'|+1\geq |C'|+1=|C|$, as desired.
\end{proof}

\begin{prop}\label{p:ineq-T}
Suppose that $X\subseteq \R^n$ is a compact space with $\nabla(X)=0<\Delta(X)$, and $F_k\rightarrow X$ as in Proposition~\ref{t:FapproxR^n}. If  
$\overline{T} _{\nabla(F_k)}(X):=|\QQ(\nabla(F_k),X)|$, then
\begin{equation}\label{e:ineq1}
(1/2)\cdot\overline{T}_{\nabla(F_k)}(X)\leq T_{\nabla(F_k)}(F_k)\leq 2^{n-1}\cdot\overline{T}_{\nabla(F_k)}(X).
\end{equation}
Hence also:
\begin{equation}\label{e:ineq2}
2^{1-n}\cdot T_{\nabla(F_k)}(F_k)\leq \overline{T}_{{\nabla(F_k)}}(X)\leq 2\cdot T_{\nabla(F_k)}(F_k).
\end{equation}
\end{prop}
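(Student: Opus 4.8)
The plan is to identify $F_k$ and $\overline{T}_{\nabla(F_k)}(X)$ explicitly from the construction in Proposition~\ref{t:FapproxR^n}, and then read off both inequalities from a single combinatorial observation. Write $\eps_k:=\nabla(F_k)$ for the common side-length of the cubes at stage $k$, and set $C:=\QQ(\eps_k,X)$, so that $|C|=\overline{T}_{\nabla(F_k)}(X)$ and $F_k=F(C)$ in the notation of Lemma~\ref{l:vert-cubes} (the set of all lattice points of the cubes meeting $X$). The key point is that, since $F_k$ is locally uniform with $\de(F_k)=\nabla(F_k)=\eps_k$, any subset of $F_k$ with at least two points and diameter $\le\eps_k$ consists of \emph{exactly} two lattice points at distance $\eps_k$, i.e.\ an ``edge'' joining two adjacent corners: two points of $\eps_k\Z^n$ at distance $\eps_k$ differ in a single coordinate, and one checks directly that no third lattice point is at distance $\eps_k$ from both, so no three points can be pairwise at distance $\eps_k$. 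Consequently $T_{\nabla(F_k)}(F_k)$ is exactly the minimal number of such edges needed to $2$-cover $F_k$. (For $k$ large we have $\eps_k<\DE(F_k)$, so $F_k$ has no focal points and $K^1_{\eps_k}(F_k)=K_{\eps_k}(F_k)$, which legitimizes all of this.)

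For the \textbf{upper bound} in (\ref{e:ineq1}) I would exhibit an explicit cover. Each cube $Q\in C$ has all $2^n$ of its corners in $F_k$, and these split into $2^{n-1}$ adjacent pairs, for instance by pairing each corner of minimal first coordinate with the corner directly above it in the first coordinate direction. Every such pair has diameter $\eps_k<\DE(F_k)$, so the family of all these pairs, taken over every $Q\in C$, belongs to $K^1_{\eps_k}(F_k)$ and has at most $2^{n-1}|C|$ sets. Hence $T_{\nabla(F_k)}(F_k)\le 2^{n-1}|C|=2^{n-1}\overline{T}_{\nabla(F_k)}(X)$.

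For the \textbf{lower bound}, let $\U=\{U_1,\dots,U_m\}\in K^1_{\eps_k}(F_k)$ realize $m=T_{\nabla(F_k)}(F_k)$. By the observation above each $U_i$ has exactly two elements, so
\[
|F_k|=\Big|\bigcup_{i=1}^m U_i\Big|\le\sum_{i=1}^m|U_i|=2m.
\]
Lemma~\ref{l:vert-cubes} gives $|F_k|=|F(C)|\ge|C|=\overline{T}_{\nabla(F_k)}(X)$, whence $\overline{T}_{\nabla(F_k)}(X)\le 2m=2\,T_{\nabla(F_k)}(F_k)$, which is the left inequality of (\ref{e:ineq1}). Finally, (\ref{e:ineq2}) is a direct algebraic rearrangement of (\ref{e:ineq1}).

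The genuine content is concentrated in two places. The combinatorial characterization of diameter-$\le\eps_k$ pieces as edges is what forces a minimal cover to use essentially $|F_k|/2$ sets, and the inequality $|F(C)|\ge|C|$ of Lemma~\ref{l:vert-cubes} is what converts this vertex count back into the cube count $\overline{T}_{\nabla(F_k)}(X)$; this is the only non-elementary input and is the step I would regard as the crux. The factor $2^{n-1}$ on the upper side is merely the number of edges in a perfect matching of a cube's corners, and beyond keeping the locally-uniform and focal-point bookkeeping straight there is no real obstacle.
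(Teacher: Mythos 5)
Your proof is correct and follows essentially the same route as the paper's: the lower bound via the observation that three lattice points cannot have pairwise distance $\nabla(F_k)$ (so every covering set is a pair) combined with Lemma~\ref{l:vert-cubos}, and the upper bound via the explicit matching of each cube's $2^n$ corners into $2^{n-1}$ pairs along one coordinate direction. Your treatment is, if anything, slightly more careful than the paper's, since you verify explicitly that the pair-coverings lie in $K^1_{\nabla(F_k)}(F_k)$ and that $F_k$ has no focal points for large $k$.
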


\begin{proof}
Apply Lemma~\ref{l:vert-cubos} to $C= \QQ(\nabla(F_k),X)$, to obtain $|F_k|\geq \overline{T}_{\nabla(F_k)}(X)$. Notice that any subset $U$ of $F_k$ with three or more elements has diameter $>\nabla(F_k)$, because all points of $F_k$ belong to the lattice. As a consequence, every $U$ in a 2-covering $\U$ of $F_k$, with $\DE(\U)=\nabla(F_k)$, contains exactly two elements. Hence, for any such $\U$, $2\cdot|\U|\geq |F_k|$. It follows that $2\cdot T_{\nabla(F_k)}(F_k)\geq |F_k|\geq \overline{T}_{\nabla(F_k)}(X)$, which proves the first inequality of (\ref{e:ineq1}).

The second inequality of (\ref{e:ineq1}) follows from the fact that every cube has a 2-covering by $2^{n-1}$ sets. To see this assume, without loss of generality, that we have a cube with base contained in the hyperplane $x_n=0$, and top contained in $x_n={\nabla(F_k)}$. Then each element of the covering has the form $\{(x',0),(x',{\nabla(F_k)})\}$, where $(x',0)$ runs over the $2^{n-1}$ corners of the base of the cube. Hence, $F_k$ can be covered by at most $2^{n-1}\cdot \overline{T}_{\nabla(F_k)}(X)$ sets of two elements each, as was to be proved. Finally, note that (\ref{e:ineq2}) follows immediately from (\ref{e:ineq1}), as desired.
\end{proof}

\subsection{The Convergence Theorems}\label{ss:convThms}

In this section we complete the proofs of the convergence theorems.

\begin{thm}\label{t:ConvNet}
Let $X\subseteq \R^n$ be compact, with $\nabla(X)=0<\Delta(X)$, and $F_k\rightarrow X$ as in Proposition~\ref{t:FapproxR^n}. Then
\begin{equation} \label{e:ConvB}
\lim_{k\rightarrow \infty} \emph{dim} \m(F_k)=\emph{dim} _\boxx(X).
\end{equation}
\end{thm}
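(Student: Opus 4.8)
The plan is to connect the finite box dimension $\dim\m(F_k)$ to the classical box-counting dimension $\dim_\boxx(X)$ through the common quantity $T_{\nabla(F_k)}(F_k)$, using the cube-counting function $\overline{T}_{\nabla(F_k)}(X)=|\QQ(\nabla(F_k),X)|$ as a bridge. The key observation is that $\overline{T}_{\nabla(F_k)}(X)$ is essentially a box-counting quantity: it counts the cubes of side-length $\nabla(F_k)$ that meet $X$. Since $\nabla(F_k)\to 0$ (which follows from Lemma~\ref{l:approxNA} together with $\nabla(X)=0$), the classical box dimension is recoverable as a limit of $\ln\overline{T}_{\nabla(F_k)}(X)/(-\ln\nabla(F_k))$.

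\textbf{First I would} invoke Corollary~\ref{c:LIMdimH}, whose hypotheses are exactly those assumed here, to reduce the left-hand side:
\[
\lim_{k\rightarrow \infty} \emph{dim}\m(F_k)=\lim_{k\rightarrow \infty} \frac{\ln T_{\nabla(F_k)}(F_k)}{-\ln \nabla(F_k)}.
\]
\textbf{Next} I would replace $T_{\nabla(F_k)}(F_k)$ by $\overline{T}_{\nabla(F_k)}(X)$ inside the logarithm. By Proposition~\ref{p:ineq-T}, these two quantities differ only by multiplicative constants bounded by $2^{n-1}$ (independent of $k$). Taking logarithms turns these bounded multiplicative factors into additive constants, and since $-\ln\nabla(F_k)\to+\infty$ (because $\nabla(F_k)\to 0^+$), dividing by $-\ln\nabla(F_k)$ kills these additive terms in the limit. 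Hence
\[
\lim_{k\rightarrow \infty} \frac{\ln T_{\nabla(F_k)}(F_k)}{-\ln \nabla(F_k)}=\lim_{k\rightarrow \infty} \frac{\ln \overline{T}_{\nabla(F_k)}(X)}{-\ln \nabla(F_k)},
\]
provided either side exists.

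\textbf{The main obstacle} is the final identification of the right-hand limit with $\dim_\boxx(X)$. The subtlety is that the classical box-counting dimension is defined via a limit (or limsup/liminf) over \emph{all} mesh sizes $\delta\to 0$, whereas here $\overline{T}_{\nabla(F_k)}(X)$ is evaluated only along the specific sequence $\delta_k=\nabla(F_k)$. To bridge this gap I would use the standard fact that the box-counting limit along any sequence $\delta_k\to 0$ agrees with $\dim_\boxx(X)$ as long as the $\delta_k$ do not decrease too fast—specifically, the condition $\nabla(F_{k+1})\geq c\cdot\nabla(F_k)$ guaranteed by Proposition~\ref{t:FapproxR^n} ensures that consecutive mesh sizes are comparable, so no box-count is ``skipped'' badly. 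This is precisely the well-known criterion (see Falconer~\cite{Falconer}) that for a geometric-type sequence $\delta_k$ with $\delta_{k+1}/\delta_k$ bounded below, the sequential box limit equals the true box dimension whenever the latter exists. Combining the three displayed equalities then yields \eqref{e:ConvB}, completing the proof.

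\textbf{I expect} the cube-counting step to require a small amount of care in checking that $\overline{T}_{\nabla(F_k)}(X)=|\QQ(\nabla(F_k),X)|$ really coincides (up to the universal bounded factors already absorbed by the $\delta$-mesh comparison) with the standard box count $N_{\delta_k}(X)$ of $\delta_k$-boxes meeting $X$; this is immediate from the definition of $\QQ(\eps,X)$, since the cubes of $\QQ(\eps)$ tile $\R^n$ with side-length $\eps$ and diameter $\eps\sqrt n$, and the $\sqrt n$ factor again only contributes a bounded correction that vanishes in the logarithmic limit.
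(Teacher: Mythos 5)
Your proposal is correct and follows essentially the same route as the paper: reduce $\lim_k \mbox{dim}\m(F_k)$ to $\lim_k \ln T_{\nabla(F_k)}(F_k)/(-\ln\nabla(F_k))$, transfer between $T_{\nabla(F_k)}(F_k)$ and $\overline{T}_{\nabla(F_k)}(X)$ via the bounded multiplicative constants of Proposition~\ref{p:ineq-T}, and identify the resulting limit with $\mbox{dim}_\boxx(X)$ using Falconer's criterion for geometric mesh sequences $\nabla(F_{k+1})\geq c\cdot\nabla(F_k)$. The only cosmetic difference is that the first reduction is more directly Proposition~\ref{p:limDIMfB} (stated for $\mbox{dim}\m$, no local uniformity needed) rather than Corollary~\ref{c:LIMdimH} (stated for $\mbox{dim}\hh$), though the two are equivalent here since the $F_k$ are locally uniform.
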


\begin{proof}
By Proposition~\ref{p:limDIMfB}:
\[\lim_{k\rightarrow \infty}\mbox{dim} \m(F_k)=\lim_{k\rightarrow \infty} \frac{\ln T_{\nabla(F_k)}(F_k)}{-\ln \nabla(F_k)}\cdot\]
On the other hand, Falconer~\cite[p.~44-45]{Falconer} shows that when $\nabla(F_{k+1})\geq c\cdot\nabla(F_k)$, the Box-counting dimension is given by:
\[ \mbox{dim} _\boxx(X)= \lim_{k\rightarrow \infty} \frac{\ln \overline{T}_{\nabla(F_k)}(X)}{-\ln \nabla(F_k)}\cdot\]
Hence, to prove~(\ref{e:ConvB}), it suffices to show:
\begin{equation}\label{e:ConvLim}\nonumber
\lim_{k\rightarrow \infty} \frac{\ln T_{\nabla(F_k)}(F_k)}{-\ln \nabla(F_k)} = \lim_{k\rightarrow \infty} \frac{\ln \overline{T}_{\nabla(F_k)}(X)}{-\ln \nabla(F_k)}\cdot
\end{equation}
Now, suppose that the left-hand side of (\ref{e:ConvLim}) exists and equals $s_0$. Then the inequalities

\[
\frac{\ln((1/2^{n-1})\cdot T_{\nabla(F_k)}(F_k))}{-\ln \nabla(F_k)}\leq \frac{\ln \overline{T}_{{\nabla(F_k)}}(X)}{-\ln \nabla(F_k)}\leq \frac{\ln(2\cdot T_{\nabla(F_k)}(F_k))}{-\ln \nabla(F_k)},
\]
obtained from (\ref{e:ineq2}), show that also $\lim_{k\rightarrow \infty} \frac{\ln \overline{T}_{{\nabla(F_k)}}(X)}{-\ln \nabla(F_k)}$ exists and equals $s_0$, as desired. If we assume instead that it is the right-hand side of (\ref{e:ConvLim}) that exists and equals, say, $s_0$, we proceed in the same way, starting this time from (\ref{e:ineq1}). The proof is complete.
\end{proof}

\begin{thm}\label{t:ConvNetIFS}
Let $X\subseteq \R^n$ be compact, with $\nabla(X)=0<\Delta(X)$, and $F_k\rightarrow X$ as in Proposition~\ref{t:FapproxR^n}. Suppose, moreover, that $X$ is the attractor of an iterated function system (IFS). Then
\begin{equation}\label{e:ConvH}
\lim_{k\rightarrow \infty} \emph{dim} \hh(F_k)=\emph{dim}_\hau (X).
\end{equation}
\end{thm}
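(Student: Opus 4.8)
The plan is to exploit the fact that the approximating spaces $F_k$ produced by Proposition~\ref{t:FapproxR^n} are \emph{locally uniform}. For such spaces the finite Hausdorff and finite box dimensions coincide exactly, so the left-hand side of~(\ref{e:ConvH}) is forced to equal the classical box dimension $\mbox{dim} _\boxx(X)$ already computed in Theorem~\ref{t:ConvNet}. The extra IFS hypothesis then enters only to convert a statement about $\mbox{dim} _\boxx(X)$ into one about $\mbox{dim}_\hau(X)$, via the classical coincidence of Hausdorff and box dimension on IFS attractors. In short, essentially all of the analytic work is already contained in Theorem~\ref{t:ConvNet} and in the locally-uniform machinery of Section~\ref{s:locUnif}.

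First I would record that each $F_k$ is locally uniform and, for all large $k$, has no focal points: its lattice spacing $\nabla(F_k)$ tends to $0$ while $\DE(F_k)\to\DE(X)>0$, so eventually $\nabla(F_k)<\DE(F_k)$, whence no focal points by Theorem~\ref{t:FOCAL}. Proposition~\ref{p:dimLocUnif} then gives the \emph{exact} equality $\mbox{dim} \hh(F_k)=\mbox{dim} \m(F_k)$ for every such $k$. This is the crucial leverage point: the two finite dimensions are not merely asymptotically close, they are literally equal term by term along this particular sequence.

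Next I would apply Theorem~\ref{t:ConvNet} verbatim — its hypotheses ($X$ compact, $\nabla(X)=0<\DE(X)$, and $F_k\to X$ as in Proposition~\ref{t:FapproxR^n}) hold unchanged here — to conclude $\lim_{k\to\infty}\mbox{dim} \m(F_k)=\mbox{dim} _\boxx(X)$. Since this limit exists and the per-term identity holds, I obtain
\[\lim_{k\to\infty}\mbox{dim} \hh(F_k)=\lim_{k\to\infty}\mbox{dim} \m(F_k)=\mbox{dim} _\boxx(X);\]
equivalently one may quote Corollary~\ref{c:LIMdimH} directly. Up to here nothing about the IFS has been used, and the identity $\lim\mbox{dim} \hh(F_k)=\mbox{dim} _\boxx(X)$ is valid for every compact $X$ with $\nabla(X)=0<\DE(X)$ whose box dimension exists.

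Finally, the IFS hypothesis is invoked only through the classical theorem that the attractor $X$ of an iterated function system satisfies $\mbox{dim}_\hau(X)=\mbox{dim} _\boxx(X)$ (Falconer~\cite{Falconer}; for self-similar attractors obeying the open set condition both equal the similarity dimension $s$ determined by $\sum r_i^{s}=1$). Substituting this into the displayed identity yields $\lim_{k\to\infty}\mbox{dim} \hh(F_k)=\mbox{dim}_\hau(X)$, which is~(\ref{e:ConvH}). I expect the main obstacle to be precisely this last step: for a general compact set the two classical dimensions differ and the box dimension may even fail to exist, so the genuine content — and the reason the IFS assumption cannot be dropped — is the classical regularity result guaranteeing $\mbox{dim}_\hau(X)=\mbox{dim} _\boxx(X)$ under whatever separation condition (e.g.\ the open set condition) is implicit in the paper's notion of IFS. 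Everything else is a mechanical assembly of results already proved.
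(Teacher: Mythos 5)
Your proposal is correct and follows essentially the same route as the paper: locally uniform $F_k$ give $\mbox{dim}\hh(F_k)=\mbox{dim}\m(F_k)$ by Proposition~\ref{p:dimLocUnif}, Theorem~\ref{t:ConvNet} gives the box-dimension limit, and the classical identity $\mbox{dim}_\hau(X)=\mbox{dim}_\boxx(X)$ for IFS attractors (Falconer) closes the argument. Your explicit verification that the $F_k$ eventually have no focal points (via $\nabla(F_k)\to 0$, $\DE(F_k)\to\DE(X)>0$, and Theorem~\ref{t:FOCAL}) is a detail the paper leaves implicit, but it does not change the structure of the proof.
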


\begin{proof}
When $X$ is the attractor on an IFS, we have $\mbox{dim} B(X)=\mbox{dim}_\hau (X)$ (see~\cite[p.~132]{Falconer}). By Proposition~\ref{p:dimLocUnif}, $\mbox{dim} \hh(F_k)=\mbox{dim} \m(F_k)$. Hence, (\ref{e:ConvH}) follows from Theorem \ref{t:ConvNet}.
\end{proof}


\end{document}